\theoremstyle{definition}
\newtheorem{definition}{Definition}
\newtheorem{proposition}{Proposition}
\newtheorem{lemma}{Lemma}
\newtheorem{theorem}{Theorem}
\newcommand{\R}{\mathbb{R}}
\newcommand{\PP}{\mathbb{P}}
\newcommand{\Q}{\mathbb{Q}}
\newcommand{\T}{\mathbb{T}}
\newcommand{\Z}{\mathbb{Z}}
\newcommand{\N}{\mathbb{N}}
\let\@fnsymbol\@alph
\title{Robust minimal matching rules for quasicrystals}
\author{Pavel Kalugin\footnote{Laboratoire de Physique des Solides, CNRS, Universit\'e Paris-Sud, Universit\'e Paris-Saclay, F-91405 Orsay, France. E-mail: \texttt{kalugin@lps.u-psud.fr}}
	\and Andr\'e Katz\footnote{Directeur de recherche honoraire, CNRS, France}}
\date{}
\begin{document}

\maketitle

\begin{abstract}
We propose a unified framework for dealing with matching rules of quasiperiodic patterns, relevant for both tiling models and real world quasicrystals. The approach is intended for extraction and validation of a minimal set of matching rules, directly from the phased diffraction data. The construction yields precise values for the spatial density of distinct atomic positions and tolerates the presence of defects in a robust way.
\end{abstract}
\tableofcontents
\section{Introduction}\label{sec:intro}
What is meant by resolving the structure of a quasicrystal? The question is far from rhetorical, since contrarily to the crystals, an aperiodic structure cannot be described by coordinates of a finite set of atoms in a unit cell. This does not mean that an infinite quasiperiodic arrangement of points is impossible to describe finitely. Consider for example the ``cut-and-project'' method\footnote{This construction repeats essentially that of Yves Meyer model sets \cite{meyer1972algebraic}.}, in the situation when the shape of atomic surfaces depends on a finite set of parameters. The number of these parameters, although finite, is potentially unbounded, and this fact raises a perplexing question: how many of them should be used to fit the experimental diffraction data? To solve this conundrum, we have to recall that real quasicrystals are {\em self-assembled} structures and that the assembly is governed by short-range forces. This consideration leads to the notion of {\em matching rules} (that is, local constraints enforcing global aperiodic order). The existence of such rules, for instance, imposes restrictions on possible shapes of atomic surfaces in the cut-and-project models. Namely, in the case of polyhedral surfaces, only faces of rational directions with small denominators are allowed \cite{katz2013number}. The importance of this constraint suggests that matching rules should play a central role in the study of the quasicrystalline structures. Ideally, the answer to the question raised at the beginning should be the following: {\em resolving the structure of a quasicrystal means finding matching rules that enforce a quasiperiodic arrangement of atoms yielding diffraction amplitudes compatible with the experimental data}. The purpose of this paper is to suggest a framework for description of such rules. 
\par
The problem of matching rules is traditionally formulated in terms of tilings, largely for historical reasons. In fact, well before becoming relevant to the solid state physics with the discovery of quasicrystals, the question of matching rules had arisen in computability theory \cite{wang1961proving,berger1966undecidability}. As a result, the language of tilings, subshifts and symbolic dynamics dominated the field ever since. In particular, in the tiling-based structure models of quasicrystals, the arrangement of atoms in the real space is thought of as just a decoration of imaginary rigid tiles, while the aperiodic order is enforced by matching of labels or indentations of some sort on the tiles sides. The resulting aperiodic tiling of the space thus provides an invisible scaffold for the disposition of real atoms. Needless to say that no such hidden structure exists in the real world quasicrystals. Instead,  the role of ``signals'' in the propagation of order from the small to the large scale is fulfilled by the positions and chemical nature of atoms. Thus, even if it is hardly possible to avoid speaking of tilings, the shape of tiles in our model should be implied by the atomic positions. This naturally leads to the choice of simplex-shaped tiles, spanned by the atomic positions at their vertices. We shall also avoid mapping the tiling model on a lattice (although this is a common practice in the mathematical community). Instead, the metric parameters of the tiles will be integrated in the model. 
\par
Matching rules on tiling models can enforce various classes of aperiodic order. In particular, any substitution tiling can be stabilized by matching rules after adding some (not necessarily locally derivable!) decorations \cite{goodman1998matching}. However, the only type of aperiodic order observed so far in solid state matter is the quasiperiodic one. The specificity of quasiperiodic tilings consists in the possibility to lift them in a higher-dimensional space. The tiling then comes out as a projection of a corrugated surface, composed of elements of a higher-dimensional periodic structure \cite{de1981algebraic}. The long-range aperiodic order is characterized by the mean slope of the corrugated surface. This leaves room for integration of an imperfect order in the model, for instance, by allowing this surface to wiggle slightly around its mean direction. This feature is important for modeling the real materials, where defects are always present. However, with few exceptions (e.g., the square-triangle tiling), the lift construction has been so far developed only for the tilings composed by parallelograms or parallelepipeds. We will need to extend this approach to a generic simplex tiling.
\par
In this paper we address the problem of propagation of an aperiodic order with the tools of algebraic topology. A similar geometrical interpretation of the matching constraints has been used in \cite{katz1995matching} in the context of the standard cut-and-project model extended to allow for undulating cuts. In this model, the local matching conditions of the tiling are described by a periodic arrangement of obstacles for the cut, constructed in such a way that the cuts of the same homotopy class produce identical tilings. The obstacles are said to realize the matching rules if every homotopy class of the cut contains a representative of a flat cut. However, in the general case the computation of the homotopy classes can be quite involved. In this paper we show that similar results can be obtained by using the much simpler {\em homology} methods. More importantly, the homology tools turn out to better suit the analysis of the experimental data with their inherent finite precision.
\par
The rest of the paper is organized as follows. In Section \ref{sec:fbs} we introduce the geometric encoding of matching rules in terms of FBS-complexes. In Section \ref{sec:lifting} we construct the lifting of arbitrary simplicial tilings of finite local complexity and define weak and minimal matching rules. Section \ref{sec:main} contains the main results of the paper. In Section \ref{sec:real_qc} we propose a strategy for exploration of matching rules in real quasicrystals and give a sketch of an algorithm following this strategy. The questions remaining open are discussed in Section \ref{sec:conclusion}. 
\section{FBS-complexes}\label{sec:fbs}
Piecewise flat topological spaces emerge frequently in the study of aperiodic order. For instance, if the tiling is defined by a set of matching rules, such objects appear naturally as the so-called prototile spaces \cite{savinien2009spectral}. The latter are obtained as a result of gluing together all prototiles along the matching faces. The prototile space can thus be considered as a way to encode geometrically the combinatorial information about matching rules. Similar constructions are possible when the matching rules are formulated in terms of overlapping clusters.
\par
Alternatively, piecewise flat topological spaces appear as approximations to the tiling space of a given aperiodic pattern, even if the matching rules defining it do not exist or are not known. For instance, for patterns of finite local complexity (FLC) \cite{baake2013aperiodic}, one can define the following equivalence relation between points of the underlying space: the points $x$ and $y$ are equivalent if the parts of the patterns contained within {\em open} discs of some fixed radius $r$ centered at $x$ and $y$ coincide modulo translation by $x-y$. The set of equivalence classes with the quotient topology is a compact Hausdorff space homeomorphic to a finite CW-complex \cite{walton2014}. A similar object in the discrete setting is known as G\"ahler's collared tiles construction (see e.g., \cite[p. 84]{sadun}).
\par
The spaces obtained by either of the above constructions are invariant (up to a homeomorphism) with respect to deformations or redecorations of the tiling within its mutual local derivability (MLD) \cite{baake2013aperiodic} class. This is an indication that the piecewise flat spaces could be considered independently on the underlying aperiodic patterns. The characterization of such spaces as just finite CW-complexes neglects the metric structure inherited from the underlying space. This issue is specifically addressed by the construction of branched oriented flat manifolds, proposed in \cite{bellissard2006spaces}. Yet these objects, described by local models, are unnecessarily intricate for our purposes. Instead, we introduce a model based on simplicial complexes \cite{spanier2012algebraic}, since among various cellular structures used in the combinatorial topology, they are the easiest to equip with the metric information. We have, however, to relax the requirement for each simplex to be uniquely defined by the set of its vertices:
\begin{definition} \label{FBS}
	A $d\mbox{-dimensional}$ flat-branched semi-simplicial complex (FBS-complex) $B$ is a finite connected topological semi-simplicial\footnote{We use here the original terminology, as it was first introduced in \cite{eilenberg1950semi}. This construction is also known under the name of $\Delta\mbox{-set}$ \cite{rourke1971delta} or $\Delta\mbox{-complex}$ \cite[Chap.~2]{hatcher2002algebraic}.} complex of dimension $d$ equipped with a homomorphism $\rho: C_1(B) \to E$ of the group of 1-chains of $B$ to a $d\mbox{-dimensional}$ real Euclidean vector space $E$, satisfying the following conditions:
	\begin{itemize}
		\item The homomorphism $\rho$ vanishes on boundaries: 
		\begin{equation}
		\label{rho_d}
	    \rho \circ \partial=0.
		\end{equation}
		\item For any $k\mbox{-simplex}$ $s \in B$ and the set $\{e_1, \dots, e_k\}$ of edges of $s$ originating in the same vertex, the vectors $\rho(e_1), \dots, \rho(e_k)$ are linearly independent.  
	\end{itemize} 
\end{definition}
While in general semi-simplicial complexes a given simplex can be glued to itself by faces of any dimension, the second condition of the Definition \ref{FBS} allows only for gluing by vertices:
\begin{proposition}\label{notglued}
	For any $k\mbox{-simplex}$ of an FBS-complex with $k>0$, all edges are distinct (that is, the corresponding iterated face maps have $k(k+1)/2$ distinct images).
\end{proposition}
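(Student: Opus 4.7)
The plan is to show that the cocycle condition $\rho\circ\partial=0$ together with the linear independence hypothesis realises each $k$-simplex $s$ of $B$ as a non-degenerate geometric simplex in $E$; the asserted distinctness of the edges will then follow directly from affine independence of its vertices.

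First, I would fix $s$, label its positional vertices $0, 1, \ldots, k$, and denote by $e_{p,q}$ (with $p<q$) the edge of $s$ corresponding to the $(p,q)$ iterated face map. Setting $v_0 := 0 \in E$ and $v_i := \rho(e_{0,i})$ for $i=1,\dots,k$, the cocycle relation $\rho\circ\partial=0$ applied to the $2$-face $[0,i,j]$ of $s$ (for $0<i<j$) yields
\begin{equation*}
\rho(e_{i,j}) \,=\, v_j - v_i, \qquad 0 \le i < j \le k,
\end{equation*}
so $\rho$ realises every edge of $s$ as a displacement between points of a single affine frame $v_0,\dots,v_k \in E$. The linear independence condition at position $0$ asserts exactly that $v_1,\ldots,v_k$ are linearly independent, and hence that $v_0, v_1, \ldots, v_k$ are affinely independent in $E$.

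Next, assume that two iterated face maps $e_{p,q}$ and $e_{p',q'}$ of $s$, with $(p,q) \neq (p',q')$, produce the same $1$-simplex in $B_1$. Then in particular $\rho(e_{p,q}) = \rho(e_{p',q'})$, which by the formula above reads
\begin{equation*}
v_q - v_p \,=\, v_{q'} - v_{p'}, \qquad\text{equivalently}\qquad v_p + v_{q'} - v_{p'} - v_q \,=\, 0.
\end{equation*}
This is a formal affine combination of the $v_\ell$ (coefficients $\pm 1$ summing to zero). A short case analysis on how the indices $\{p,q,p',q'\}$ may coincide --- noting that $(p,q)\neq(p',q')$ rules out $\{p,q\}=\{p',q'\}$ --- shows that after cancellation the relation always remains non-trivial: either it collapses to $v_i = v_j$ with $i\neq j$ (when $p=p'$ or $q=q'$), to a three-term dependence such as $2v_p = v_{p'} + v_q$ (in the cases $p=q'$ or $q=p'$), or it survives as an honest four-term dependence. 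Each possibility contradicts the affine independence of $v_0,\dots,v_k$ established above.

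The only step that demands real care is the case analysis at the end. The cocycle formula and the extraction of affine independence are essentially tautological reformulations of Definition~\ref{FBS}; the substantive work lies in verifying that the partial-coincidence cases $p=q'$ and $q=p'$ still produce a genuinely non-trivial three-term affine relation rather than a vacuous identity.
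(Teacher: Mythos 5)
Your proof is correct and follows essentially the same route as the paper's: both reduce an alleged identification of two edges to a linear dependence among the $\rho$-images of edges emanating from a common vertex, contradicting the second condition of Definition~\ref{FBS}. Your concluding case analysis (shared endpoint, shared ``middle'' vertex, four distinct vertices) is exactly the paper's dichotomy between the two edges lying in a common $2$-dimensional or $3$-dimensional face, repackaged via affine independence of the frame $v_0,\dots,v_k$.
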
  
\begin{proof}
	Any two edges of a $k\mbox{-simplex}$ are contained in one of its 2-dimensional or 3-dimensional faces. If two edges are glued together, then in the first case, there exists a $2\mbox{-simplex}$ in $B$ with edges $e_1$ and $e_2$ originating in the same vertex and having $\rho(e_1)=\rho(e_2)$. In the second case $B$ contains a $3\mbox{-simplex}$ with edges $e_1$, $e_2$ and $e_3$ originating in the same vertex, such that $\rho(e_1)=\rho(e_2)-\rho(e_3)$. In either case, the condition of linear independence is not satisfied.
\end{proof}
\par
We shall make a distinction between purely combinatorial objects and their geometric realization. For the latter we shall use the traditional ``vertical bars'' notation, for example, $|B|$ will denote the geometric realization of the abstract complex $B$. For a $k\mbox{-simplex}$ $s\in B$, let $\{e_1, \dots, e_k\}$ be the edges of $s$ originating in the same vertex. For an arbitrarily chosen point $x \in E$ one can construct an affine $k\mbox{-simplex}$ $\sigma \subset E$ with vertices 
$$\left\{x, x+\rho(e_1), \dots, x+\rho(e_k)\right\}.$$ 
Identification of barycentric coordinates in both simplices defines a homeomorphism of $|s|$ onto the interior of $\sigma$: 
\begin{equation}
\label{alpha_s}
\alpha_s: |s| \to \sigma^\circ.
\end{equation}
Since the choice of $x$ is arbitrary, $\alpha_s$ is defined up to a translation in $E$; the property (\ref{rho_d}) guarantees that this definition does not depend on the ordering of vertices in $s$. If $s$ is a $d\mbox{-dimensional}$ simplex, we shall refer to $\overline{\alpha_s(|s|)}$ as a {\em prototile}. Note that the Euclidean structure of $E$ can be pulled back by $\alpha_s$ to $|s|$; this justifies the use of the term ``flat'' applied to $B$.
\par
Recall now that the FBS-complexes were introduced as a way to encode the structure of aperiodic tilings and the corresponding matching rules. Within this framework, tilings are described by a certain class of maps from $E$ to $|B|$, respecting the local Euclidean structure of both spaces:
\begin{definition}
	An {\em isometric winding} of an FBS-complex $B$ is a continuous map $f: E  \to |B|$ such that
	\begin{itemize}
		\item The full preimage of any open $d\mbox{-dimensional}$ simplex $|s| \subset |B|$ is a disjoint union of interiors of affine simplices $\sigma_{s,i}$:
		\begin{equation}
		\label{preimage}
		f^{-1}(|s|)=\bigsqcup_i \sigma_{s,i}^\circ, \qquad \sigma_{s,i}^\circ \cap \sigma_{s,j}^\circ = \emptyset \text{ if } i \neq j 
		\end{equation}
		which are translated copies of the corresponding prototile:
		\begin{equation}
		\label{translated_prototile}
		\alpha_s(f(x))=x+\tau_{s,i}\qquad \forall x \in \sigma_{s,i}^\circ
		\end{equation}
		for some $\tau_{s,i} \in E$.
		\item The simplices $\sigma_{s,i}$ cover the entire space $E$:
		\begin{equation}
		\label{domain}
		E = \bigcup_{s,i} \sigma_{s,i}
		\end{equation}
	\end{itemize}
\end{definition} 
\begin{proposition}\label{from_f_to_T}
	If $f: E \to |B|$ is an isometric winding, then the covering of $E$ by affine simplices of the form (\ref{preimage}) is a tiling $\mathcal{T}$ of $E$ by translated copies of prototiles. If two tiles $\sigma_i$ and $\sigma_j$ in $\mathcal{T}$ share a common face, then the same is true for the simplices $f(\sigma_i^\circ)$ and $f(\sigma_j^\circ)$ in $|B|$. The tiling $\mathcal{T}$ has finite local complexity (FLC) with respect to translations and the vertices of $\mathcal{T}$ form a Delone\footnote{Delone is a common transliteration for the name of B.N.Delaunay (as in ``Delaunay triangulation''.)} set in $E$ (see \cite{baake2013aperiodic} for the definitions of FLC and Delone properties). 
\end{proposition}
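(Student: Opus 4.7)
The plan is to verify the four parts of the statement in turn, using the two conditions defining an isometric winding together with the finiteness and combinatorial structure of $B$.

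First, that $\{\sigma_{s,i}\}$ is a tiling of $E$ by translated copies of prototiles: the covering property is exactly (\ref{domain}); disjointness of interiors within each $s$ is given by (\ref{preimage}); disjointness across distinct top-dimensional $s\neq s'$ follows because the open simplices $|s|,|s'|\subset |B|$ are disjoint, so their $f$-preimages are disjoint too. Equation (\ref{translated_prototile}) together with the definition of $\alpha_s$ shows that each closed affine simplex $\sigma_{s,i}$ is a translate of the prototile $\overline{\alpha_s(|s|)}$.

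Next, for the face-sharing claim I would rely on the continuity of $f$. On $\sigma_{s,i}^\circ$ the map $f$ agrees with $\alpha_s^{-1}(\cdot - \tau_{s,i})$, which extends continuously to $\overline{\sigma_{s,i}}$ and sends each face of $\sigma_{s,i}$ onto the corresponding face of the closed prototile in $|B|$. If $\sigma_i$ and $\sigma_j$ meet along a common face $F$, continuity of $f$ forces the two one-sided extensions to coincide on $F$, so $f(F)$ is simultaneously a face of $\overline{f(\sigma_i^\circ)}$ and of $\overline{f(\sigma_j^\circ)}$ in $|B|$; Proposition \ref{notglued} prevents this face from being counted twice inside a single closed simplex of $B$.

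For the Delone and FLC properties I would exploit the finiteness of $B$. The linear independence condition in Definition \ref{FBS} implies $\rho(e)\neq 0$ for every edge $e$, so the finite set $\{\rho(e)\}$ of edge vectors has a strictly positive minimum length $r_{\min}$ while prototiles have uniformly bounded diameter $R_{\max}$; the vertex set of $\mathcal{T}$ is therefore uniformly discrete with spacing $\geq r_{\min}$ and relatively dense within range $R_{\max}$, i.e.\ Delone. For FLC, any $r$-patch around $x\in E$ is, up to the translation aligning $x$ with its preimage in the relevant prototile, determined by an $r$-neighborhood of $f(x)$ in $|B|$; since $|B|$ is a finite complex there are only finitely many translation-equivalence classes of such neighborhoods.

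The most delicate step is the face-sharing claim, because in a semi-simplicial complex a single simplex could in principle be glued to itself along lower-dimensional faces, which would prevent the na\"ive identification of $f(F)$ with a unique face of $|B|$. Proposition \ref{notglued} is precisely the ingredient that rules out such self-gluings along edges, and it is what lets the local continuity statement about $f$ upgrade to the global statement that a shared face in $E$ maps to a single face in $|B|$. The remaining three items are essentially direct unpackings of the definitions once the finiteness of $B$ is invoked.
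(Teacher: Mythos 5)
Your treatment of the first two claims is sound and essentially matches the paper's: the tiling property is read off from (\ref{preimage}), (\ref{translated_prototile}) and (\ref{domain}) exactly as the paper does, and your face-sharing argument is the same continuity argument that the paper compresses into the single inclusion $f(\sigma_i\cap\sigma_j)\subset f(\sigma_i)\cap f(\sigma_j)$. The genuine problem is in your FLC argument. You assert that the $r$-patch around $x$ is determined, up to translation, by an $r$-neighborhood of $f(x)$ in $|B|$. This is false: $f$ is a winding, not an embedding, and $|B|$ is a branched space. The $r$-neighborhood of $f(x)$ in $|B|$ contains \emph{every} simplex of $B$ within distance $r$ of $f(x)$, whereas the patch around $x$ realizes only a consistent selection of these (one disk's worth of tiles around each vertex, one tile on each side of each face), and two points of $E$ with the same image in $|B|$ can be surrounded by entirely different selections. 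So finiteness of the set of neighborhoods in $|B|$ does not bound the number of patches in $E$. The paper supplies the missing ingredient: the number of tiles inside a ball $\mathcal{B}_r$ is bounded by a constant depending only on $r$ (prototile volumes are bounded below), and the position of every vertex of the patch relative to the center is an integer linear combination of the finite set $\{\rho(e)\}$ with bounded coefficients, so only finitely many patches occur up to translation. Some version of this discreteness-of-positions argument is needed; the image neighborhood in $|B|$ alone cannot do the job.

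A smaller quibble concerns uniform discreteness: your constant $r_{\min}$ (the minimum edge length) is only justified for two vertices belonging to a common tile. Two vertices not sharing a tile can be as close as the minimum altitude of a prototile, which for thin simplices is much smaller than $r_{\min}$, so the claimed spacing is wrong even though uniform discreteness itself survives. The paper's choice --- half the smallest vertex-to-opposite-face distance --- is the correct constant. Your relative density argument via the maximal tile diameter is fine and agrees with the paper's.
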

\begin{proof}
	Since the entire space $E$ is covered by tiles $\sigma_{s,i}$ and their interiors do not intersect because of (\ref{preimage}), $\mathcal{T}$ is a tiling of $E$. The statement about the common face follows from the inclusion
	$$
	f(\sigma_i \cap \sigma_j) \subset f(\sigma_i) \cap f(\sigma_j).
	$$
	To prove the FLC property, consider a ball $\mathcal{B}_r$ of radius $r$ centered at a vertex of $\mathcal{T}$. Since there is a finite number of prototiles and all of them have non-empty interior, the number of tiles contained within $\mathcal{B}_r$ is bounded by some constant, depending only on $r$. Then the coordinates of the vertices of $\mathcal{T}$ within the ball $\mathcal{B}_r$ with respect to its center are linear combinations of a finite set of vectors $\{\rho(e_i), e_i \in \mbox{edges}(B)\}$ with bounded integer coefficients. Therefore, the set of vertices of $\mathcal{T}$ has finite local complexity with respect to translations. The same is true for the tiling itself since the number of prototiles is finite. 
    \par
    To prove the Delone property, let us choose $r_1>0$ to be smaller than half of the smallest distance from a vertex of a prototile to the opposite face. Then any two balls of radius $r_1$ centered at different vertices of $\mathcal{T}$ have zero intersection. Similarly, if $r_2$ is larger than the longest edge of every tile, any point in $E$ lies at the distance smaller than $r_2$ from a vertex of $\mathcal{T}$. Therefore, the set of vertices of $\mathcal{T}$ is uniformly discrete and relatively dense.
\end{proof}
\begin{proposition}\label{from_T_to_f}
	Let $B$ be a $d\mbox{-dimensional}$ FBS-complex and $\mathcal{T}$ be a tiling of the $d\mbox{-dimensional}$ Euclidean space $E$ by translated copies of prototiles $\{\overline{\alpha_s(|s|)}, s \in B\}$, such that the tiles $\sigma_i$ and $\sigma_j$ in $\mathcal{T}$ either have no intersection of dimension $d-1$, or share a common face, in which case so do the corresponding simplices in $B$. Then there exists an isometric winding $f: E \to |B|$ such that the tiling constructed in Proposition \ref{from_f_to_T} coincides with $\mathcal{T}$.
\end{proposition}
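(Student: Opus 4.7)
The plan is to construct $f$ tile-by-tile via the natural barycentric identification, and then to verify that the local pieces glue into a continuous map. For each tile $\sigma_i \in \mathcal{T}$ the hypothesis provides a designated $d\mbox{-simplex}$ $s_i \in B$ and a translation $\tau_i \in E$ with $\sigma_i = \tau_i + \overline{\alpha_{s_i}(|s_i|)}$; the vertices of $\sigma_i$ are then in canonical bijection with the vertices of $s_i$. Let $\chi_{s_i}: \overline{\alpha_{s_i}(|s_i|)} \to |B|$ denote the continuous extension of $\alpha_{s_i}^{-1}$ to the closed prototile, i.e.\ the characteristic map of $s_i$ in the CW structure of $|B|$, which restricts on each boundary face to the characteristic map of the corresponding lower-dimensional face of $s_i$ via the semi-simplicial face maps. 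Define
\[
f(x) = \chi_{s_i}(x - \tau_i) \qquad \text{for } x \in \sigma_i.
\]

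The crux is well-definedness on overlaps. For a point in the open interior of a single tile there is nothing to check; otherwise the point lies in the intersection of several tiles. Consider first a common $(d-1)\mbox{-face}$ $F = \sigma_i \cap \sigma_j$. By hypothesis, the simplices $s_i, s_j \in B$ share a common $(d-1)\mbox{-face}$ $t$, and both $f|_{\sigma_i}$ and $f|_{\sigma_j}$ restrict on $F$ to continuous maps factoring through the characteristic map of $t$. By Proposition \ref{notglued}, the vertices of $t$ are pairwise distinct in $|B|$, so the check reduces to comparing the two restrictions on the vertices of $F$. Both identifications send a vertex $v \in F$ to the unique vertex of $t$ whose $\rho$-difference from a chosen base vertex matches the corresponding edge vector of $F$ in $E$; by (\ref{rho_d}) and the definition of $\alpha_s$, this correspondence is vertex-ordering-independent and thus coincides on both sides. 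Consistency on lower-dimensional overlaps is then inherited: any common sub-face lies in a chain of shared $(d-1)\mbox{-faces}$ through the star of the overlap, along which the agreement just proved propagates.

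With well-definedness secured, continuity of $f$ follows from the pasting lemma applied to the closed, locally finite cover of $E$ by the tiles of $\mathcal{T}$. The defining conditions of isometric winding are then immediate: grouping the tiles by their associated simplex $s$ yields the disjoint-union decomposition (\ref{preimage}) of $f^{-1}(|s|)$; equation (\ref{translated_prototile}) holds by construction with $\tau_{s,i} = -\tau_i$; and (\ref{domain}) is the covering property of $\mathcal{T}$. Applying Proposition \ref{from_f_to_T} to this $f$ recovers $\mathcal{T}$ tautologically, since the preimages of top-dimensional open simplices in $|B|$ are by construction exactly the interiors of the $\sigma_i$.

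The step I expect to be most delicate is the consistency check on shared faces: one must confirm that the vertex identification dictated by the affine structure in $E$ is precisely the one produced by the semi-simplicial face maps in $B$. This is exactly what (\ref{rho_d}) and Proposition \ref{notglued} are designed to enforce --- (\ref{rho_d}) makes $\alpha_s$ insensitive to vertex relabelings, while Proposition \ref{notglued} prevents distinct affine vertices of a single simplex from being glued together --- but writing the argument down cleanly still requires careful bookkeeping of which sub-face of $s_i$, which sub-face of $s_j$, and which sub-face of $F \subset E$ are being matched up.
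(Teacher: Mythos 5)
Your construction is the same as the paper's: define $f$ tile-by-tile by barycentric identification with the corresponding simplex of $B$, check consistency across shared $(d-1)$-faces using the hypothesis, propagate to lower-dimensional overlaps by a chain argument through the star, and read off the isometric-winding axioms. You are in fact considerably more explicit than the paper, which compresses the entire consistency check into one sentence plus ``by induction.''

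One point needs repair. You write that ``by Proposition \ref{notglued}, the vertices of $t$ are pairwise distinct in $|B|$.'' That proposition only asserts that the \emph{edges} of a simplex are distinct; the surrounding discussion in the paper explicitly allows gluing by vertices, and indeed in the paper's own examples (the 60-degree rhombi complex, the triangular Penrose complex) a single vertex of $|B|$ carries all vertex slots of every simplex. So distinctness of vertices as points of $|B|$ is simply false in the cases of interest, and the reduction cannot be phrased as a comparison of point images in $|B|$. The argument survives because the right objects to compare are the two barycentric identifications of $F$ with the \emph{abstract} simplex $t$, i.e.\ two bijections $\phi_1,\phi_2$ from the vertex slots of $t$ onto the vertices of $F$. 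Each satisfies $\phi_m(a)-\phi_m(a')=\rho([a'a])$ for every edge $[a'a]$ of $t$ (this is where (\ref{rho_d}) and the definition of $\alpha_s$ enter), so $\phi_1-\phi_2$ is a constant vector; summing over all $d$ slots, both sides enumerate the same vertex set of $F$, forcing the constant to vanish and hence $\phi_1=\phi_2$. (Your phrasing ``the unique vertex of $t$ whose $\rho$-difference from a chosen base vertex matches'' tacitly assumes both sides pick the same base slot, which is exactly what has to be proved; the averaging step closes that loop, and the uniqueness it needs is supplied by the linear-independence condition of Definition \ref{FBS}, via the edge-level distinctness of Proposition \ref{notglued}.) With that substitution the proof is complete and matches the paper's.
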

\begin{proof}
	Any point $x \in E$ is contained in at least one tile $\sigma$ of $\mathcal{T}$. Let $\overline{\alpha_s(|s|)}$ be the corresponding prototile. We set for $f(x)$ the point of $|s|$ having the same barycentric coordinates as $x$ within $\sigma$. If $x$ belongs to two different tiles, these tiles share a common face and the same is true for the corresponding simplices of $B$. Thus the definition of $f(x)$ does not depend on the choice of the simplex containing $x$. By induction the same holds if $x$ belongs to several tiles, hence the map $f: E \to |B|$ is well-defined and continuous. By its construction $f$ satisfies the properties (\ref{preimage}) and (\ref{translated_prototile}). Therefore $f$ is an isometric winding, and the tiles in the corresponding tiling coincide with those of $\mathcal{T}$.  
\end{proof}
The Propositions \ref{from_f_to_T} and \ref{from_T_to_f} demonstrate that the matching rules for any FLC tiling by simplices can be encoded by an appropriate FBS-complex. Since any polyhedron can be triangulated by simplices, this means that any problem of matching rules for FLC tiling by polyhedral tiles (and then, according to \cite{kenyon1992rigidity}, by any topological disks) can be formulated in terms of FBS-complexes and isometric windings. Therefore, by the classic Berger's result \cite{berger1966undecidability}, the problem of existence of an isometric winding for a given FBS-complex $B$ is undecidable. More precisely, there is no regular way to prove that there exists an isometric winding for an arbitrary $B$; although in some cases such proof is possible --- e.g., when the corresponding tiling is periodic or can be obtained by substitutions. The opposite statement, though, can be verified algorithmically in a finite (but unpredictably long) time. For the practical purposes one can simplify the problem before attempting to prove the non-existence of an isometric winding:
\begin{proposition}\label{reduction}
	Let $s \in B$ be a simplex of dimension $k<d$ and $\sigma=\overline{\alpha_s(|s|)}$. Consider the finite set $\{\sigma_i\}$ of translated copies of prototiles having $\sigma$ as a $k\mbox{-face}$ (if $k=0$, this subset may contain several copies of the same prototile). If no subset of $\{\sigma_i\}$ represents a tiling of a polyhedral disc in $E$, containing $\sigma^\circ$ in its interior, then $|s|$ may not belong to the image of any isometric winding. If this is the case, $B$ admits an isometric winding if and only if an isometric winding exists for a reduced FBS-complex $B'$, obtained from $B$ by removing all simplices having $s$ as a $k\mbox{-face}$.
\end{proposition}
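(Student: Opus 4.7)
My plan is to reduce both parts of the proposition to a single local statement: whenever an isometric winding $f$ satisfies $f(x_0) \in |s|^\circ$ for some $x_0 \in E$, the tiling induced by $f$ must contain, around $x_0$, a configuration of translated prototiles each corresponding to a $d$-simplex of $B$ having $s$ as a $k$-face. The key technical ingredient is that, for any $d$-simplex $s'$ of $B$ with $s$ as a $k$-face, the natural extension of $\alpha_{s'}$ to the closure of $|s'|$ restricted to the $k$-face corresponding to $s$ agrees with $\alpha_s$ up to a translation in $E$. This follows from (\ref{rho_d}) together with the definition of the face maps in a semi-simplicial complex, and it ensures that the $k$-face of any tile $\sigma_{s',i}$ of $\mathcal{T}$ maps precisely onto $|s|$ in $|B|$.

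Granted this, the first claim is proved by contradiction: assume $f(x_0) \in |s|^\circ$ for some $x_0$. Since $|s|^\circ$ has dimension $k<d$, the point $x_0$ cannot lie in the interior of any $d$-tile (whose image is an open $d$-simplex by (\ref{preimage}) and (\ref{translated_prototile})), so $x_0$ lies on a $k$-dimensional face $F$ of $\mathcal{T}$. The $d$-tiles of $\mathcal{T}$ admitting $F$ as a $k$-face then cover a polyhedral neighborhood of $F^\circ$, and by the ingredient above, each such tile is a translated copy of a prototile in $\{\sigma_i\}$. After a global translation in $E$ aligning $F$ with $\sigma$, we obtain a subset of $\{\sigma_i\}$ tiling a polyhedral disc containing $\sigma^\circ$ in its interior, contradicting the hypothesis.

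For the second claim, the ``if'' direction is immediate: an isometric winding $f': E \to |B'|$ composed with the inclusion $|B'|\hookrightarrow |B|$ trivially satisfies (\ref{preimage})--(\ref{domain}) since only the codomain is enlarged. For the ``only if'' direction, let $f: E \to |B|$ be an isometric winding; I would argue that $f(E)\subset |B'|$. Any $d$-simplex $s'$ of $B$ with $s$ as a $k$-face, if hit by $f$, would contribute a tile whose $k$-face maps to $|s|^\circ$, contradicting the first claim; lower-dimensional simplices of $B\setminus B'$ can only be hit by $f$ as faces of such a $d$-simplex (since $E$ is covered by $d$-tiles by (\ref{domain})), so they are avoided too. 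The corestricted map is then an isometric winding of $B'$. The principal obstacle I anticipate is the local-geometry step in the first claim: making precise that the $d$-tiles incident to $F$ really fill a polyhedral neighborhood of $F^\circ$ without overlap. This should follow from the disjointness in (\ref{preimage}) together with the linear independence clause in Definition \ref{FBS} (equivalently Proposition \ref{notglued}), which prevents two distinct incident tiles from occupying the same wedge around $F^\circ$.
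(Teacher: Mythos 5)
Your proposal is correct and follows essentially the same route as the paper: the first claim via the observation that the tiles incident to the $k$-face of $\mathcal{T}$ mapping onto $|s|$ form a tiling of a disc with that face in its interior (hence a forbidden configuration from $\{\sigma_i\}$), and the second claim via the two inclusions $|B'|\subset|B|$ and $f(E)\subset|B'|$. The paper states this in two sentences as an ``observation''; your version merely spells out the local-geometry and face-compatibility details that the paper leaves implicit.
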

\begin{proof}
	The first part of the statement follows from the observation that for an isometric winding having $|s|$ in its image, the corresponding tiling $\mathcal{T}$ contains a translated copy $\hat\sigma$ of $\sigma$ and the set of tiles in $\mathcal{T}$ having $\hat\sigma$ as a $k\mbox{-dimensional}$ face form a tiling of a disc in $E$, such that $\hat\sigma^\circ$ is contained in its interior. For the second part of the statement, note that the natural inclusion $|B'|\subset |B|$ makes an isometric winding of $B'$ also an isometric winding of $B$. On the other hand, an isometric winding of $B$, not containing $|s|$ in its image does not contain any of simplices of $|B|$ having $|s|$ as a $k\mbox{-face}$, and is therefore also an isometric winding of $B'$.
\end{proof}
\par
Each simplicial tile in $\mathcal{T}$ is entirely defined by its vertices. Therefore, the covering of $E$ by tiles and their faces of all dimensions represents a geometric realization of an infinite simplicial complex. We shall use the same symbol $\mathcal{T}$ to denote the corresponding abstract complex, and interpret an isometric winding $f$ as a semi-simplicial map $\mathcal{T} \to B$. In particular, we can speak of an edge-path $a\dots b$ between two vertices $a$ and $b$ of $\mathcal{T}$.
\begin{proposition}
	Let $f: E \to |B|$ be an isometric winding corresponding to the tiling $\mathcal{T}$. Then if the points $a, b \in E$ are vertices of $\mathcal{T}$, for any edge-path $a\dots b$, the following holds:
	$$
	\rho(f(a \dots b))=b-a.
	$$
\end{proposition}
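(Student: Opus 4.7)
The plan is to reduce the statement to the single-edge case and then conclude by linearity. An edge-path from $a=v_0$ to $b=v_n$ corresponds to the 1-chain $c = e_1 + \cdots + e_n$, where $e_i$ is the oriented edge from $v_{i-1}$ to $v_i$ in $\mathcal{T}$. Since $f: \mathcal{T} \to B$ is semi-simplicial, it acts as a chain map, giving $f(c) = \sum_i f(e_i)$, and since $\rho$ is a homomorphism of abelian groups, $\rho(f(c)) = \sum_i \rho(f(e_i))$. Thus it suffices to prove $\rho(f(e)) = w - v$ for a single oriented edge $e$ from $v$ to $w$ in $\mathcal{T}$, after which the telescoping sum $\sum_i (v_i - v_{i-1}) = b - a$ finishes the argument.

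For the single edge, I would pick any top-dimensional simplex $\sigma_{s,i}$ of $\mathcal{T}$ containing $e$ in its closure. By continuity, the translation identity (\ref{translated_prototile}) extends from $\sigma_{s,i}^\circ$ to $\overline{\sigma_{s,i}}$, so $\alpha_s(f(v)) = v + \tau_{s,i}$ and $\alpha_s(f(w)) = w + \tau_{s,i}$, hence
\[
\alpha_s(f(w)) - \alpha_s(f(v)) = w - v.
\]
On the other hand, by the definition (\ref{alpha_s}) of $\alpha_s$, the affine realization sends the two endpoints of any edge of $s$ to points of $E$ whose difference equals $\pm\rho$ of that edge; Proposition \ref{notglued} ensures that $f(e)$ is a genuine 1-simplex of $B$ with distinct endpoints $f(v) \neq f(w)$, and the orientation inherited from $e$ picks the $+$ sign, yielding $\alpha_s(f(w)) - \alpha_s(f(v)) = \rho(f(e))$. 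Comparing the two expressions gives $\rho(f(e)) = w - v$, as desired.

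The main technical care lies in the orientation bookkeeping in the last step: $\rho$ is defined on ordered 1-chains through the rule that the edge from the chosen base vertex $x$ to $x + \rho(e_j)$ carries the vector $\rho(e_j)$, and one must check that this convention is compatible with the vertex ordering inherited by $f(e)$ from $e$. The fact that $\alpha_s$ does not depend on the initial choice of base vertex---guaranteed precisely by $\rho \circ \partial = 0$ on the boundary of each 2-face of $s$---makes this compatibility automatic. Alternatively, one could observe that $\mathcal{T}$ realizes the contractible space $E$ so $H_1(\mathcal{T}) = 0$, whence $\rho \circ \partial = 0$ forces path-independence and reduces the claim to verifying it on any single convenient path; however, the direct edge-by-edge argument above is more economical and avoids invoking acyclicity.
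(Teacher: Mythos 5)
Your proof is correct and follows essentially the same route as the paper's: reduce to a single edge by linearity of $\rho$, then compare the translation identity (\ref{translated_prototile}) on a tile containing that edge with the definition (\ref{alpha_s}) of the prototile map. One cosmetic remark: Proposition \ref{notglued} guarantees that the \emph{edges} of a simplex are distinct, not that $f(v)\neq f(w)$ as vertices of $B$ (vertex identifications are allowed in an FBS-complex), but your argument never actually needs the endpoints to be distinct, only that $f(e)$ is a well-defined $1$-simplex of $B$ on which $\rho$ is evaluated.
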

\begin{proof}
	By linearity of $\rho$ it suffices to prove the statement for the edges of $\mathcal{T}$. Let $[a, b]$ be an edge of a tile $\sigma$ in $\mathcal{T}$ and $\alpha_s(|s|)$ be the corresponding prototile. Then $f([a, b])$ is an edge of $s$ and the corresponding edge in $\alpha_s(|s|)$ is given by 
	$$
	[x, x+\rho(f([a, b]))]
	$$ 
	for some $x \in E$. By (\ref{translated_prototile}) this line segment is a translation of $[a, b]$, therefore
	$$
	\rho(f([a, b]))=b-a.
	$$
\end{proof}
Let us fix once and for all an orientation of $E$. The corresponding orientation of prototiles is pulled back to the simplices of $B$ by the maps $\alpha_s$ and to the simplices of $\mathcal{T}$ by translations. The choice of the positively oriented $d\mbox{-simplices}$ as the basis defines the positive cones $C_d^+(B)\subset C_d(B)$ and $C_d(\mathcal{T})^+ \subset C_d(\mathcal{T})$ and the corresponding partial order on the groups of $d\mbox{-chains}$.
\begin{proposition}\label{orientation}
	Let $f: E \to |B|$ be an isometric winding corresponding to the tiling $\mathcal{T}$. Then $f_*:C_d(\mathcal{T}) \to C_d(B) $ preserves the partial order:
	$$
	f_*(C_d(\mathcal{T})^+) \subset C_d^+(B).
	$$
\end{proposition}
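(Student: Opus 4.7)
The plan is to reduce to checking the statement on a single positively oriented $d$-simplex of $\mathcal{T}$ and then unpack how the orientations of $B$ and of $\mathcal{T}$ were defined, so that local orientation preservation of $f$ becomes tautological.

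First I would observe that $C_d^+(B)$ is the cone of nonnegative integer combinations of the positively oriented $d$-simplices of $B$ (those whose orientation, pulled back to $|s|$ via $\alpha_s$, agrees with the fixed orientation of $E$), and likewise $C_d(\mathcal{T})^+$ is generated by the $d$-simplices of $\mathcal{T}$ taken with their orientation inherited from $E$. By $\Z$-linearity of $f_*$, it therefore suffices to prove $f_*(\tau) \in C_d^+(B)$ for a single positively oriented $d$-simplex $\tau \in \mathcal{T}$.

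Next I would fix such a $\tau$. From the definition of an isometric winding, $\tau^\circ$ equals some component $\sigma_{s,i}^\circ$ in (\ref{preimage}), so as a semi-simplicial map $f$ sends $\tau$ to some $d$-simplex $s \in B$, and hence $f_*(\tau) = \varepsilon\, s$ with $\varepsilon \in \{+1,-1\}$ depending on whether the restriction $f|_{\tau^\circ}$ preserves or reverses orientation. The remaining content is to show $\varepsilon = +1$. For this, I would use (\ref{translated_prototile}), which gives on $\tau^\circ$
\[
f|_{\tau^\circ} \;=\; \alpha_s^{-1} \circ T_{\tau_{s,i}}|_{\tau^\circ},
\]
where $T_{\tau_{s,i}}$ is translation by $\tau_{s,i} \in E$. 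Translations of $E$ preserve the orientation of $E$; and $\alpha_s:|s|\to \sigma^\circ \subset E$ is orientation-preserving essentially by construction, because the orientation of the simplex $s$ in $B$ was defined as the pullback through $\alpha_s$ of the orientation of $E$. Hence $\alpha_s^{-1}$ is orientation-preserving, and so is the composition. Therefore $\varepsilon = +1$ and $f_*(\tau) = s \in C_d^+(B)$, which extends to all positive chains by linearity.

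There is no real analytic obstacle here; the only delicate point is bookkeeping. The step one has to handle cleanly is the equivalence between the two descriptions of the orientation of a $d$-simplex of $B$ — the a priori choice of positive cone coming from pulling back the orientation of $E$ via $\alpha_s$, and the orientation under which a semi-simplicial homeomorphism onto $|s|$ is assigned the sign $+1$. Once one commits to the convention that these agree (which is exactly how the positive cone $C_d^+(B)$ was introduced), the proof is reduced to the observation that $\alpha_s \circ f$ is a translation, which forces $f$ to be orientation-preserving on every tile.
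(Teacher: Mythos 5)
Your proof is correct and follows essentially the same route as the paper's: the paper's own argument is the one-line observation that isometric windings preserve orientation of simplices by property (\ref{translated_prototile}), which is exactly the content you unpack. Your additional bookkeeping — reducing to a single tile by linearity and writing $f|_{\tau^\circ}=\alpha_s^{-1}\circ T_{\tau_{s,i}}$ to make the orientation conventions explicit — is a faithful expansion of that observation rather than a different method.
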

\begin{proof}
	The statement follows from the observation that isometric windings preserve orientation of simplices by the property (\ref{translated_prototile}).
\end{proof}
\par
Restricting $\rho$ to 1-cycles defines the homomorphism $\rho_*: H_1(B) \to E$ (the property (\ref{rho_d}) makes the definition independent on the choice of the representative cycle). The image of $\rho_*$ is a finitely generated free abelian subgroup of $E$, which we will denote by $L$. Let us show now that the existence of an isometric winding implies that $L$ spans $E$:
\begin{proposition}\label{spanning}
If an FBS-complex admits an isometric winding, then $L\otimes_\Z \R= E$
\end{proposition}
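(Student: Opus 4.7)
My approach is a contradiction argument combining pigeonhole on the finite vertex set of $B$ with the Delone property of the tiling produced by any isometric winding. Concretely, I would suppose $L \otimes_\Z \R$ is a proper subspace of $E$, extract a nonzero linear functional $\lambda : E \to \R$ vanishing on $L$, and derive a contradiction by showing that $\lambda$ must simultaneously take only finitely many values and be unbounded on the vertex set of the tiling associated with any isometric winding.

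The first step is to fix such an isometric winding $f : E \to |B|$, together with the tiling $\mathcal{T}$ from Proposition \ref{from_f_to_T}. The crucial observation is: whenever two vertices $a, b$ of $\mathcal{T}$ satisfy $f(a) = f(b)$, one has $b - a \in L$. Indeed, for any edge-path $\gamma$ in $\mathcal{T}$ from $a$ to $b$, the image $f_*(\gamma) \in C_1(B)$ has boundary $f(b) - f(a) = 0$ and is therefore a $1$-cycle; the preceding proposition identifies $\rho(f_*(\gamma)) = b - a$, exhibiting $b - a$ as the image under $\rho_*$ of the homology class of $f_*(\gamma)$, hence as an element of $L$. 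Composing with $\lambda$ gives $\lambda(a) = \lambda(b)$, so $\lambda$ is constant on each fiber of the restriction of $f$ to the $0$-skeleton of $\mathcal{T}$.

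Since $B$ is a finite semi-simplicial complex its $0$-skeleton is finite, so $\lambda$ takes at most finitely many values on the vertex set of $\mathcal{T}$. On the other hand, that vertex set is Delone in $E$ by Proposition \ref{from_f_to_T}, hence relatively dense; applying the surjective linear functional $\lambda$ produces a subset of $\R$ that is still relatively dense (via the usual estimate $|\lambda(v) - t| \le \|\lambda\| \cdot r$ for any target $t$ and any vertex $v$ within distance $r$ of $\lambda^{-1}(t)$), and in particular infinite. This contradiction gives $L \otimes_\Z \R = E$. I do not foresee any genuine obstacle in executing the plan: both nontrivial ingredients — the $\rho$-as-displacement formula along edge-paths and the Delone property of $\mathcal{T}$ — have already been supplied above, so the proof reduces to the pigeonhole step just sketched.
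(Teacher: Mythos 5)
Your proposal is correct and follows essentially the same route as the paper: assume $L\otimes_\Z\R\subsetneq E$, show via edge-paths and $\rho_*$ that each fiber of $f$ over a vertex of $B$ lies in a single coset of $L\otimes_\Z\R$, and contradict the Delone property of the vertex set using the finiteness of the vertex set of $B$. Your use of a linear functional vanishing on $L\otimes_\Z\R$ is just a concrete way of phrasing the paper's observation that a finite union of translates of a proper subspace cannot be relatively dense.
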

\begin{proof}
	Let us proceed by {\em reductio ad absurdum}. Assume that $B$ admits an isometric winding $f$, but $L\otimes_\Z \R \subsetneq E$. Then for any vertex $v$ of $B$, the set $f^{-1}(v)$ is either empty or
	$$
	f^{-1}(v) \in L\otimes_\Z \R + x
	$$ 
	for some $x \in f^{-1}(v)$. Indeed, for any two points $a, b \in f^{-1}(v)$, one can construct an edge-path $a\dots b$ on the simplicial tiling $\mathcal{T}$ of $E$ corresponding to $f$. Its image $f(a\dots b)$ is a cycle in $B$, therefore $b-a \in L \subset L\otimes_\Z \R$. Then, the full set of vertices of $\mathcal{T}$ is contained in a finite union of proper subspaces of $E$ and thus cannot be relatively dense, which contradicts its Delone property. 
\end{proof}
Let us now introduce a class of maps between the geometric realizations of FBS-complexes, preserving their Euclidean structure. We shall start by considering the special case of simplex-to-simplex maps.
\begin{definition}\label{simplicial-FBS-map}
    A continuous map $\epsilon: |B_2| \to |B_1|$ is called a {\em simplicial FBS-map} if $\epsilon$ takes 
	each $d\mbox{-simplex}$ $|s_2|$ of $|B_2|$ to a $d\mbox{-simplex}$ $|s_1|$ of $B_1$ and 
	$$
	\alpha_{s_2}(x)=\alpha_{s_1}(\epsilon(x)) + \tau_{s_1, s_2} \qquad \forall x \in |s_2|
	$$
	for some $\tau_{s_1, s_2}\in E$.
\end{definition}
\begin{proposition}\label{simplicial-FBS-iw}
	If $\epsilon: |B_2| \to |B_1|$ is a simplicial FBS-map, then for any isometric winding $f_2$ of $B_2$, the map $f_1=\epsilon \circ f_2$ is an isometric winding of $B_1$.
\end{proposition}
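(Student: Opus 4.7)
The plan is a direct verification of the two defining properties of an isometric winding for $f_1=\epsilon\circ f_2$, using the corresponding properties of $f_2$ together with the simplicial FBS-map identity for $\epsilon$.

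First I would identify the $d$-simplex structure upstairs. For each $d$-simplex $s_1$ of $B_1$ define $S(s_1)=\{s_2\in B_2:\epsilon(|s_2|)\subset|s_1|\}$. The key preliminary is the cell-wise identity
\begin{equation*}
\epsilon^{-1}(|s_1|)=\bigsqcup_{s_2\in S(s_1)}|s_2|.
\end{equation*}
The inclusion $\supseteq$ is immediate from the definition of $S(s_1)$. For $\subseteq$, I would argue that a point $p\in\epsilon^{-1}(|s_1|)$ cannot lie in a simplex of dimension $<d$: approaching $p$ by a sequence inside some adjacent open $d$-simplex $|s_2|$ and using the identity $\alpha_{s_2}(x)=\alpha_{s_1}(\epsilon(x))+\tau_{s_1,s_2}$, one sees that either $\epsilon(|s_2|)\not\subset|s_1|$ (so the limit lies in a different open $d$-cell, impossible) or $\epsilon(|s_2|)\subset|s_1|$ (in which case $\alpha_{s_2}(x)$ tends to a boundary point of $\overline{\alpha_{s_2}(|s_2|)}$, forcing $\epsilon(p)\in\partial\overline{|s_1|}$, contradicting $\epsilon(p)\in|s_1|$). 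Hence $\epsilon^{-1}(|s_1|)$ is contained in the union of open $d$-simplices, and the identity follows.

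Next I would combine this with the isometric winding property (\ref{preimage}) of $f_2$:
\begin{equation*}
f_1^{-1}(|s_1|)=f_2^{-1}\!\left(\bigsqcup_{s_2\in S(s_1)}|s_2|\right)=\bigsqcup_{s_2\in S(s_1)}\bigsqcup_i\sigma_{s_2,i}^\circ,
\end{equation*}
which is a disjoint union of open affine $d$-simplices. For the translation condition (\ref{translated_prototile}), for any $x\in\sigma_{s_2,i}^\circ$ I would apply Definition \ref{simplicial-FBS-map} to the point $f_2(x)\in|s_2|$:
\begin{equation*}
\alpha_{s_1}(f_1(x))=\alpha_{s_1}(\epsilon(f_2(x)))=\alpha_{s_2}(f_2(x))-\tau_{s_1,s_2}=x+\tau_{s_2,i}-\tau_{s_1,s_2},
\end{equation*}
so each $\sigma_{s_2,i}$ is a translated copy of the prototile associated with $s_1$, with shift vector $\tau_{s_2,i}-\tau_{s_1,s_2}$.

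Finally, the covering condition (\ref{domain}) for $f_1$ is inherited for free: ranging $s_1$ over all $d$-simplices of $B_1$ and $s_2$ over $S(s_1)$ partitions the index set of all $d$-simplices of $B_2$, so the collection of simplices $\{\sigma_{s_2,i}\}$ appearing in $f_1^{-1}$ is precisely the covering collection for $f_2$, which already covers $E$. The only delicate point in the whole argument is the cell-wise preimage identity above; once it is established, the remaining verifications are purely algebraic applications of the translation identities and require no further geometric input.
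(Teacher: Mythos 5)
Your proof is correct and follows essentially the same route as the paper's: decompose $f_1^{-1}(|s_1|)$ over the set of $d$-simplices of $B_2$ mapping onto $|s_1|$, then verify the translation condition via the identity of Definition~\ref{simplicial-FBS-map} and inherit the covering condition from $f_2$. The only difference is that you explicitly justify the cell-wise preimage identity $\epsilon^{-1}(|s_1|)=\bigsqcup_{s_2\in S(s_1)}|s_2|$ with the boundary-limit argument, a step the paper's proof takes for granted when it writes $f_1^{-1}(|s_1|)=\bigcup_{s_2\in S_{s_1}}f_2^{-1}(|s_2|)$.
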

\begin{proof}
	Consider a $d\mbox{-simplex}$ $s_1 \in B_1$ and its preimage in $B_2$:
	$$
	S_{s_1}=\left\{s_2 \in B_2 \mid \epsilon(|s_2|)=|s_1|\right\}.
	$$
	Then, since $f_2$ is an isometric winding,
	$$
	{f_1}^{-1}(|s_1|)=\bigcup_{s_2 \in S_{s_1}} {f_2}^{-1}(|s_2|)=
	\bigsqcup_{s_2 \in S_{s_1}} \bigsqcup_{i} \sigma_{s_2, i}^\circ
	$$
	hence $f_1$ satisfies the condition (\ref{preimage}). The condition (\ref{translated_prototile}) holds as well since for any tile $\sigma_{s_2, i}^\circ$ we have
	$$ 
	\alpha_{s_1} (f_1(x)) =
	\alpha_{s_1}(\epsilon(f_2(x))) =
	\alpha_{s_2}(f_2(x)) + \tau_{s_1, s_2}\qquad \forall x \in \sigma_{s_2, i}^\circ.
	$$
	Finally, the property (\ref{domain}) for $f_1$ follows immediately from that for $f_2$. Thus, $f_1$ is an isometric winding of $B_1$.
\end{proof}
To extend Definition \ref{simplicial-FBS-map} to the situation when maps are not simplex-to-simplex, we need subdivisions of FBS-complexes:
\begin{definition}\label{subdivision}
	If $B$ and $B'$ are FBS-complexes, a homeomorphism $\varsigma: |B'| \to |B|$ of their geometric realizations is called a subdivision of $B$ if it is a subdivision in the sense of CW-complexes and if it respects the Euclidean structure of the cells, that is whenever $\varsigma(|s'|)\subset |s|$ for $d\mbox{-simplices}$ $s \in B$ and $s' \in B'$, one has
	\begin{equation}
	\label{subdiv}
	\alpha_{s'}(x)= \alpha_s(\varsigma(x)) + \tau_{s,s'}\qquad \forall x \in |s'|
	\end{equation}
	for some $\tau_{s,s'} \in E$.
\end{definition}
\begin{proposition}\label{subdiv-iw}
	If $\varsigma: |B'| \to |B|$ is a subdivision of an FBS-complex $B$, then the map
	$f': E \to |B'|$ is an isometric winding of $B'$ if and only if $f=\varsigma \circ f'$ is an isometric winding of $B$.
\end{proposition}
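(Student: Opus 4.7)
My approach is to verify directly the three defining conditions (\ref{preimage}), (\ref{translated_prototile}), and (\ref{domain}) of an isometric winding in each direction, with the subdivision compatibility (\ref{subdiv}) serving as the bridge between the two Euclidean structures. The computation I will use repeatedly in both directions is
\[
\alpha_s(f(x)) \;=\; \alpha_s(\varsigma(f'(x))) \;=\; \alpha_{s'}(f'(x)) - \tau_{s,s'},
\]
valid whenever $f'(x)\in |s'|$ with $\varsigma(|s'|)\subset |s|$. Combined with (\ref{translated_prototile}) for whichever of $f,f'$ is assumed to be an isometric winding, this immediately turns into an affine identity relating the two candidate tilings of $E$.

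For the direction $(\Leftarrow)$, which I would carry out first because it is the cleaner of the two: assume $f$ is an isometric winding of $B$, fix a $d\mbox{-simplex}$ $s'\in B'$, and let $s\in B$ be the unique $d\mbox{-simplex}$ with $\varsigma(|s'|)\subset |s|$. Since $\varsigma(|s'|^\circ)$ is open in $|B|$ and contained in $|s|$, it lies in $|s|^\circ$, so $(f')^{-1}(|s'|^\circ)=f^{-1}(\varsigma(|s'|^\circ))\subset f^{-1}(|s|^\circ)=\bigsqcup_j \sigma_{s,j}^\circ$. On each tile $\sigma_{s,j}^\circ$ I have $\alpha_s(f(x))=x+\tau_{s,j}$, and rewriting the condition $f(x)\in\varsigma(|s'|^\circ)$ via (\ref{subdiv}) turns it into $x\in \alpha_{s'}(|s'|^\circ)-\tau_{s,s'}-\tau_{s,j}$, which is the interior of a translated copy of the prototile of $s'$. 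Taking the disjoint union over $j$ yields (\ref{preimage}) and (\ref{translated_prototile}) for $f'$; the covering condition (\ref{domain}) is then inherited from $f$ by cellularity of $\varsigma$.

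For the direction $(\Rightarrow)$: assume $f'$ is an isometric winding of $B'$. For each $d\mbox{-simplex}$ $s\in B$, set $S_s=\{s'\in B':\varsigma(|s'|)\subset |s|\}$ and compute
\[
f^{-1}(|s|^\circ) \;=\; (f')^{-1}\!\bigl(\varsigma^{-1}(|s|^\circ)\bigr),
\]
where $\varsigma^{-1}(|s|^\circ)$ is the union of the open $d\mbox{-simplices}$ $|s'|^\circ$ with $s'\in S_s$ together with the lower-dimensional faces of $B'$ lying inside $|s|^\circ$. On each refined tile $\sigma_{s',i}^\circ$ the key identity reads $\alpha_s(f(x))=x+(\tau_{s',i}-\tau_{s,s'})$, so $\alpha_s\circ f$ is locally a Euclidean translation. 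I would then argue that by continuity of $\alpha_s\circ f$ across the shared interior faces of adjacent refined tiles that map over $|s|^\circ$, these translations coincide on each connected component of $f^{-1}(|s|^\circ)$; call the common value on the $j$-th component $\tau_{s,j}$. This identifies each component as $\sigma_{s,j}^\circ$ for an affine simplex $\sigma_{s,j}$ that is a translated copy of the prototile of $s$, proving (\ref{preimage}) and (\ref{translated_prototile}) for $f$. Property (\ref{domain}) follows because every refined tile $\sigma_{s',i}$ is contained in some such $\sigma_{s,j}$.

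The main obstacle is the continuity-and-aggregation step in $(\Rightarrow)$: a priori each small tile $\sigma_{s',i}$ carries its own translation $\tau_{s',i}-\tau_{s,s'}$, and one must verify that adjacency of refined tiles within a single $\varsigma$-patch over $|s|$ propagates a \emph{common} translation across the whole patch, so that the patches really are translated copies of the prototile of $s$ rather than bent assemblies. Once this local-to-global continuity claim is in hand, everything else is affine bookkeeping using (\ref{subdiv}) and (\ref{translated_prototile}).
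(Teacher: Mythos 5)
Your overall decomposition is the right one, and your $(\Leftarrow)$ direction (deducing an isometric winding of $B'$ from one of $B$) is essentially the paper's argument: it normalizes $\tau_{s,s'}$, pulls the decomposition $f^{-1}(|s|)=\bigsqcup_i(\alpha_s(|s|)-\tau_{s,i})$ back through (\ref{subdiv_bis}), and reads off (\ref{preimage})--(\ref{domain}) for $f'$ exactly as you do. The problem is the $(\Rightarrow)$ direction, where the step you yourself flag as ``the main obstacle'' is precisely the content of the proof, and your sketch of it has a genuine gap. The continuity of $\alpha_s\circ f$ on the open set $f^{-1}(|s|)$ does force the local translation $\alpha_s(f(x))-x$ to be constant on each connected component, so no ``bent assembly'' can occur; but it does \emph{not} by itself rule out an \emph{incomplete} assembly: a priori a component of $f^{-1}(|s|)$ could be a proper sub-union of refined tiles, terminating at the preimage of an interior face of the subdivision, in which case it would not be a full open translated copy of the prototile $\overline{\alpha_s(|s|)}$ and (\ref{preimage}) would fail. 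Your phrase ``propagates a common translation across the whole patch'' presupposes that the refined tiles already organize themselves into complete patches over $|s|$ --- but that is exactly what has to be proved.

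The paper closes this by an induction over the set $S_s$ of $d$-simplices of $B'$ lying over $s$: if $s'_j,s'_k\in S_s$ share a $(d-1)$-face whose image lies in the open cell $|s|$, then (because $\varsigma$ is a homeomorphism) $s'_k$ is the \emph{only} $d$-simplex of $B'$ other than $s'_j$ incident to that face, so the tile of $\mathcal{T}'$ adjacent to $\sigma_{s'_j,i}$ across the corresponding geometric face is forced to be the correctly positioned translate of the prototile of $s'_k$; convexity of $\alpha_s(|s|)$ then guarantees every element of $S_s$ is reached from the starting simplex by such a chain, so each supertile is the full union $\bigcup_{s'_j\in S_s}\sigma_{s'_j,i}$. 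If you prefer to stay with your connected-component formulation, you must add the corresponding closing step: show that any point $y$ of $\overline{C_j}$ lying in the open affine simplex $\alpha_s(|s|)-\tau_{s,j}$ still satisfies $f(y)\in|s|$ (using the barycentric extension $\overline{\alpha_{s'}^{-1}}$ on the closed refined tile containing $y$ together with (\ref{subdiv})), so that $C_j$ is both open and closed in $\alpha_s(|s|)-\tau_{s,j}$ and hence equals it. Either version works, but as written the proposal asserts the conclusion of this step rather than proving it.
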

\begin{proof}
	For a $d\mbox{-simplex}$ $s$ of $B$, let $S_s$ stand for the subset of $d\mbox{-simplices}$ of $B'$ contained in $\varsigma^{-1}(s)$. 
	Since the prototile map $\alpha_{s'}$ is defined up to translation in $E$, one can always choose $\tau_{s, s'}=0$ in (\ref{subdiv}):
	\begin{equation}
	\label{subdiv_bis}
	\alpha_{s'}= \alpha_s\circ\varsigma\big|_{|s'|}
	\end{equation}
	Then  $\alpha_s(|s|)$ is partitioned into a finite set of open subtiles $\{\alpha_{s_j'}(|s_j'|) \mid s_j' \in S_s\}$:
	\begin{equation}
	\label{tessellation}
	\overline{\alpha_s(|s|)}=\overline{\bigsqcup_{s_j' \in S_s} \alpha_{s_j'}(|s_j'|)}
	\end{equation}
	Since $\alpha_s$ and $\varsigma$ are homeomorphisms, two subtiles in (\ref{tessellation}) share a common face of dimension $d-1$ if and only if so do the corresponding simplices from $S_s$.
	\par
	Let $f': E \to |B'|$ be an isometric winding of $B'$. Denote the corresponding tiling of $E$ by $\mathcal{T'}$. Then according to (\ref{preimage}) for any simplex $s' \subset S_s$
	$$
	f'^{-1}(|s'|) = \bigsqcup_i \sigma_{s', i}^\circ,
	$$
	where $\sigma_{s', i}$ are tiles of $\mathcal{T}'$. Consider a simplex $s'_0\in S_s$ and one of the corresponding open tiles $\sigma_{s'_0, i}^\circ$. Then (\ref{translated_prototile}) defines the translation $\tau_{s'_0,i}$ such that
	$$
	\sigma_{s'_0, i}^\circ = \alpha_{s'_0}(|s'_0|)-\tau_{s'_0,i}
	$$ 
	Let us show now that the same translation $\tau_{s'_0,i}$ takes other subtiles of the partitioning (\ref{tessellation}) exactly to the interiors of tiles of $\mathcal{T}'$. We shall assign to these tiles the same tile index $i$ and denote them by $\sigma_{s'_j, i}$. One can proceed by induction over $S_s$. Consider $d\mbox{-simplices}$ $s'_j, s'_k \in S_s$ sharing a common face of dimension $d-1$ and assume that $\alpha_{s'_j}(|s'_j|) -\tau_{s'_0,i}=\sigma_{s'_j, i}^\circ$. Then, since $s_k'$ is the only $d\mbox{-simplex}$ of $B'$ sharing this face with $s_j'$, we also have $\alpha_{s'_k}(|s'_k|) -\tau_{s'_0,i}=\sigma_{s'_k, i}^\circ$. Since $\alpha_s(|s|)$ is a convex subset of $E$, any simplex from $S_s$ can be reached from $s'_0$ by following a sequence of $d\mbox{-simplices}$ in which any two consecutive elements share a common face of dimension $d-1$. Therefore
	$$
	\alpha_{s'_j}(|s'_j|) -\tau_{s'_0,i}=\sigma_{s'_j, i}^\circ \quad \text{ for all }s'_j \in S_s.
	$$
	We can now construct a supertile
	\begin{equation}
	\label{supertile}
	\sigma_{s, i}=\bigcup_{s'_j \in S_s} \sigma_{s'_j, i}.
	\end{equation}
	Since $f$ is continuous, 
	\begin{equation}
	\label{supertile_bis}
	f(\sigma_{s,i}^\circ)=|s|.
	\end{equation}
	The identity (\ref{subdiv_bis}) yields the property (\ref{translated_prototile}) with $\tau_{s,i}=\tau_{s'_0, i}$. Since $\mathcal{T'}$ covers the entire space $E$, the property (\ref{domain}) for $\sigma_{s,i}$ follows from (\ref{supertile}). Finally, as follows from (\ref{supertile}) and (\ref{supertile_bis}), different open supertiles $\sigma_{s,i}^\circ$ do not intesect each other and
	$$
	f^{-1}(|s|) \supset \bigsqcup_i \sigma_{s,i}^\circ .
	$$
	On the other hand, $f^{-1}(|s|)$ is open in $E$ and cannot meet the interior of the supertiles corresponding to the simplices of $B$ other than $s$. Therefore, the condition (\ref{preimage}) holds as well and $f$ is an isometric winding of $B$.
	\par
	Let now $f: E \to |B|$ be an isometric winding and $f'=\varsigma^{-1} \circ f$. Consider a $d\mbox{-simplex}$ $s' \in B'$ and let $s \in B$ be the (unique) $d\mbox{-simplex}$ $s \in B$ such that $\varsigma(|s'|) \subset |s|$. As follows from (\ref{preimage}) and (\ref{translated_prototile}),
	$$
	f^{-1}(|s|)=\bigsqcup_i \left(\alpha_s(|s|)-\tau_{s,i}\right)\quad\text{ where } \tau_{s,i} \in E.
	$$ 
	Define the open tiles $\sigma_{s',i}^\circ$ as
	$$
	\sigma_{s',i}^\circ=\alpha_{s'}(|s'|)-\tau_{s,i}
	$$
	Then (\ref{subdiv_bis}) yields
	\begin{equation}
	\label{f_prim_disjoint}
	f'^{-1}(|s'|)=f^{-1}(\varsigma(|s'|))=\bigsqcup_i \sigma_{s', i}^\circ,
	\end{equation}
	as well as
	\begin{equation}
	\label{translated_subtile}
	\alpha_{s'}(f'(x))= \alpha_{s}(f(x))=
	x + \tau_{s,i} \qquad \forall x \in \sigma_{s',i}^\circ.
	\end{equation}
	Finally
    \begin{equation}
    \label{f_prim_covers}
    \bigcup_{s', i}\sigma_{s',i}=
    \bigcup_{s, i}\left(\overline{\bigsqcup_{s'_j\in S_s}\alpha_{s'}(|s'|)}-\tau_{s,i}\right)=
    \bigcup_{s,i}\sigma_{s,i}=E.
    \end{equation}
    It follows from (\ref{f_prim_disjoint}), (\ref{translated_subtile}) and (\ref{f_prim_covers}) that $f'$ is an isometric winding of $B'$.
\end{proof}
\begin{definition}\label{FBS-map}
	A continuous map $\epsilon: |B_2| \to |B_1|$ between the geometric realizations of FBS-complexes $B_1$ and $B_2$ is called an {\em FBS-map} if there exist subdivisions $\varsigma_1: |B'_1| \to |B_1|$ and $\varsigma_2: |B'_2| \to |B_2|$ such that $\epsilon'=\varsigma_1^{-1}\circ \epsilon \circ \varsigma_2$ is a simplicial FBS-map.
\end{definition}
\begin{proposition}\label{FBS-iw}
	If $\epsilon: |B_2| \to |B_1|$ is an FBS-map and $f_2: E \to |B_2|$ is an isometric winding of $B_2$, then $f_1=\epsilon\circ f_2$ is an isometric winding of $B_1$.
\end{proposition}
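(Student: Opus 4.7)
The plan is to chase the map $f_1 = \epsilon \circ f_2$ through the factorization guaranteed by Definition \ref{FBS-map} and apply the two propositions we have already established: Proposition \ref{subdiv-iw} (which lets us pass between isometric windings of a complex and those of a subdivision, in either direction) and Proposition \ref{simplicial-FBS-iw} (which handles the simplex-to-simplex case).

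Concretely, I would start from the subdivisions $\varsigma_1\colon |B_1'|\to |B_1|$ and $\varsigma_2\colon |B_2'|\to |B_2|$ provided by the hypothesis, together with the simplicial FBS-map $\epsilon' = \varsigma_1^{-1}\circ\epsilon\circ\varsigma_2$. First, since $f_2$ is an isometric winding of $B_2$, Proposition \ref{subdiv-iw} (applied in the ``only if'' direction to $\varsigma_2$) yields that $f_2' = \varsigma_2^{-1}\circ f_2$ is an isometric winding of $B_2'$. Next, because $\epsilon'$ is a simplicial FBS-map, Proposition \ref{simplicial-FBS-iw} tells us that $f_1' = \epsilon'\circ f_2'$ is an isometric winding of $B_1'$. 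Finally, applying Proposition \ref{subdiv-iw} in the ``if'' direction to the subdivision $\varsigma_1$, we conclude that $\varsigma_1\circ f_1'$ is an isometric winding of $B_1$.

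It only remains to check that this last map is indeed $f_1$. Unwinding the composition,
\[
\varsigma_1\circ f_1' = \varsigma_1\circ\epsilon'\circ f_2' = \varsigma_1\circ(\varsigma_1^{-1}\circ\epsilon\circ\varsigma_2)\circ(\varsigma_2^{-1}\circ f_2) = \epsilon\circ f_2 = f_1,
\]
so $f_1$ is an isometric winding of $B_1$ as claimed.

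There is no real obstacle here beyond bookkeeping: the content of the statement has been packaged into the two preceding propositions, and the only care required is to invoke Proposition \ref{subdiv-iw} in the correct direction on each side (``winding downstairs $\Rightarrow$ winding upstairs'' for $\varsigma_2$ and the reverse for $\varsigma_1$), which is legitimate since that proposition is an \emph{if and only if} statement.
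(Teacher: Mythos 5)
Your argument is correct and is exactly the chain the paper has in mind: the paper's own proof simply states that the result follows immediately from Propositions \ref{simplicial-FBS-iw} and \ref{subdiv-iw}, and you have spelled out that deduction, including the correct directions of the ``if and only if'' in Proposition \ref{subdiv-iw}. No further comment is needed.
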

\begin{proof}
	The result follows immediately from Propositions \ref{simplicial-FBS-iw} and \ref{subdiv-iw}.
\end{proof}
It is straightforward to prove that compositions of FBS-maps are also FBS-maps, which makes FBS-complexes with FBS-maps into a category.
\section{Lifting of simplicial tilings}\label{sec:lifting}
The idea to describe aperiodic tilings as projections, first suggested by de Bruijn \cite{de1981algebraic}, is commonly used in the study of both the matching rules \cite{bedaride2015periodicities,levitov1988local} and the problems of random tilings \cite{shaw1991long}. Henley \cite{henley1999random} proposed a generalization of this construction to the case of arbitrarily shaped polyhedral tiles with the lifting dimension equal to the rank of the free abelian group generated by the ``linkage vectors'' (which are essentially the edges of the prototiles). However, the lifting produced by this approach has several drawbacks. Consider, for instance, the formal modification of a two-dimensional tiling, consisting in the insertion of a new vertex in a tile edge at a generic position. 
This operation does not change the structure of the tiling, but creates a new linkage vector and thus requires adding an extra dimension to the lifting space. On the other hand, relevant lifting dimensions may be missed in the case of an accidental integral linear dependence in the set of linkage vectors. This happens, for example, in the case of the tiling of plane by 60 degrees rhombi. In this section we revisit the issue of lifting of aperiodic tilings and suggest a different approach to the problem.
\par
In lifting schemes, an aperiodic tiling appears as a projection of a corrugated continuous hypersurface, composed of facets belonging to a larger {\em periodic} pattern. Let us denote the corresponding lattice of periods by $\mathcal{L}$. Since the facets are locally connected in the same way as are the prototiles in the prototile space, this pattern is a covering space of the latter, with monodromy group $\mathcal{L}$. In the case of simplicial tilings the base of the covering is an FBS-complex. Since $\mathcal{L}$ is free abelian, the corresponding coverings are classified by homomorphisms of the first integral homology of the base onto $\mathcal{L}$ \cite{dwyer1987homology}:
\begin{proposition}\label{M_covering}
	Let $B$ be an FBS-complex and let $h: \pi_1(B, b) \to H_1(B)$ stand for the Hurewicz homomorphism \cite{fomenko2016homotopical} (where $b \in B$ is an arbitrary vertex chosen as the base point). Then for any surjective homomorphism $\lambda: H_1(B,\Z) \to \mathcal{L}$ there exists a normal semi-simplicial covering $p:\tilde{B} \to B$ with monodromy group  $\mathcal{L}$.
\end{proposition}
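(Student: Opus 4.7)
The plan is to invoke the classical Galois correspondence between connected coverings of a reasonable space and conjugacy classes of subgroups of its fundamental group, and then upgrade the resulting topological covering to a semi-simplicial one by lifting the cell structure. Since $B$ is connected and $|B|$ admits a CW-structure, it is also path-connected and locally path-connected, so this correspondence applies without difficulty.

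First, I would apply Hurewicz's theorem: because $|B|$ is path-connected, the Hurewicz homomorphism $h: \pi_1(B,b) \to H_1(B,\Z)$ is surjective with kernel the commutator subgroup $[\pi_1(B,b), \pi_1(B,b)]$. The composition
$$
\phi = \lambda \circ h : \pi_1(B,b) \twoheadrightarrow \mathcal{L}
$$
is therefore a surjective homomorphism onto the free abelian group $\mathcal{L}$. Let $K = \ker \phi$. Since $\mathcal{L}$ is abelian, $K$ contains $[\pi_1, \pi_1]$ and is in particular a normal subgroup of $\pi_1(B,b)$, with quotient $\pi_1(B,b)/K \cong \mathcal{L}$.

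Next, I would invoke the standard classification of coverings of a connected, locally path-connected, semi-locally simply connected space (all properties of $|B|$ as a finite CW-complex) to produce a connected covering $p:\tilde{B} \to B$ with $p_*(\pi_1(\tilde{B}, \tilde{b})) = K$. Because $K$ is normal, the covering is regular, the deck transformation group acts freely and transitively on the fiber over $b$, and it is isomorphic to the quotient $\pi_1(B,b)/K \cong \mathcal{L}$; this quotient also coincides with the monodromy group of $p$.

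The step that requires a bit of care is promoting $p$ to a \emph{semi-simplicial} covering: so far $\tilde{B}$ is only a topological space with a covering map to $|B|$. Here I would use the fact that each open simplex $|s| \subset |B|$ is simply connected, so the restriction of $p$ over $|s|$ splits as a disjoint union of homeomorphic copies of $|s|$. Declaring each such copy to be a $d$-simplex of $\tilde B$, and transporting the face maps of $s$ along these homeomorphisms, equips $\tilde B$ with a semi-simplicial structure for which $p$ is a semi-simplicial map; the attaching data are consistent because two lifts of adjacent simplices either meet along a common face-lift or not at all, which is decided by lifting of edge-paths. The deck transformations of $p$ permute these lifts and hence act by semi-simplicial automorphisms, realizing the isomorphism with $\mathcal{L}$ at the combinatorial level. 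I do not expect a genuine obstacle here beyond verifying these routine compatibilities.
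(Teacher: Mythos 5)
Your argument is correct and follows essentially the same route as the paper: surjectivity of the Hurewicz map, passing to the normal subgroup $\ker(\lambda\circ h)$, invoking the fundamental theorem of covering spaces, and then endowing the covering space with the induced semi-simplicial structure. You simply spell out the last (lifting) step in more detail than the paper does.
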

\begin{proof}
	Since the space $|B|$ is path-connected, the Hurewicz map is surjective and $\lambda(h(\pi_1(B, b)))=\mathcal{L}$. Therefore, $\mathcal{L}$ is a quotient of $\pi_1(B, b)$ by its normal subgroup $\ker(\lambda \circ h)$, and by the fundamental theorem of covering spaces there exists a normal covering of $|B|$ having $\mathcal{L}$ as monodromy group. Endowing this topological space with the induced structure of a semi-simplicial complex yields a semi-simplicial covering $p:\tilde{B} \to B$.
\end{proof}
\begin{proposition}\label{right-exact}
	The sequence of abelian groups
	$$
	\xymatrix{
		H_1(\tilde{B}) \ar[r]^{p_*}&  H_1(B) \ar[r]^(0.6){\lambda} &  \mathcal{L} \ar[r] & 0.
	}
	$$
	is exact.
\end{proposition}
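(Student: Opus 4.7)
The plan is to read off the exactness from the classification of the covering $p\colon \tilde B \to B$ established in Proposition \ref{M_covering}, via naturality of the Hurewicz map. Fix a basepoint $b \in B$ together with a lift $\tilde b \in \tilde B$, and denote by $h_B\colon \pi_1(B,b)\to H_1(B)$ and $h_{\tilde B}\colon \pi_1(\tilde B,\tilde b)\to H_1(\tilde B)$ the Hurewicz homomorphisms. Both are surjective because $|B|$ and $|\tilde B|$ are path-connected (connectedness of $\tilde B$ is necessary for the monodromy group to be $\mathcal{L}$). By the construction in Proposition \ref{M_covering} one has
\[
p_*\bigl(\pi_1(\tilde B,\tilde b)\bigr) = \ker(\lambda \circ h_B),
\]
and the compatibility I will repeatedly invoke is the commutative square $h_B \circ p_* = p_* \circ h_{\tilde B}$ (naturality of $h$).

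Exactness at $\mathcal{L}$ is immediate, since the surjectivity of $\lambda$ is part of the hypothesis of Proposition \ref{M_covering}. For the inclusion $\mathrm{im}(p_*) \subseteq \ker(\lambda)$, I would pick any $\tilde\xi \in H_1(\tilde B)$, represent it as $h_{\tilde B}([\tilde\gamma])$ for some loop $\tilde\gamma$ based at $\tilde b$, and then compute
\[
\lambda\bigl(p_*\tilde\xi\bigr) = \lambda\bigl(p_*\,h_{\tilde B}[\tilde\gamma]\bigr) = \lambda\bigl(h_B\,p_*[\tilde\gamma]\bigr) = 0,
\]
where the last equality uses $p_*[\tilde\gamma] \in p_*(\pi_1(\tilde B,\tilde b)) = \ker(\lambda\circ h_B)$.

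For the reverse inclusion $\ker(\lambda) \subseteq \mathrm{im}(p_*)$, I would take $\xi \in H_1(B)$ with $\lambda(\xi)=0$, use the surjectivity of $h_B$ to write $\xi = h_B([\gamma])$ for some loop $\gamma$ at $b$, and observe that $[\gamma] \in \ker(\lambda \circ h_B) = p_*(\pi_1(\tilde B,\tilde b))$; hence $[\gamma] = p_*[\tilde\gamma]$ for some loop $\tilde\gamma$ in $\tilde B$, so $\xi = h_B(p_*[\tilde\gamma]) = p_*(h_{\tilde B}[\tilde\gamma]) \in \mathrm{im}(p_*)$. There is no real obstacle here: once the two Hurewicz surjections and the classifying identity for $p$ are in hand, the argument is entirely formal diagram-chasing at the level of loops. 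One could alternatively deduce the statement from the Cartan--Leray five-term exact sequence for the normal cover $p$, using that $H_1(\mathcal{L};\Z)\cong \mathcal{L}$ for the free abelian group $\mathcal{L}$, but invoking such machinery seems disproportionate for a three-term statement of this kind.
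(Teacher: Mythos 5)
Your proof is correct and takes essentially the same route as the paper: both rest on the exact sequence $\pi_1(\tilde B,\tilde b)\xrightarrow{p_*}\pi_1(B,b)\xrightarrow{\lambda\circ h}\mathcal{L}$ supplied by the classification of the normal covering, together with the Hurewicz identification of $H_1$ with the abelianization of $\pi_1$. The paper simply invokes right-exactness of the abelianization functor at the point where you carry out the equivalent diagram chase explicitly via naturality of the Hurewicz map.
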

\begin{proof}
	This sequence is the abelianization of the following short exact sequence of groups defined by the construction of the covering $p: \tilde{B} \to B$
	$$
	\xymatrix{
	\pi_1(\tilde{B}, \tilde{b}) \ar[r]^{p_*} & \pi_1(B, b) \ar[r]^(0.6){\lambda\circ h} & \mathcal{L}
    },
	$$
	where $\tilde{b} \in \tilde{B}$ and $b \in B$ are arbitrarily vertices chosen as the base points. The results then follows from the fact that the abelianization functor is right exact.
\end{proof}
\begin{proposition}\label{lift_f}
	If $f: E \to |B|$ is an isometric winding of an FBS-complex $B$, then there exists a continuous map $\tilde f: E \to |\tilde{B}|$ such that $p \circ \tilde f=f$. 
\end{proposition}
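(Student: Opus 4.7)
The plan is to invoke the standard lifting criterion from covering space theory. Recall that given a covering $p: \tilde{Y} \to Y$ and a continuous map $f: X \to Y$ from a path-connected and locally path-connected space $X$, a continuous lift $\tilde{f}: X \to \tilde{Y}$ with $p \circ \tilde{f} = f$ exists if and only if $f_*(\pi_1(X, x_0)) \subset p_*(\pi_1(\tilde{Y}, \tilde{y}_0))$ for compatibly chosen base points.

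First I would verify the setup. The domain $E$ is a Euclidean space, so it is path-connected, locally path-connected, and contractible; in particular $\pi_1(E, x_0) = 0$. The target $|B|$ is the geometric realization of a finite semi-simplicial complex and hence a CW complex, so it is locally path-connected and semi-locally simply connected. Consequently $|p|: |\tilde{B}| \to |B|$ is a topological covering map to which the classical theory applies.

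Next I would apply the criterion. Fix $x_0 \in E$, set $b_0 = f(x_0)$, and pick any $\tilde{b}_0 \in p^{-1}(b_0)$. The induced homomorphism $f_*: \pi_1(E, x_0) \to \pi_1(|B|, b_0)$ has trivial domain, so its image is automatically contained in $p_*(\pi_1(|\tilde{B}|, \tilde{b}_0))$. The lifting criterion therefore produces a unique continuous map $\tilde{f}: E \to |\tilde{B}|$ with $\tilde{f}(x_0) = \tilde{b}_0$ and $p \circ \tilde{f} = f$, as required.

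There is essentially no serious obstacle: once the problem is recast in the covering-theoretic framework, the simple-connectivity of $E$ makes the subgroup condition vacuous. The only point worth a brief sanity check is that passing from the semi-simplicial covering $p: \tilde{B} \to B$ furnished by Proposition \ref{M_covering} to the map of geometric realizations yields a genuine topological covering, which is standard for CW complexes.
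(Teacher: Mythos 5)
Your proof is correct and takes essentially the same approach as the paper: both are one-line applications of standard covering space theory exploiting the topological triviality of $E$ (the paper invokes contractibility together with the homotopy lifting property, while you invoke simple connectivity together with the lifting criterion, which amounts to the same thing). Your version is if anything slightly more careful, since you explicitly check local path-connectedness of $E$ and that the semi-simplicial covering induces a genuine topological covering of geometric realizations.
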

\begin{proof}
	Since $E$ is contractible and any constant map $E \to |B|$ can be lifted to $|\tilde{B}|$, such map exists by the homotopy lifting property of covering spaces. 
\end{proof}
Note that the lift $\tilde f$ constructed in Proposition \ref{lift_f} is not uniquely defined. More precisely, all such lifts are obtained by the composition of $\tilde f$ with the action of the deck transformation group of the covering $p: \tilde{B} \to B$. 
\par
To finish the construction of the lifting, we need a left inverse of $\tilde f$. Such map exists only if closed edge-paths in $\tilde B$ correspond to zero translations of $E$, that is if $\ker(\lambda) \subseteq \ker(\rho_*)$. Therefore, there should exist a surjective homomorphism $\pi_L: \mathcal{L} \to L$ completing the following commutative diagram:
\begin{equation}
\label{pi_L}
\xymatrix{
	H_1(B, \Z) \ar[rd]_{\lambda} \ar^{\rho_*}[rr] & & L\\
	& \mathcal{L} \ar[ru]_{\pi_L}
}
\end{equation}  
Let us show now that conversely, if the homomorphism $\pi_L$ in (\ref{pi_L}) exists, then isometric windings of $B$ and associated tilings of $E$ can be lifted into the real vector space $V$ defined as 
$$
V=\mathcal{L} \otimes_\Z \R
$$
with the corresponding projection $\pi_E: V \to E$:
$$
\pi_E= \pi_L \otimes_\Z \R
$$
(Note that $\pi_E$ is surjective by Proposition \ref{spanning}).
\begin{proposition} \label{prop_mu}
	There exists a continuous map $\mu: |\tilde{B}| \to V$ satisfying the following properties:
	\begin{itemize}
		\item For any edge-path $v_1\dots v_2$ on $\tilde{B}$
		\begin{equation}
		\label{mu_prop_1}
		\pi_E(\mu(|v_2|)-\mu(|v_1|))=\rho(p(v_1\dots v_2)).
		\end{equation}
		\item For any two simplices $t_1, t_2 \in \tilde{B}$ such that $p(t_1)=p(t_2)$ there exists a lattice translation $l \in \mathcal{L}$ such that
		\begin{equation}
		\label{mu_prop_2}
		\mu(|t_1|)=\mu(|t_2|)+l.
		\end{equation}
	\end{itemize}
\end{proposition}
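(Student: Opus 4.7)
The plan is to build $\mu$ in two layers. First, I construct a group homomorphism $\tilde{\rho}\colon C_1(B)\to V$ that simultaneously lifts $\rho$ through $\pi_E$ and satisfies $\tilde\rho(z)=\lambda([z])$ (viewed as an element of $V$ via the inclusion $\mathcal{L}\subset V$) for every $1$-cycle $z\in Z_1(B)$. Second, I pull $\tilde\rho$ back to $\tilde B$ along $p_*$, define $\mu$ on vertices of $\tilde B$ by integrating from a chosen basepoint, and extend to $|\tilde B|$ by barycentric interpolation on each simplex.

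For Step~1, the two prescriptions on $\tilde\rho$ are compatible on $Z_1(B)$: for any cycle $z$, commutativity of (\ref{pi_L}) gives $\pi_E(\lambda([z]))=\pi_L(\lambda([z]))=\rho_*([z])=\rho(z)$. Since $C_0(B)$ is free abelian, its subgroup $B_0(B)=\partial C_1(B)$ is also free, so the short exact sequence $0\to Z_1(B)\to C_1(B)\to B_0(B)\to 0$ splits; writing $C_1(B)=Z_1(B)\oplus T$, I define $\tilde\rho(z)=\lambda([z])$ on $Z_1(B)$ and on the free summand $T$ pick any generator-wise lift of $\rho|_T$ to $V$, assembling everything into a well-defined homomorphism lifting $\rho$.

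For Step~2, set $\tilde\rho_{\tilde B}=\tilde\rho\circ p_*$, fix $\tilde b_0\in\tilde B$, and for every vertex $v$ of $\tilde B$ put $\mu(v)=\tilde\rho_{\tilde B}(c_\gamma)$ for any edge-path chain $c_\gamma$ from $\tilde b_0$ to $v$. Independence of $\gamma$ is the crucial check: two choices differ by a cycle $z\in Z_1(\tilde B)$, and $[p_*(z)]\in H_1(B)$ lies in $\mathrm{im}(p_*)=\ker\lambda$ by Proposition~\ref{right-exact}, whence $\tilde\rho(p_*(z))=\lambda([p_*(z)])=0$. I then extend $\mu$ to $|\tilde B|$ by barycentric interpolation on each simplex; Proposition~\ref{notglued} guarantees each simplex has distinct vertices, and consistency on common faces is automatic because the interpolation formula restricts correctly.

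Finally, both properties follow directly. Property (\ref{mu_prop_1}) is immediate: $\pi_E(\mu(|v_2|)-\mu(|v_1|))=\pi_E(\tilde\rho_{\tilde B}(c_\gamma))=\rho(p(v_1\dots v_2))$. For (\ref{mu_prop_2}), if $p(t_1)=p(t_2)$ then $t_2=g_l t_1$ for a unique deck translation $g_l$ with $l\in\mathcal{L}$; for any vertex $v$ of $t_1$ and any chain $c'$ from $v$ to $g_l v$, the loop $p_*(c')$ has monodromy $l$, so $\tilde\rho(p_*(c'))=\lambda([p_*(c')])=l$, forcing $\mu(g_l v)-\mu(v)=l$ at every vertex of $t_1$. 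Affine interpolation then gives $\mu(|t_2|)=\mu(|t_1|)+l$. The only real subtlety is the compatibility step in the construction of $\tilde\rho$: imposing the cycle-restriction together with the lift-of-$\rho$ requirement relies precisely on the commutative triangle (\ref{pi_L}), without which no such $\mu$ could exist.
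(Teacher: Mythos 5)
Your proof is correct and follows essentially the same route as the paper: you build $\tilde\rho\colon C_1(B)\to V$ lifting $\rho$ through $\pi_E$ and agreeing with $\lambda$ on cycles (your splitting $C_1(B)=Z_1(B)\oplus T$ via freeness of $\partial C_1(B)$ coincides with the paper's $C_1(B)=N\oplus A$, since $Z_1(B)$ is already saturated in $C_1(B)$), define $\mu$ on vertices by path integration with well-definedness supplied by Proposition \ref{right-exact}, extend barycentrically, and verify (\ref{mu_prop_2}) via the deck action where the paper checks it edge by edge. One small slip: Proposition \ref{notglued} does \emph{not} guarantee that a simplex has distinct vertices --- gluing by vertices is precisely what remains allowed --- but the barycentric extension is well defined regardless, so nothing in your argument breaks.
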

\begin{proof}
	Let us first construct a map of $\Z\mbox{-modules}$ $\tilde{\rho}: C_1(B) \to V$ such that $\pi_E \circ \tilde\rho = \rho$ and the restriction of $\tilde\rho$ on the group of 1-cycles $Z_1(B) \subset C_1(B)$ coincides with $\lambda$ on the corresponding homology classes. The second condition unambiguously defines the continuation of $\tilde{\rho}$ to the free $\Z\mbox{-module}$ $N=C_1(B) \bigcap \left( Z_1(B) \otimes \Q \right)$. The quotient $C_1(B)/N$ is a free $\Z\mbox{-module}$ and $C_1(B)$ splits (not naturally) as $C_1(B)=N \oplus A$ where $A \simeq C_1(B)/N$. Since $\pi_E$ is surjective and $A$ is free, one can always define $\tilde\rho$ on $A$ in such a way that $\pi_E \circ \tilde\rho\big|_A =\rho \big|_A$. This extends $\tilde\rho$ to the entire $\Z\mbox{-module}$ $C_1(B)$. 
	\par
	We shall construct the map $\mu: |\tilde{B}| \to V$ by defining it on vertices of $|\tilde{B}|$ and then continue it to the interiors, using barycentric coordinates. Let us start by choosing a vertex $v_0 \in \tilde{B}$ and a point $y_0 \in V$ and setting 
	$$
	\mu: |v_0| \mapsto y_0.
	$$
	Then for any other vertex $v \in \tilde{B}$ we set
    \begin{equation}
    \label{def_mu}
	\mu: |v| \mapsto y_0+\tilde\rho(p(v_0\dots v)),
    \end{equation}
	where $v_0\dots v$ is an edge-path on $\tilde{B}$ connecting $v_0$ and $v$. To check that this expression is well defined, consider two edge-paths $c_1$ and $c_2$ connecting $v_0$ with $v$. In this case, the chain $c_1-c_2$ is a cycle and $\lambda(p(c_1-c_2))=0$ by Proposition \ref{right-exact}. On the other hand, since $\tilde\rho$ and $\lambda$ coincide on cycles, $\tilde\rho(p(c_1-c_2))=0$ and the expression (\ref{def_mu}) does not depend on the choice of the edge-path $v_0\dots v$. Finally, as $\pi_E \circ \tilde\rho = \rho$,
	$$
	\pi_E(\mu(|v|))=\pi_E(y_0)+\rho(p(v_0\dots v)),
	$$
	which yields (\ref{mu_prop_1}).
	\par
	To check the property (\ref{mu_prop_2}) it suffices to verify it on all edges of $\tilde{B}$. Let $[a_1b_1]$ and $[a_2b_2]$ be the edges of $\tilde{B}$ such that $p([a_1b_1])=p([a_2b_2])$. Consider an edge-path $a_1\dots a_2$. Since $p(b_1a_1\dots a_2b_2) = p(a_1\dots a_2) \in Z_1(B)$, 
	$$
	\mu(|a_1|)-\mu(|a_2|)=\mu(|b_1|)-\mu(|b_2|)= l \in \mathcal{L},
	$$ 
	therefore $\mu(|[a_1b_1]|)=\mu(|[a_2b_2]|)+l$ for some $l \in \mathcal{L}$.
\end{proof}
It is important to note that the map $\mu$ constructed in Proposition \ref{prop_mu} is by no means unique. In addition to an arbitrary choice of the origin $y_0$, for any vertex $v \in B$ one can shift the value of $\mu$ on its entire preimage $p^{-1}(v)$ by an arbitrary vector $x_v\in \ker(\pi_E)$.
\par
The property (\ref{mu_prop_2}) means that one can factor the map $\mu$ by the action of the translations of the lattice $\mathcal{L}$. Namely, if $q: V \to V/\mathcal{L}$ is the projection of $V$ to its quotient $V/\mathcal{L}=\T^n$ (where $n$ stands for the rank of $\mathcal{L}$), then for any simplices $t_1, t_2 \in \tilde{B}$ such that $p(t_1)=p(t_2)$ their images in $\T^n$ coincide: $q(\mu(|t_1|))=q(\mu(|t_2|))$. In other words, there exists a continuous map $\beta: |B| \to \T^n$, making the following diagram is commutative:
\begin{equation}
\label{square}
\xymatrix{
	|\tilde{B}| \ar[r]_\mu \ar[d]^p& V \ar[d]^q \\
	|B| \ar[r]^\beta & \T^n}
\end{equation}
\begin{proposition}
	For any $d\mbox{-dimensional}$ simplex $t \in \tilde{B}$, $\pi_E(\mu(|t|))$ is a translated copy of the corresponding prototile $\alpha_{p(t)}(p(t))$:
	$$
	\pi_E(\mu(|t|))=\alpha_{p(t)}(p(t))+\tau_t,
	$$
	for some $\tau_t \in E$.
\end{proposition}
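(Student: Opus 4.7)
The plan is to exploit the fact that both $\mu$ and $\alpha_s$ are constructed by affine extension from their values on vertices, so the claim reduces to comparing images of vertices.

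First I would set $s = p(t) \in B$ and pick a distinguished vertex $v_0$ of $t$, letting $e_1,\dots,e_d$ denote the edges of $t$ originating at $v_0$. Because $\mu$ was defined on the interior of each simplex by linear extension in barycentric coordinates (from its values on the vertices), and $\pi_E$ is an $\R$-linear map, the composition $\pi_E \circ \mu$ restricted to $|t|$ is an affine map whose image is the affine $d$-simplex spanned by the points $\pi_E(\mu(|v_0|)),\pi_E(\mu(|v_1|)),\dots,\pi_E(\mu(|v_d|))$, where $v_1,\dots,v_d$ are the remaining vertices of $t$.

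Next, I would compute these vertices using property (\ref{mu_prop_1}) of Proposition \ref{prop_mu}. For each $i = 1,\dots,d$, taking the single-edge path $v_0\dots v_i$ along $e_i$ gives
$$
\pi_E(\mu(|v_i|)) - \pi_E(\mu(|v_0|)) = \rho(p(e_i)).
$$
Now $p(e_1),\dots,p(e_d)$ are exactly the edges of $s$ originating at the image vertex $p(v_0)$, and the second clause of Definition \ref{FBS} guarantees that their images under $\rho$ are linearly independent, so the simplex spanned by the $\pi_E(\mu(|v_i|))$ is genuinely $d$-dimensional.

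Finally, I would recall that the prototile $\alpha_s(|s|)$ is by construction the affine simplex with vertex set $\{x, x+\rho(p(e_1)),\dots,x+\rho(p(e_d))\}$ for an arbitrarily chosen $x \in E$; since this $x$ is arbitrary, we may take $x = \pi_E(\mu(|v_0|))$, and then the two affine simplices coincide on the nose. Setting $\tau_t = \pi_E(\mu(|v_0|)) - x$ for any other choice of $x$ expresses $\pi_E(\mu(|t|))$ as a translate of $\alpha_s(|s|)$. The property (\ref{rho_d}) ensures that this is independent of which vertex of $t$ we took as the distinguished one. The only point requiring minor care is that the linear extension of $\mu$ over $|t|$ really does produce the full affine simplex as its image (and not some degenerate subset), which is precisely what the linear independence condition in Definition \ref{FBS} provides; I do not expect any serious obstacle beyond bookkeeping.
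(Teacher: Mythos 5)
Your proof is correct and follows essentially the same route as the paper's own (very terse) argument: apply property (\ref{mu_prop_1}) to the edges of $t$ emanating from a chosen vertex and match the resulting vertex set against the definition (\ref{alpha_s}) of the prototile map, using the arbitrariness of the base point $x$ to absorb the translation. Your additional remarks on non-degeneracy via the linear-independence clause of Definition \ref{FBS} and on vertex-independence via (\ref{rho_d}) are consistent with the paper's construction and fill in the details the authors leave implicit.
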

\begin{proof}
	The result follows from the application of the property (\ref{mu_prop_1}) to the edges of $t$ and from the definition of the prototile map (\ref{alpha_s}).
\end{proof}
We are now ready to construct the lift of the isometric windings into $V$.
\begin{proposition} \label{phi}
	Let $f: E \to |B|$ be an isometric winding and  $\tilde f$ be a lift of $f$ (see Proposition \ref{lift_f}). Then with an appropriate choice of the origin $y_0$ in the construction of the map $\mu$, the following holds: 
	$$
	\pi_E \circ \mu \circ \tilde f = \mathrm{id}_E.
	$$ 
\end{proposition}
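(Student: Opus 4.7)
The plan is to show that $g := \pi_E \circ \mu \circ \tilde f$ is the identity on $E$ in three stages: first pin it down at one vertex by choosing $y_0$ appropriately, then extend to all vertices of the tiling $\mathcal{T}$ associated to $f$ using the edge-path property (\ref{mu_prop_1}), and finally extend to all of $E$ by simplex-wise affineness.

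Fix a vertex $a_0 \in E$ of $\mathcal{T}$. In the construction of Proposition \ref{prop_mu}, take the base vertex of $\tilde B$ to be $v_0 := \tilde f(a_0)$ and choose $y_0 \in V$ with $\pi_E(y_0) = a_0$, which is possible because $\pi_E$ is surjective by Proposition \ref{spanning}. Then $\mu(\tilde f(a_0)) = y_0$, whence $g(a_0) = a_0$. For any other vertex $b$ of $\mathcal{T}$, pick an edge-path $a_0 \dots b$ in $\mathcal{T}$. Since an isometric winding sends vertices to vertices and edges to edges, $\tilde f(a_0\dots b)$ is an edge-path in $\tilde B$ joining $v_0$ to $\tilde f(b)$. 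Applying (\ref{mu_prop_1}) together with the previously established identity $\rho(f(a_0\dots b)) = b - a_0$ gives
$$g(b) - g(a_0) = \pi_E\bigl(\mu(\tilde f(b)) - \mu(v_0)\bigr) = \rho\bigl(p(\tilde f(a_0\dots b))\bigr) = \rho(f(a_0\dots b)) = b - a_0,$$
so $g(b) = b$ at every vertex of $\mathcal{T}$.

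To extend to an arbitrary $x \in E$, choose a tile $\sigma_{s,i}$ of $\mathcal{T}$ containing $x$, with vertices $a_0', \dots, a_d'$, and write $x = \sum_j \lambda_j a_j'$ in barycentric coordinates. By (\ref{translated_prototile}) the map $f$ sends $\sigma_{s,i}$ onto $|s|$ by the affine identification of barycentric coordinates provided by $\alpha_s$; since $p$ restricts to a homeomorphism on each closed $d$-simplex of $|\tilde B|$, the lift $\tilde f$ likewise sends $\sigma_{s,i}$ onto a single closed $d$-simplex $|\tilde s|$ of $|\tilde B|$, preserving barycentric coordinates and taking each $a_j'$ to a vertex $\tilde a_j'$. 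Because $\mu$ is defined on $|\tilde s|$ by affine interpolation from its values at the vertices $\tilde a_j'$ and $\pi_E$ is linear, $g|_{\sigma_{s,i}}$ is affine and takes $a_j'$ to $a_j'$ for every $j$; hence $g$ is the identity on $\sigma_{s,i}$. Since the tiles cover $E$, $g = \mathrm{id}_E$.

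The main point to verify carefully is that $\tilde f$ genuinely behaves affinely on each tile in a manner compatible with the barycentric interpolation defining $\mu$; this reduces to the analogous affine behaviour of $f$ from (\ref{translated_prototile}) and the fact that $p$ is a local homeomorphism over the interior of every $d$-simplex of $B$, so the continuous lift $\tilde f$ cannot straddle distinct simplices over a single open tile.
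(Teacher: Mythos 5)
Your proof is correct and follows essentially the same route as the paper's: choose $y_0$ with $\pi_E(y_0)=\tilde f(a_0)$'s image at $a_0$, propagate the identity to all vertices of $\mathcal{T}$ via edge-paths and property (\ref{mu_prop_1}), and extend to the interiors of tiles by the preservation of barycentric coordinates. Your extra care in the last step (checking that $\tilde f$ is affine tile-by-tile because $p$ is a homeomorphism over each open $d$-simplex) merely makes explicit what the paper states as "points inside each tile keep their barycentric coordinates."
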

\begin{proof}
    Let $\mathcal{T}$ be the tiling of $E$ defined by the isometric winding $f$ and $a_0 \in E$ one of the vertices of $\mathcal{T}$. Then we can use $\tilde f(a_0)$ as the vertex $v_0$ in the construction of the map $\mu$ in Proposition \ref{def_mu}. Let us choose the origin $y_0 \in V$ in such a way that $\pi_E(y_0) = a_0$. Then $\pi_E(\mu(\tilde f(a_0)))=a_0$, and it remains to show that this identity also holds for any point $x \in E$. Consider a sequence of vertices $a_0,\dots, a_n$ of $\mathcal{T}$ such that any two consecutive elements are connected by an edge of a tile. By the property (\ref{mu_prop_1}) of the map $\mu$
	$$
	\pi_E(\mu(\tilde f(a_i))-\mu(\tilde f(a_{i-1}))) = \rho(p([\tilde f(a_{i-1}), \tilde f(a_i)]))=\rho([f(a_{i-1}), f(a_i)]).
	$$
	By the property (\ref{translated_prototile}) of isometric windings, $\rho([f(a_{i-1}), f(a_i)])=a_i-a_{i-1}$ and since any two vertices of $\mathcal{T}$ can be connected by a sequence of edges, $\pi_E \circ \mu \circ \tilde f$ maps each vertex of $\mathcal{T}$ to itself. On the other hand, by construction of the maps $\tilde f$ and $\mu$, points inside each tile keep their barycentric coordinates, therefore $\pi_E \circ \mu \circ \tilde f = \mathrm{id}_E$.
\end{proof}
The following commutative diagram summarizes the lifting of the simplicial tiling associated with the isometric winding $f$:
$$
\xymatrix{
	E \ar@/_/[ddr]_f \ar[dr]^(.6){\tilde f} \ar[r]^{\mathrm{id}_E}& E &\\
	& |\tilde{B}| \ar[r]_\mu \ar[d]^p& V \ar[d]^q \ar[lu]_{\pi_E}\\
	& |B| \ar[r]^\beta & \T^n}
$$
It should be noted that the choice of the lattice $\mathcal{L}$ (and the projection $\pi_L$) in (\ref{pi_L}) is in general not unique. More precisely, this choice is fixed by the choice of the $\Z\mbox{-submodule}$ $\ker(\lambda)$ in $\ker(\rho_*)$, or equivalently by the choice of the rational subspace
$$
\ker(\lambda \otimes_\Z \Q) \subseteq \ker(\rho_* \otimes_\Z \Q).
$$
The lifting constructed in the case where $\ker(\lambda \otimes_\Z \Q) = 0$ is the maximal (or universal) one, since any other lifting can be obtained as its quotient. The opposite case of the minimal lifting corresponds to the situation when $\pi_L$ is an isomorphism. Let us illustrate these constructions by two examples.
\subsection{Examples}\label{sec:examples}
\subsubsection*{60 degrees rhombi tiling}
The tiling of plane by rhombi with 60 degrees angle is often used to illustrate the idea of lifting, for instance in the context of random tiling models \cite{henley1999random}. Perhaps the most natural way to transform this tiling into a simplicial one consists in cutting each rhombus by its short diagonal. The resulting new edges should be labeled in order to make the operation reversible. This yields 6 different prototiles forming, after gluing together, an FBS-complex $B$ of 6 two-dimensional and 6 one-dimensional cells, and one vertex.
\par
The $\Z\mbox{-module}$ $L$ is generated by the edges of the tiling and thus coincides with the triangular periodic lattice of tile vertices in $E$. Therefore, $\mathrm{rank}(L)=2$ and the minimal lifting corresponds to the degenerated situation where $V=E$. On the other hand, since $|B|$ in this example is homeomorphic to the 2-skeleton of the three-dimensional torus, $H_1(B, \Z)=\Z^3$. Hence, the maximal lifting requires $\dim V=3$ (see e.g., Fig 2c in \cite{henley1999random}).
\subsubsection*{Penrose tiling}
The conventional triangulation of the rhombic Penrose tiling consists in cutting the thin rhombi by the short diagonal and the thick ones by the long diagonal. The result is shown on the Figure \ref{fig:penrose_triangles} together with de Bruijn decorations \cite{de1990updown}. The action of the ten-fold rotational and mirror symmetry generates an orbit of 20 prototiles for each of the two triangles. The resulting FBS-complex thus has 40 two-dimensional and 40 one-dimensional cells, and 4 vertices.
\begin{figure}[ht]
	\centering
	\includegraphics[width=0.8\linewidth]{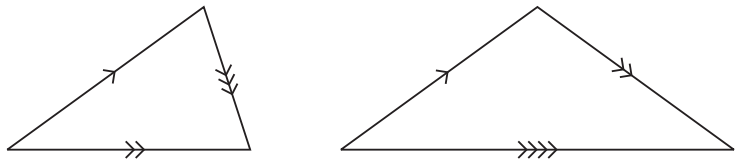}
	\caption{Tiles of the rhombic Penrose tiling cut in halves with de Bruijn decorations.}
	\label{fig:penrose_triangles}
\end{figure}
\par
The $\Z\mbox{-module}$ $L$ is generated by the edges of the triangles and has rank 4. Therefore the standard lifting of the rhombic Penrose tiling, using the lattice $\mathcal{L}=A_4$ (see \cite[Example~7.11]{baake2013aperiodic} and the citations therein), is actually the minimal one. 
\par
It is notable that the FBS-complex $B$ obtained by the triangulation shown on the Figure \ref{fig:penrose_triangles} coincides with the Anderson-Putnam complex for the Penrose tiling used in the computation of the cohomology of the tiling space in \cite[Section~2.4.1]{sadun}, yielding in particular $H^1(B, \R)=\R^5$. Therefore, $\mathrm{rank}(H_1(B, \Z))=5$ and the maximal lifting of the Penrose tiling requires a $5\mbox{-dimensional}$ vector space. It should be emphasized that this lifting is {\em not equivalent} to the frequently used non-minimal embedding of the Penrose tiling in $\R^5$ \cite[Remark~7.8]{baake2013aperiodic}. Indeed, since the extra dimension corresponds to the pattern equivariant integral 1-cocycle defined by the single arrows of the de Bruijn decorations, the fifth coordinate of the lifted tiling is given by the Sutherland's arrow counting function \cite{sutherland1986self}. While the fifth coordinate of the non-minimal embedding takes only four different values, the arrow counting function is unbounded (it grows logarithmically with the distance of the tiling plane). 
\subsection{Weak and minimal matching rules}
One of the most intriguing problems in the theory of aperiodic order is that of understanding how constraints on local arrangement of tiles may result in formation of a long range order (periodic, quasiperiodic, limit-periodic, etc) of the entire tiling. There exists two different approaches to the description of such local constraints: the language of {\em local rules} and that of {\em matching rules}. The former operates with local atlases (the sets of allowed finite patches of the tiling \cite{levitov1988local}), while the latter makes use of matching decorations on neighboring tiles, following the original Wang's formulation of the domino problem \cite{wang1961proving}. In tilings of finite local complexity it is always possible to derive matching rules from the local atlas. More precisely, for a given set of local rules, one can decorate the tiles in such a way that  the set of all tilings satisfying these local rules is in one-to-one correspondence with the set of decorated tilings with matching decorations. The na\"ive converse of this statement is not valid, that is, for the set of all tilings satisfying given matching rules, once the decorations are erased, they cannot be generally recovered from the local environments of undecorated tiles only (as a counterexample one can consider the octagonal Ammann-Beenker tiling \cite{katz1995matching}). In this sense, the relation between matching rules and local rules is similar to that between sofic subshifts and subshifts of finite type in the field of symbolic dynamics (see \cite{fernique2016weak} for a discussion). 
\par
It should be emphasized that the distinction between two types of rules is irrelevant for the problem of emergence of long range order in real physical systems such as quasicrystals. Indeed, the shape of tiles and the color of decorations are purely conventional and can be chosen in many different ways as long as the resulting tilings remain {\em mutually locally derivable} with respect to the actual structure of the material. In this respect, the erasure of decorations is not an innocuous operation as it may alter the MLD class of the tiling. On the other hand, it is always possible to encode the erased decorations in deformations of the tiles. Therefore, both languages --- that of local rules and that of matching rules --- are equally efficient as representations of local constraints on the atomic order in solids. However, the description of tilings in terms of FBS-complexes and isometric windings naturally yields the matching type constraints, as illustrated by Propositions \ref{from_f_to_T} and \ref{from_T_to_f}. For this reason we shall use in this paper the terminology of matching rules. 
\par
The matching rules are often classified by their ``strength'', with the idea that stronger rules impose more stringent constraints on the tiling and thus admit 
smaller sets of tilings. Since no matching rules can discriminate between two tilings which are locally indistinguishable, the strongest possible rules are the ones admitting the tilings belonging to precisely one class of local indistinguishability (see \cite[Chap.~5]{baake2013aperiodic} for definition). Following \cite{socolar1990weak}, such rules, if they exist for a given tiling, are called {\em perfect matching rules}. This terminology suggests that all other rules are somewhat less perfect and therefore flawed. We argue, however, that from the physical point of view using the strength of matching rules as a {\em figure of merit} does not make much sense. Indeed, so far no direct experimental evidence has been given in favor of the hypothesis that atomic interactions in real quasicrystals somehow ``implement'' perfect matching rules. The perfect models are thus preferred because they are thought to be conceptually simpler, in the same sense as a perfect periodic lattice is simpler than the arrangement of atoms in a real crystals. However, this abstract argument can be turned around, since the existing models of perfect matching rules are themselves quite sophisticated. More importantly, this sophistication may be unnecessary to explain the main distinctive feature of quasicrystals --- their diffraction pattern.
\par
Historically, the first models of quasicrystalline structures with realistically looking diffraction patterns were obtained by the cut-and-project schemes \cite{elser1985indexing,duneau1985quasiperiodic,kalugin19850}. If a lifted simplicial tiling fits in such a scheme, the natural candidate for the ``inner'' space is the kernel of the projection $\pi_E$, which we denote by $F$:
$$
\xymatrix{F \ar[r]^{\iota_F} & V \ar[r]^{\pi_E} & E}.
$$
To complete the construction of the cut-and-project scheme \cite[Chap.~7.2]{baake2013aperiodic}, one needs a projection $\pi_F: V \to F$ splitting the above short exact sequence:
\begin{equation}
\label{split}
\xymatrix{
	F\ar@/^/[r]^{\iota_F} & V  \ar@/^/[r]^{\pi_E} \ar@/^/^{\pi_F}[l] & E \ar@/^/^{\iota_E}[l]
	},
\end{equation}
where $\pi_E \circ \iota_E = \mathrm{id}_E$ and $\pi_F \circ \iota_F = \mathrm{id}_F$, in other words
$$
V=E \oplus F
$$
With these notations, Theorem 9.4 of \cite{baake2013aperiodic} immediately yields the following:
\begin{proposition}\label{diffr}
	Let $B$ be a $d\mbox{-dimensional}$ FBS-complex. Then, if for an isometric winding $f: E \to B$ and a point $x \in B$, the set $f^{-1}(x)$ is a regular model set, the diffraction measure of $f^{-1}(x)$ is a pure point measure of the form
    \begin{equation}
    \label{difmeasure}
	\sum_{l^* \in \mathcal{L}^*} I(l^*)\delta_{\iota_E^\top(l^*)}
    \end{equation}
\end{proposition}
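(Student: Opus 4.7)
The plan is to recognize Proposition \ref{diffr} as a direct specialization of the Fourier-theoretic result on regular model sets, after matching the geometric data $(V, \mathcal{L}, E, F, \pi_E, \pi_F)$ produced by the lifting of Section \ref{sec:lifting} against the standard cut-and-project scheme of \cite[Chap.~7]{baake2013aperiodic}. First I would observe that the split short exact sequence (\ref{split}) is exactly the algebraic content of a cut-and-project setup: $E$ is the physical space, $F$ is the internal space, $V=E\oplus F$ is the total space, and $\mathcal{L}\subset V$ is a lattice since $V=\mathcal{L}\otimes_{\mathbb{Z}}\mathbb{R}$ by construction. Given the hypothesis that $f^{-1}(x)$ is a regular model set, the window in $F$ has the regularity properties (relative compactness, non-empty interior, boundary of Haar measure zero) required by the cited theorem.

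Next I would invoke \cite[Thm.~9.4]{baake2013aperiodic} verbatim. It states that the diffraction of a regular model set is a pure point measure whose Bragg peaks lie at the projection to $E^\ast$ of the dual lattice $\mathcal{L}^\ast\subset V^\ast$, with intensities $I(l^\ast)$ expressed in terms of the Fourier transform of the characteristic function of the window. The only notational translation needed is to verify that the projection in the cited theorem coincides with $\iota_E^\top:V^\ast\to E^\ast$, the transpose of the inclusion $\iota_E:E\hookrightarrow V$. This is immediate: for $l^\ast\in\mathcal{L}^\ast$ and $a\in E$ one has $\langle \iota_E^\top(l^\ast),a\rangle=\langle l^\ast,\iota_E(a)\rangle$, so $\iota_E^\top(l^\ast)$ is precisely the restriction of the linear form $l^\ast$ to the physical subspace, which is the standard projection of dual-lattice vectors yielding the positions of Bragg peaks.

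Assembling these observations produces the formula (\ref{difmeasure}). The only non-trivial obstacle would have been to show that $f^{-1}(x)$ is itself a regular model set in the technical sense required by \cite[Thm.~9.4]{baake2013aperiodic}, but this is assumed as hypothesis of the proposition; in particular, the constructions of Section \ref{sec:lifting} already provide the compatible embedding $\mu\circ\tilde f$ of the vertex set of $\mathcal{T}$ into $V$ with respect to the lattice $\mathcal{L}$, so that $f^{-1}(x)$ inherits the model-set structure directly from the lifting rather than having to be reconstructed by hand. The proof therefore amounts to citing the theorem once the identifications above are made.
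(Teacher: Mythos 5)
Your proposal matches the paper exactly: the paper offers no separate proof but simply states that, with the splitting (\ref{split}) identified as a cut-and-project scheme, Theorem 9.4 of \cite{baake2013aperiodic} ``immediately yields'' the proposition. Your additional verification that the Bragg-peak positions are given by $\iota_E^\top(l^*)$ via $\langle \iota_E^\top(l^*),a\rangle=\langle l^*,\iota_E(a)\rangle$ is a correct and harmless elaboration of the same one-line argument.
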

The coefficients $I(l^*)$ in (\ref{difmeasure}) give the intensities of the Bragg peaks in the diffraction pattern, which are located at the following positions in the reciprocal space:
\begin{equation}
\label{kmodule}
\mathbf{k}_{l^*} = \iota_E^\top(l^*)\qquad \text{ for } l^* \in \mathcal{L}^*. 
\end{equation}
It is worth remarking here that the positions of the Bragg peaks in (\ref{kmodule}) reflect the global order of the model set defined by the actual lattice $\mathcal{L}$ and the splitting (\ref{split}) used in its construction, while the $\Z\mbox{-module}$ $L=\rho_*(H_1(B))$ depends on the local metric properties of the tiles. The experimental diffraction data provide us primarily with the values of $\mathbf{k}_{l^*}$, thus effectively fixing the lattice $\mathcal{L}$ and making the freedom of choice of the lifting (minimal vs maximal) irrelevant in real physical applications. 
\par
For an isometric winding $f$ satisfying the conditions of Proposition \ref{diffr}, the points of the set $\mu(\tilde f(f^{-1}(x))) \subset V$ are located within a finite distance of the subspace $\iota_E(E) \subset V$. What can be said about the diffraction measure of $f^{-1}(x)$ if it is not a regular model set, but this condition still holds? Contrarily to a common belief, besides the possible occurrence  of continuous or singular continuous contribution, the pure point part of the diffraction measure may also be altered. Indeed, the lifted tiling may exhibit a different periodicity than the lattice $\mathcal{L}$, in which case the diffraction measure will contain Bragg peaks not belonging to the module (\ref{kmodule}). One can imagine an even more complicated situation, for instance, a limit-periodic pattern on the top of the underlying quasiperiodic tiling. However, none of these phenomena is observed experimentally. What is observed, however, is that under some circumstances (e.g., rapid growth), the Bragg peaks (\ref{kmodule}) are slightly shifted in a way compatible with a slight variation of the slope of the subspace $\iota_E(E)$ \cite{nagao2015experimental}. Such variation is known in the physical literature as a ``phason strain''. This phenomenon is related to the observation of the so-called approximant phases. These crystalline phases have chemical composition and local environments very similar to those of the parent quasicrystal. It is often possible to describe their structure by the cut-and-project method with a rational slope of the cut \cite{quivy1996cubic}, in this sense they may be considered as an extreme manifestation of the phason strain. And conversely, experimentalists assess the quality of the quasicrystalline specimens by the diffraction pattern showing no visible phason strain. 
\par
The cut-and-project schemes do not lend themselves well to the description of a non-uniform phason strain. A more appropriate way consists in considering the lifted tiling as a graph of the function $\varphi: E \to F$ defined as:
\begin{equation}
\label{phason}
\varphi = \pi_F \circ \mu \circ \tilde f
\end{equation}
In the physical literature, the function $\varphi$ is often referred to as a local phason coordinate. The phason strain is characterized by the behavior of $\varphi$ on the scale of distances much larger than the size of tiles. In contrast, the short-range features of $\varphi$ are physically irrelevant, since they depend on arbitrary choices made in the construction of the map $\mu$ in Proposition \ref{prop_mu}. On the small scale, we shall only need the following technical property of $\varphi$: 
\begin{proposition}\label{lipschitz}
	The function $\varphi$ is uniformly Lipschitz continuous. 
\end{proposition}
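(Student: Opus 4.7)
The plan is to deduce the Lipschitz property of $\varphi=\pi_F\circ\mu\circ\tilde f$ from a uniform Lipschitz bound on $\mu\circ\tilde f\colon E\to V$ and then to apply boundedness of the linear projection $\pi_F$. The key point I would exploit is that $\mu\circ\tilde f$ is piecewise affine on the tiling $\mathcal{T}$ from Proposition \ref{from_f_to_T}, with only finitely many possible linear parts.

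First I would verify the piecewise affineness. On the interior of any tile $\sigma_{s,i}$ of $\mathcal{T}$, the isometric winding $f$ is an affine homeomorphism onto $|s|^\circ$ by (\ref{translated_prototile}); its lift $\tilde f$ sends $\sigma_{s,i}^\circ$ homeomorphically onto the interior of a unique $d$-simplex $\tilde t\in\tilde{B}$ with $p(\tilde t)=s$, preserving barycentric coordinates since $p$ is a semi-simplicial covering; and $\mu$ is affine on $|\tilde t|$ by its construction in Proposition \ref{prop_mu} via barycentric interpolation of vertex values. The composition $\mu\circ\tilde f$ is therefore affine on each $\sigma_{s,i}$, and continuous across tile boundaries because its two factors are continuous.

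Next I would bound the linear parts uniformly. By property (\ref{mu_prop_2}), any two simplices of $\tilde{B}$ lying over the same $s\in B$ have $\mu$-images that are translates of each other by elements of $\mathcal{L}\subset V$; in particular, their edge vectors in $V$ coincide. Hence the linear part of $\mu\circ\tilde f\big|_{\sigma_{s,i}}$ depends only on $s\in B$, not on $i$ nor on the choice of lift $\tilde t$. Since $B$ has only finitely many $d$-simplices, these linear parts admit a common operator-norm bound $K<\infty$.

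Finally, I would convert this into a global Lipschitz estimate. For $x,y\in E$, the segment $[x,y]$ meets only finitely many tiles of $\mathcal{T}$, thanks to the Delone/FLC properties established in Proposition \ref{from_f_to_T} together with the uniform boundedness of prototile diameters. Writing $x=z_0,z_1,\dots,z_n=y$ for the consecutive crossing points, one has $\sum_j|z_j-z_{j-1}|=|x-y|$, and affineness inside each tile combined with continuity at the interfaces gives
$$
\bigl|\mu(\tilde f(x))-\mu(\tilde f(y))\bigr|\le\sum_{j=1}^{n}\bigl|\mu(\tilde f(z_j))-\mu(\tilde f(z_{j-1}))\bigr|\le K\,|x-y|.
$$
Composing with the bounded linear map $\pi_F$ yields the Lipschitz constant $K\,\|\pi_F\|$ for $\varphi$. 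No step seems genuinely difficult; the only mildly delicate point is the finiteness of tile crossings on a segment, which follows transparently from the Delone property of vertices and the finite list of prototile shapes.
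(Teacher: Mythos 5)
Your proof is correct and follows essentially the same route as the paper: $\varphi$ is affine on each tile, property (\ref{mu_prop_2}) forces the linear part to depend only on the prototile type, and finiteness of the prototile set plus continuity yields the uniform Lipschitz bound. You merely spell out the segment-crossing argument that the paper leaves implicit.
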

\begin{proof}
	Recall that $\mu$ is defined first on the vertices of $|\tilde{B}|$ and then interpolated on the interior of the simplices by barycentric coordinates. Therefore, $\varphi$ is affine within each tile, and by the property (\ref{mu_prop_2}) of $\mu$, the linear term of $\varphi$ depends only on the type of the corresponding prototile. Since $\varphi$ is continuous and the number of prototiles is finite, $\varphi$ is uniformly Lipschitz continuous.
\end{proof}
\par 
We will be mostly interested in the situation where the geometry of the FBS-complex {\em locks the average slope} of the graph of $\varphi$ for all possible isometric windings: 
\begin{definition}
	A $d\mbox{-dimensional}$ FBS-complex $B$ is said to {\em represent minimal matching rules} if it admits an isometric winding and there exists a linear injective map $\iota_E: E \to V$ right-splitting the sequence (\ref{split}) such that for any isometric winding the phason coordinate (\ref{phason}) grows slower than linearly, that is $\|\varphi(x)\|=o(\|x\|)$ as $\|x\| \to \infty$ (for an arbitrarily chosen norm on the finite-dimensional vector space $F$).
\end{definition}
It is instructive to compare the case of minimal matching rules with the random tiling models. There exist strong arguments  \cite{henley1999random} in favor of the hypothesis that the phason coordinate for a typical random tiling remains bounded for $\dim E > 2$. However, the ensemble of random tilings always contains elements with linearly growing $\varphi$ (for instance, periodic tilings), although their weight tends to 0 in the thermodynamic limit. In contrast, in the model with minimal matching rules, $\varphi$ grows sublinearly for {\em every} allowed tiling. 
\par
The notion of weak matching rules, defined in \cite{levitov1988local} in the context of canonical projection tilings, is naturally generalized to the case of simplicial tilings:
\begin{definition}
	A $d\mbox{-dimensional}$ FBS-complex $B$ is said to {\em represent weak matching rules} if it represents minimal matching rules and for any isometric winding the phason coordinate (\ref{phason}) is globally bounded.
\end{definition}
\section{Main results}\label{sec:main}
The central result of this article consists in establishing a connection between matching rules represented by an FBS-complex $B$ and the properties of the map $\beta: |B| \to \T^n$, more specifically those of the corresponding direct map of homology groups:
$$
\beta_*: H_*(B) \to H_*(\T^n).
$$ 
We shall use the same symbol $\beta_*$ for the maps of homology groups with real coefficients, as long as this will not lead to confusion. We shall also use the natural linear map of $d\mbox{-multivectors}$ of $V^*$ to the space of differential $d\mbox{-forms}$ on $\T^n$:
$$
\gamma^\sharp: \bigwedge\nolimits^d V^* \to \Omega^d(\T^n)
$$
defined in such a way that $q^* \circ \gamma^\sharp$ maps an element of $\bigwedge^d V^*$ to the corresponding constant $d\mbox{-form}$ on $V$. Considering de Rham cohomology class of this form yields the canonical isomorphism:
\begin{equation}
\label{external}
\gamma^*: \bigwedge\nolimits^d V^* \to H^d(\T^n, \R),
\end{equation}
as well as its transpose
\begin{equation}
\label{gamma_star}
\gamma_*: H_d(\T^n, \R) \to \bigwedge\nolimits^d V
\end{equation}
(here we use the natural duality between homology and cohomology groups with coefficients in a field).
\par
The isomorphism $\gamma^*$ allows one to define the element of $H^d(\T^n, \R)$ corresponding to the volume form $\Omega_E \in \bigwedge^d(E^*)$. Since $E$ is an Euclidean space, it is natural to choose $\Omega_E$ normalized with its value on a unit cube equal to 1. Its pullback $\pi_E^\top(\Omega_E) \in \bigwedge^d V^*$ defines then an element $\omega_E \in H^d(\T^n, \R)$: 
$$
\omega_E = \gamma^*(\pi_E^\top(\Omega_E)).
$$ 
\subsection{Necessary and sufficient condition for minimal matching rules}\label{sec:conditions}
\begin{proposition}\label{limit_cycle}
	If an FBS-complex $B$ admits an isometric winding then there exists at least one element $c \in H_d(B, \R)$ satisfying $\omega_E(\beta_*(c))=1$ and represented by a cycle from the closed positive cone $Z_d^+(B, \R)$ in the group of cycles $Z_d(B, \R)$.
\end{proposition}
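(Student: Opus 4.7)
The plan is to construct $c$ by a van~Hove averaging of the tiling $\mathcal{T}=\{\sigma_{s,i}\}$ associated with $f$ through Proposition~\ref{from_f_to_T}. For each $R>0$, let $T_R\in C_d(\mathcal{T})^+$ be the simplicial chain obtained by summing, with coefficient $1$, all tiles of $\mathcal{T}$ entirely contained in the Euclidean ball $\mathcal{B}_R\subset E$ of radius $R$ centered at the origin. By Proposition~\ref{orientation} the push-forward $f_*(T_R)$ lies in $C_d^+(B,\R)$, so the normalized chain
$$
c_R=\frac{1}{\mathrm{vol}(\mathcal{B}_R)}\,f_*(T_R)\in C_d^+(B,\R)
$$
belongs to the closed positive cone. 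Since prototile volumes are uniformly bounded below by some $v_{\min}>0$, the number of tiles in $T_R$ is at most $\mathrm{vol}(\mathcal{B}_R)/v_{\min}$, yielding a uniform bound on the coefficients of $c_R$.

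Next I would verify that some subsequential limit of $c_R$ lies in $Z_d^+(B,\R)$. Compactness of bounded sets in the finite-dimensional space $C_d(B,\R)$ provides a sequence $R_n\to\infty$ and a limit $c\in C_d^+(B,\R)$ with $c_{R_n}\to c$. The boundary chain $\partial T_R$ is supported on the $(d-1)$-faces of $\mathcal{T}$ lying in the annular region $\overline{\mathcal{B}_R}\setminus \mathcal{B}_{R-r_2}^{\circ}$, where $r_2$ is the universal bound on the tile diameter provided by Proposition~\ref{from_f_to_T}. By FLC and the Delone property, the number of such faces grows as $O(R^{d-1})$, while $\mathrm{vol}(\mathcal{B}_R)\sim R^d$. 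Therefore $\|\partial c_R\|=O(1/R)\to 0$, and by continuity of $\partial$ the limit satisfies $\partial c=0$.

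It remains to evaluate $\omega_E(\beta_*([c]))$. Because $B$ is $d$-dimensional, $B_d(B,\R)=0$ and $H_d(B,\R)=Z_d(B,\R)$, so the pairing $z\mapsto\omega_E(\beta_*(z))$ extends to a continuous linear functional $\phi\colon C_d(B,\R)\to\R$; it suffices to compute it on a single positively oriented $d$-simplex $s$. Lifting $|s|$ to $|\tilde s|\subset|\tilde B|$ and using the commutative square (\ref{square}), $\beta(|s|)=q(\mu(|\tilde s|))$, so
$$
\phi([s])=\int_{\mu(|\tilde s|)}q^*\omega_E=\int_{\mu(|\tilde s|)}\pi_E^*\Omega_E=\int_{\pi_E(\mu(|\tilde s|))}\Omega_E=\mathrm{vol}(\sigma_s),
$$
where the first equality uses $q^*\omega_E=\gamma^\sharp(\pi_E^\top\Omega_E)=\pi_E^*\Omega_E$, and the last follows from $\pi_E\circ\mu$ restricting to an affine, orientation-preserving bijection of $|\tilde s|$ onto a translated copy of the prototile $\alpha_s(|s|)$. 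Hence $\phi(f_*([\sigma_{s,i}]))=\mathrm{vol}(\sigma_{s,i})$, and summing over tiles in $T_R$,
$$
\phi(f_*(T_R))=\sum_{\sigma\in T_R}\mathrm{vol}(\sigma)=\mathrm{vol}\!\Big(\bigcup_{\sigma\in T_R}\sigma\Big),
$$
a quantity sandwiched between $\mathrm{vol}(\mathcal{B}_{R-r_2})$ and $\mathrm{vol}(\mathcal{B}_R)$. Dividing by $\mathrm{vol}(\mathcal{B}_R)$ and passing to the limit, $\phi(c)=\lim_n\phi(c_{R_n})=1$, which is the desired identity $\omega_E(\beta_*([c]))=1$.

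The principal obstacle is the reconciliation of the continuous but non-simplicial map $\beta$ with the simplicial pairing in the last paragraph. The technical device is the lift through $\mu$ in the square (\ref{square}), which converts the awkward integral over $\beta(|s|)\subset\T^n$ into an integral of a constant differential form over an affine simplex in $V$. Everything else reduces to the standard surface-to-volume averaging argument inherent to FLC/Delone tilings.
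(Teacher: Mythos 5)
Your proposal is correct and follows essentially the same route as the paper's proof: van~Hove averaging of the pushed-forward patch chains, the surface-to-volume estimate to kill the boundary, and the sandwich $\mathcal{B}_{R-r_0}\subset|\mathcal{P}_R|\subset\mathcal{B}_R$ to evaluate the pairing with $\omega_E$. The only difference is that you spell out, via the lift through $\mu$ and the identity $q^*\circ\gamma^\sharp(\pi_E^\top\Omega_E)=\pi_E^*\Omega_E$, why $\omega_E(\beta_*(s))$ equals the prototile volume --- a step the paper leaves implicit.
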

\par
\begin{proof}
	Let $f: E \to B$ be an isometric winding and $\mathcal{T}$ be the corresponding tiling of $E$. We shall interpret $\mathcal{T}$ as an (infinite) simplicial complex and denote by $f_*$ the corresponding simplicial map to $B$. Consider a sequence of open balls $\mathcal{B}_r \subset E$ for $r \in \N^*$:
	$$
	\mathcal{B}_r = \left\{x \in E \mid \|x\|<r \right\}
	$$
	Let $\mathcal{P}_r \in C_d^+(\mathcal{T})$ denote the sum of all simplices of $\mathcal{T}$ contained entirely within $\mathcal{B}_r$. We shall denote the corresponding union of tiles by $|\mathcal{P}_r| \subset E$ (see Figure \ref{fig:lifting}). Both the volume of the patch $|\mathcal{P}_r|$ and that of its boundary grow with $n$, but the volume grows faster; we shall use this intuitive argument to construct the cycle $c$. Define the $d\mbox{-chain}$ $c_r \in C_d(B, \R)$ as
	\begin{equation}
	\label{c_r}
	c_r= \frac{f_*(\mathcal{P}_r)}{\Omega_E(\mathcal{B}_r)}.
	\end{equation}
	Note that the denominator in (\ref{c_r}) is just the volume of the ball $\mathcal{B}_r$. 
	\begin{figure}[hb]
		\centering
		\includegraphics[width=0.8\linewidth]{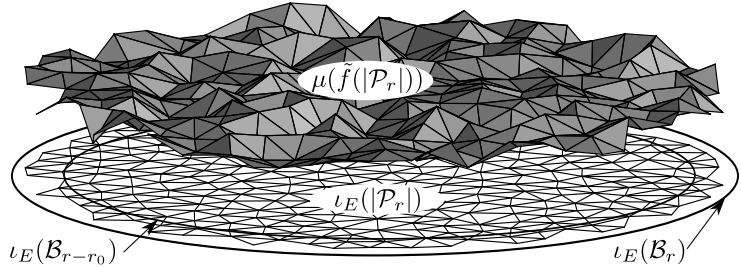}
		\caption{A sketchy representation of the relative position of $\iota_E(E)$ and $\mu(\tilde{f}(E))$ in the case $\dim(V)=3$ and $\dim(E)=2$. The lower part of the figure depicts the image of the tiling patch $|\mathcal{P}_r|$ and the balls $\mathcal{B}_r$ and $\mathcal{B}_{r-r_0}$. The upper part shows the lifted patch $\mu(\tilde{f}(|\mathcal{P}_r|))$.}
		\label{fig:lifting}
	\end{figure}
	\par
	By  Proposition \ref{orientation}, the coefficients of the chain $f_*(\mathcal{P}_r) \in C_d^+(B)$ are non-negative integers, which are equal to the number of copies of the corresponding prototiles in $\mathcal{P}_r$. Since the coefficients of the chains $c_r$ are uniformly bounded (by the biggest inverse volume of prototiles), the sequence $(c_r)$ has an accumulation point $c$ in the standard topology of the finite-dimensional vector space $C_d(B, \R)$. Let us show that $c$ is a cycle. Denote by $\|\cdot\|_1$ the $\ell^1$ norm on the groups of chains $C_*(B, \R)$ and $C_*(\mathcal{T}, \R)$, defined with respect to the basis of simplices. The linear map $f_*$ is contracting with respect to this norm, therefore 
	$$
	\|\partial c_r \|_1 \le \frac{\|\partial \mathcal{P}_r \|_1}{\Omega_E(\mathcal{B}_r)}.
	$$
	On the other hand, $\|\partial \mathcal{P}_r\|_1 \sim r^{d-1}$ and $\Omega_E(\mathcal{B}_r)\sim r^d$, hence $\lim_{r \to \infty} \|\partial c_r \|_1 =0$. Since $C_d(B, \R)$ is a finite dimensional space, the boundary operator is continuous in the norm topology, therefore $\partial c=0$ and $c$ is a cycle. The coefficients of the chains $c_r$ are non-negative, hence $c \in Z_d^+(B, \R)$. Since $C_{d+1}(B)=0$, the cycle $c$ is a unique representative of its class in $H_d(B, \R)$.	
	\par
	Denote the maximum of diameters of all prototiles by $r_0$. Since every point inside $\mathcal{B}_{r-r_0}$ is covered by a tile from $\mathcal{P}_r$, the following inclusions hold (see Figure \ref{fig:lifting}):
	$$
	\mathcal{B}_{r-r_0} \subset |\mathcal{P}_r| \subset \mathcal{B}_r
	$$
	and the volume of the patch $|\mathcal{P}_r|$ grows asymptotically as that of the ball $\mathcal{B}_r$. Therefore,
	$$
	\omega_E(\beta_*(c))) = \lim_{r \to \infty} \frac{\Omega_E(|\mathcal{P}_r|)}{\Omega_E(\mathcal{B}_r)}=1.
	$$
\end{proof}
\par
The coefficients of the cycle $c$ can be understood according to the formula (\ref{c_r}) as 
the average spatial densities of the corresponding tile species in $\mathcal{T}$. Although these quantities should be well defined for a physically relevant model, the construction of $c$ does not guarantee its uniqueness. Indeed, the result may depend on the choice of the isometric winding $f$ and the sequence $c_r$ may have several accumulation points. We shall see, however, that the condition of minimal matching rules fixes $\beta_*(c)$, even if $c$ is not uniquely defined. This result can be conveniently formulated in terms of the Pl\"ucker embedding of Grassmannian manifolds (see \cite{harris2013algebraic}): 
\begin{equation}
\label{plucker}
\psi: G(d, V) \to \PP\left(\bigwedge\nolimits^d V\right).
\end{equation}
\begin{proposition}\label{decomposable}
	If the FBS-complex $B$ represents minimal matching rules, then for any cycle $c\in H_d(B,\R)$, which is an accumulation point of the sequence (\ref{c_r}), the multivector $\gamma_*(\beta_*(c)) \in \bigwedge^d V$ is decomposable and the corresponding point on the Pl\"ucker embedding of $G(d, V)$ designates the subspace $\iota_E(E) \subset V$. In other words,
	$$
	\gamma_*(\beta_*(c))=\iota_E(x_1 \wedge \dots \wedge x_d)
	$$
	for some vectors $x_1, \dots,x_d \in E$.
\end{proposition}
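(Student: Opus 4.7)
The plan is to show that for every test form $\omega \in \bigwedge^d V^*$, the sequence $\omega(\gamma_*(\beta_*(c_r)))$ converges to $\omega(\iota_E(\Omega_E^\vee))$, where $\Omega_E^\vee \in \bigwedge^d E$ denotes the unit volume multivector dual to $\Omega_E$. Since any element of the finite-dimensional space $\bigwedge^d V$ is determined by its pairings with $\bigwedge^d V^*$, passing to the limit along the subsequence that converges to $c$ will force $\gamma_*(\beta_*(c)) = \iota_E(\Omega_E^\vee)$, a decomposable $d$-multivector whose Pl\"ucker class is that of $\iota_E(E)$, as desired.

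The first technical step is to rewrite the pairing as a geometric integral. Combining the maps from Propositions \ref{lift_f}, \ref{prop_mu} and \ref{phi}, I set $g = \mu \circ \tilde f : E \to V$. By Proposition \ref{phi} one has $\pi_E \circ g = \mathrm{id}_E$, so in the splitting (\ref{split}) the map $g$ is the piecewise-affine graph $x \mapsto \iota_E(x) + \iota_F(\varphi(x))$ of the phason coordinate $\varphi$. Since $\beta \circ f = q \circ g$ and $q^* \circ \gamma^\sharp$ sends $\omega$ to the constant $d$-form $\Omega_\omega$ on $V$, the pairing unfolds via the naturality of the cohomology pairing with simplicial chains into
\begin{equation*}
\omega(\gamma_*(\beta_*(c_r))) \;=\; \frac{1}{\Omega_E(\mathcal{B}_r)}\int_{|\mathcal{P}_r|} g^*(\Omega_\omega),
\end{equation*}
where the integral is the sum of simplexwise integrals of the piecewise-constant $d$-form $g^*(\Omega_\omega)$ on $E$.

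Next I split $\bigwedge^d V^* = \pi_E^\top(\bigwedge^d E^*) \oplus W$ with $W$ the sum of components carrying at least one factor in $\iota_F^\top(F^*)$. If $\omega = \pi_E^\top(\alpha)$ for $\alpha \in \bigwedge^d E^*$, then $g^*(\Omega_\omega) = \alpha$ thanks to $\pi_E \circ g = \mathrm{id}_E$, so the integral reduces to $\alpha(\Omega_E^\vee)\,\Omega_E(|\mathcal{P}_r|)$, and the volume estimate already used in Proposition \ref{limit_cycle} yields a limit of $\alpha(\Omega_E^\vee) = \omega(\iota_E(\Omega_E^\vee))$. For $\omega \in W$ I will choose a primitive $\eta \in \bigwedge^{d-1}V^*$ of $\Omega_\omega$ of the form $\eta = \omega_E \wedge \eta_F$, where $\omega_E$ is a constant $E$-form and $\eta_F$ is a standard Poincar\'e-type primitive of the $F$-part of $\omega$ whose coefficients are linear in the $F$-coordinate. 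Stokes' theorem applied cell-by-cell (with interior-face contributions cancelling because $\mathcal{P}_r$ is a chain with boundary $\partial\mathcal{P}_r$) then gives
\begin{equation*}
\int_{|\mathcal{P}_r|} g^*(\Omega_\omega) \;=\; \int_{\partial|\mathcal{P}_r|} g^*(\eta).
\end{equation*}
Pointwise, $\|g^*(\eta)(x)\| \leq C\|\varphi(x)\|\cdot\|d\varphi\|^{q-1}$ where $q \geq 1$ is the $F$-degree of $\omega$, and $\|d\varphi\|$ is uniformly bounded by Proposition \ref{lipschitz}. Since $\partial|\mathcal{P}_r|$ sits in the annulus $\mathcal{B}_r \setminus \mathcal{B}_{r-r_0}$ and has $(d-1)$-volume $O(r^{d-1})$, the minimality hypothesis $\|\varphi(x)\| = o(\|x\|)$ forces the boundary integral to be $o(r)\cdot O(r^{d-1}) = o(\Omega_E(\mathcal{B}_r))$, so the contribution from $W$ vanishes in the limit.

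The main obstacle I anticipate lies in the bookkeeping behind the Stokes argument: building a primitive $\eta$ whose pullback is genuinely linear in $\varphi$ (so the sublinearity of $\varphi$ is actually exploited rather than merely the boundedness of $d\varphi$), and verifying that the piecewise-affine regularity of $g$ suffices to apply Stokes cell-by-cell without any smoothing. Once these are in place, combining the two cases over all $\omega$ yields $\lim_r \omega(\gamma_*(\beta_*(c_r))) = \omega(\iota_E(\Omega_E^\vee))$, and passing to the accumulation point $c$ identifies $\gamma_*(\beta_*(c))$ with the Pl\"ucker point of $\iota_E(E)$.
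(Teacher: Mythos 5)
Your proposal is correct, and its skeleton --- test $\gamma_*(\beta_*(c_r))$ against every $\omega\in\bigwedge^d V^*$, rewrite the pairing as $\Omega_E(\mathcal{B}_r)^{-1}\int_{\mu(\tilde f(|\mathcal{P}_r|))}\omega$, and show the deviation from the flat value is $o(r^d)$ by a Stokes-type boundary estimate --- is exactly the paper's. Where you genuinely diverge is in how that Stokes step is executed. The paper does not decompose $\omega$: it glues a polyhedral collar $\Delta_r$, the image of $\partial|\mathcal{P}_r|\times[0,1]$ under $(x,t)\mapsto\iota_E(x)+t\varphi(x)$, between the lifted patch and its flat copy $\iota_E(|\mathcal{P}_r|)$, obtains $\int_{\mu(\tilde f(|\mathcal{P}_r|))}\omega-\int_{\iota_E(|\mathcal{P}_r|)}\omega=\int_{\Delta_r}\omega$ from the fact that the three pieces close up into a PL cycle, and bounds the collar term by its $d$-volume, $O(r^{d-1})\cdot\sup_{|\mathcal{P}_r|}\|\varphi\|=o(r^d)$. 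You instead split $\bigwedge^d V^*$ by $F$-degree, kill the pure-$E$ component via $\pi_E\circ g=\mathrm{id}_E$ and the volume asymptotics already used in Proposition \ref{limit_cycle}, and handle the components with at least one factor from $\pi_F^\top(F^*)$ (note: $\pi_F^\top$, not $\iota_F^\top$, in your definition of $W$) by an explicit primitive linear in the $F$-coordinate, reducing to an integral over $\partial|\mathcal{P}_r|$. The two computations are essentially dual: your boundary integral is the paper's collar integral evaluated via a primitive rather than via a volume bound, and both rest on the same two inputs, namely $\|\partial\mathcal{P}_r\|_1=O(r^{d-1})$ and the sublinearity of $\varphi$. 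What your route buys is that you only need $\sup\|\varphi\|$ over the boundary annulus rather than over the whole patch (at the price of also invoking the uniform Lipschitz bound of Proposition \ref{lipschitz} to control the $d\varphi$ factors, which the collar argument does not need); what the paper's route buys is uniformity in $\omega$ and no case split. The cell-by-cell PL Stokes issue you flag is unproblematic: $g$ is affine on each tile and continuous across faces, so the interior face contributions cancel in pairs.
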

\begin{proof}
	Since $E$ is a $d\mbox{-dimensional}$ space, the condition $\omega_E(\xi)=1$ unambiguously defines a multivector
	\begin{equation}
	\label{xi}
	\xi \in \iota_E\left(\bigwedge\nolimits^d E\right) \subset \bigwedge\nolimits^d V.
	\end{equation}
	Consider an element $\omega \in \bigwedge^d V^*$. By the definition (\ref{c_r}) of the chain $c_r$,
	$$
	\omega\left(\gamma_*\left(\beta_*(c_r)\right)\right)=
	\frac{1}{\Omega_E(\mathcal{B}_r)}
    \int_{\beta(f(|\mathcal{P}_r|))} \gamma^\sharp(\omega)=
    \frac{1}{\Omega_E(\mathcal{B}_r)}
    \int_{\mu(\tilde f(|\mathcal{P}_r|))} \omega.
	$$
	\par
	\begin{figure}[ht]
		\centering
		\includegraphics[width=0.8\linewidth]{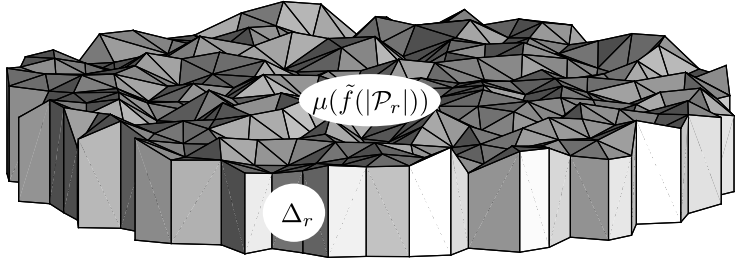}
		\caption{The piecewise-linear unbounded manifold obtained by gluing the collar $\Delta_r$ to the boundaries of $\mu(\tilde{f}(|\mathcal{P}_r|))$ and $\iota_E(|\mathcal{P}_r|)$ (the last part is hidden on the image, but can be seen on Figure \ref{fig:lifting}).}
		\label{fig:pl_manifold}
	\end{figure}
	The idea of the proof is based on the observation that the piecewise-linear spaces $\mu(\tilde f(|\mathcal{P}_r|))$ and $\iota_E(|\mathcal{P}_r|)$ are asymptotically close to each other in the limit $r \to \infty$ and their boundaries can be connected by a relatively small polyhedral ``collar'' $\Delta_r$. More precisely, we define $\Delta_r \subset V$ as the image of the map
	\begin{eqnarray*}
	\partial |\mathcal{P}_r| \times [0, 1] \to V\\
	(x, t) \mapsto \iota_E(x) + t\varphi(x).
	\end{eqnarray*} 
	Since the union
	$$
	\mu(\tilde f(|\mathcal{P}_r|)) \cup \Delta_r \cup \iota_E(|\mathcal{P}_r|)
	$$
	is a piecewise-linear unbounded manifold \cite{rourke2012introduction} of dimension $d$ (see Figure \ref{fig:pl_manifold}),  the following identity holds (with an appropriate choice of orientation on $\Delta_r$):
	$$
	\int_{\mu(\tilde f(|\mathcal{P}_r|))} \omega = 
	\int_{\iota_E(|\mathcal{P}_r|)} \omega +
	\int_{\Delta_r} \omega
	$$
	Since $\sup_{x \in |\mathcal{P}_r|} \|\varphi(x)\|= o(r)$ by the condition of minimal matching rules, the last term in this formula grows with $r$ as $o(r^d)$ and
	$$
	\lim_{r \to \infty} \left(\frac{1}{\Omega_E(\mathcal{B}_r)}
	\int_{\mu(\tilde f(|\mathcal{P}_r|))} \omega \right)=
	\lim_{r \to \infty} \left(\frac{1}{\Omega_E(\mathcal{B}_r)}
	\int_{\iota_E(|\mathcal{P}_r|)} \omega \right)=\omega(\xi).
	$$
	Hence $\omega\left(\gamma_*\left(\beta_*(c)\right)\right)=\omega(\xi)$ for any $\omega \in \bigwedge^d V^*$, therefore
	$$
	\gamma_*\left(\beta_*(c)\right)=\xi.
	$$
\end{proof}
Proposition \ref{decomposable} shows that minimal matching rules impose constraints on the image of the map $\beta_*$. Now, we shall demonstrate that conversely, certain constraints on $\beta_*$ impose the minimal matching rules. 
\begin{definition}
	A subspace $W \subset \bigwedge\nolimits^d V$ is called {\em slope locking} for the splitting (\ref{split}) if it satisfies 
	$$
	T\cdot W=0,
	$$
	where the $d(n-d)\mbox{-dimensional}$ subspace $T\subset \bigwedge\nolimits^d V^*$ is given by the formula
	\begin{equation}
	\label{tilt_forms}
    T=
    \pi_F^\top(F^*) \wedge \pi_E^\top\left(\bigwedge\nolimits^{d-1}E^*\right)
	\end{equation}
\end{definition}
\begin{theorem}\label{logrule}
	If the FBS-complex $B$ admits an isometric winding and the space $\gamma_*(\beta_*(H_d(B, \R)))$ is slope locking for the splitting (\ref{split}), then $B$ represents minimal matching rules with $\|\varphi(x)\|=\mathcal{O}(\log(\|x\|))$. 
\end{theorem}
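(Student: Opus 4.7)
The plan is to derive a gradient-integral bound on $\phi_\eta := \eta \circ \varphi$ from slope locking, then to upgrade it into a pointwise logarithmic bound via a multi-scale averaging argument. Fix an isometric winding $f: E \to |B|$ with lift $\tilde f$ and set $\varphi = \pi_F \circ \mu \circ \tilde f$. Since $F$ is finite-dimensional, it suffices to prove $|\phi_\eta(x) - \phi_\eta(0)| = \mathcal{O}(\log \|x\|)$ for each $\eta \in F^*$.

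For the first step, fix a decomposable $\alpha \in \bigwedge^{d-1} E^*$ and let $\omega = \pi_F^\top(\eta) \wedge \pi_E^\top(\alpha) \in T$. The boundary operator $\partial: C_d(B, \R)/Z_d(B, \R) \to B_{d-1}(B, \R)$ is a bijection of finite-dimensional spaces, hence admits a bounded section; since $\|\partial f_*(\mathcal{P}_r)\|_1 = \mathcal{O}(r^{d-1})$, one may add a $d$-chain $Y_r$ with $\|Y_r\|_1 = \mathcal{O}(r^{d-1})$ so that $f_*(\mathcal{P}_r) + Y_r$ is a cycle. Slope locking applied to this cycle then gives $|\omega(\gamma_*(\beta_*(f_*(\mathcal{P}_r))))| \le C r^{d-1}$. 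By the commutative square (\ref{square}) and the graph parametrization $x \mapsto \iota_E(x) + \iota_F(\varphi(x))$, the left-hand side equals $\int_{|\mathcal{P}_r|} d\phi_\eta \wedge \alpha$, which by Stokes on $E$ is $\int_{\partial|\mathcal{P}_r|} \phi_\eta\, \alpha$. Varying $\alpha$ over a basis of $\bigwedge^{d-1} E^*$ yields, for every unit $v \in E$ and every ball $\mathcal{B}_r(x_0) \subset E$,
$$
\Bigl|\int_{\mathcal{B}_r(x_0)} \partial_v \phi_\eta\Bigr| \le C r^{d-1}.
$$

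For the second step, set $A(x_0, r) := \Omega_E(\mathcal{B}_r)^{-1} \int_{\mathcal{B}_r(x_0)} \phi_\eta$. Differentiating in the center, the preceding bound yields $|\nabla_{x_0} A(x_0, r)| \le C/r$, whence the center-shift bound $|A(x_0+\delta, r) - A(x_0, r)| \le C$ for $|\delta| \le r$. The scale-shift bound $|A(x_0, 2r) - A(x_0, r)| \le C$ then follows by writing the difference as a multiple of the gap between $A(x_0, r)$ and the annular average of $\phi_\eta$ on $\mathcal{B}_{2r}(x_0) \setminus \mathcal{B}_r(x_0)$, and covering the annulus by $r$-balls whose averages lie within $\mathcal{O}(1)$ of $A(x_0, r)$ by the center-shift bound. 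Combined with the Lipschitz estimate $|\phi_\eta(x_0) - A(x_0, r)| \le C_\eta r$ from Proposition \ref{lipschitz}, telescoping through dyadic scales $r = 1, 2, \dots, 2^k \sim \|x\|$ together with a chain of centers linking $0$ to $x$ produces $\mathcal{O}(\log \|x\|)$ terms of size $\mathcal{O}(1)$, giving the claim. The most delicate step I anticipate is the passage from slope locking on full cycles to the quantitative patch bound, which relies on the bounded filling inequality for $(d-1)$-boundaries in $C_*(B, \R)$ guaranteed by the finiteness of $B$; the final logarithmic bound rests essentially on the dyadic cancellation implicit in $\sum_{k=0}^{\log\|x\|} 1 = \mathcal{O}(\log\|x\|)$ rather than on any further topological input.
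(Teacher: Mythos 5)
Your argument is essentially the paper's: your first stage is the content of Lemma~\ref{face_average}, and your second is the dyadic subdivision in the paper's proof of Theorem~\ref{logrule}. The one substantive-looking difference in stage one is a cosmetic dual of what the paper does: where you correct $f_*(\mathcal{P}_r)$ to a cycle by a bounded filling $Y_r$ (using that $\partial$ admits a bounded section on the finite-dimensional chain groups of $B$), the paper notes that the cochain $s\mapsto\int_{\beta(s)}\gamma^\sharp(\omega)$ for $\omega\in T$ vanishes on cycles, hence is a coboundary $\mathrm{d}\chi$, and sums by parts onto $\partial\mathcal{P}$; both yield the same $\mathcal{O}(r^{d-1})$ bound from the same finiteness of $B$. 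The one step you should tighten is the scale-doubling estimate $|A(x_0,2r)-A(x_0,r)|\le C$ for \emph{ball} averages: the average of $\phi_\eta$ over the annulus $\mathcal{B}_{2r}\setminus\mathcal{B}_r$ is not a convex combination of its averages over a covering family of $r$-balls, so ``covering the annulus'' does not by itself close the estimate (and the direct radial-integration alternative requires a weighted gradient bound you do not have). The paper sidesteps this entirely by averaging over cubes: $[0,r]^d$ partitions exactly into $2^d$ cubes of edge $r/2$, so the large-cube average is the arithmetic mean of the sub-cube averages, each within $\mathcal{O}(1)$ of any fixed one by the face-average bound (your center-shift bound). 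Since your gradient-integral estimate holds verbatim for cubes, replacing balls by cubes removes the gap; everything else, including termination at tile scale via Proposition~\ref{lipschitz} and the $\log_2 r$ count of dyadic scales, matches the paper.
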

Before proving this theorem, let us digress and discuss the implications of the slope locking condition. In particular, we have to check that this condition does not exclude physically interesting cases (in particular, those of quasiperiodic tilings). Let us start by justifying the name of the term ``slope locking'':
\begin{proposition}\label{lock}
	If Pl\"ucker coordinates $\xi'$ of a point of $G(d, V)$ annihilate $T$:
	$$
	T\cdot \xi'=0,
	$$
	then the $d\mbox{-dimensional}$ subspace of $V$, corresponding to this point either coincides with $\iota_E(E)$ or is non-transversal with $\iota_F(F)$.
\end{proposition}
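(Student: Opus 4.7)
The plan is to separate into two cases according to whether $W$, the $d$-dimensional subspace of $V$ represented by the decomposable multivector $\xi'$, is transversal to $\iota_F(F)$. If $W\cap \iota_F(F)\neq 0$ the conclusion holds directly, so I assume $W\oplus \iota_F(F)=V$. Transversality then makes $\pi_E|_W:W\to E$ a linear isomorphism, so $W$ is the graph of a unique linear map $\psi:E\to F$: fixing a basis $e_1,\dots,e_d$ of $E$, the vectors $w_i=\iota_E(e_i)+\iota_F(\psi(e_i))$ form a basis of $W$ and $\xi'$ is a nonzero scalar multiple of $w_1\wedge\dots\wedge w_d$. The goal becomes to show $\psi=0$, which will give $W=\iota_E(E)$.

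Next I would evaluate a generic element of $T$, namely $\pi_F^\top(\alpha)\wedge\pi_E^\top(\beta)$ for $\alpha\in F^*$ and $\beta\in\bigwedge^{d-1}E^*$, on $w_1\wedge\dots\wedge w_d$. Using $\pi_F(w_i)=\psi(e_i)$ and $\pi_E(w_i)=e_i$, together with the Laplace-type expansion of the wedge pairing, I obtain
\[
(\pi_F^\top\alpha\wedge\pi_E^\top\beta)(w_1\wedge\dots\wedge w_d)
=\sum_{i=1}^{d}(-1)^{i-1}\alpha(\psi(e_i))\,\beta(e_1\wedge\dots\widehat{e_i}\dots\wedge e_d).
\]
The slope-annihilation hypothesis $T\cdot\xi'=0$ forces this expression to vanish identically in $\alpha$ and $\beta$.

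Finally, I would exploit the fact that the $(d-1)$-vectors $\{e_1\wedge\dots\widehat{e_i}\dots\wedge e_d\}_{i=1}^{d}$ form a basis of $\bigwedge^{d-1}E$, so $\beta$ can be chosen to pick out any single one of them. This forces $\alpha(\psi(e_i))=0$ for each fixed $i$ and every $\alpha\in F^*$, hence $\psi(e_i)=0$ for all $i$, so $\psi\equiv 0$ and $W=\iota_E(E)$. The only piece requiring any care is the wedge-pairing identity above; it is a routine consequence of the determinantal definition of the pairing combined with the decompositions $w_i=\iota_E(e_i)+\iota_F(\psi(e_i))$, and I do not anticipate it to be an obstacle. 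The conceptual heart of the argument is that slope locking forbids any nontrivial ``tilt'' $\psi$ of a transversal subspace away from $\iota_E(E)$.
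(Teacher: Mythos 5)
Your proposal is correct and follows essentially the same route as the paper: reduce to the transversal case, realize the subspace as the graph of a linear map $E\to F$, and show that pairing $\xi'$ against $T$ computes exactly the contraction of that map with $F^*\otimes E$ (your explicit Laplace expansion is the paper's identity $\left((\pi_F^\top\otimes\iota_E)(\zeta)\cdot\omega_E\right)\cdot\xi'=\zeta\cdot\eta$ written out in a basis), forcing the map to vanish. No gaps.
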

\begin{proof}
	Suppose that the subspace corresponding to the point of $G(d, V)$ with Pl\"ucker coordinates $\xi'$ is transversal with $\iota_F(F)$ (such points occupy the cell of maximal dimension in the Schubert decomposition of $G(d, V)$ with respect to the flag containing $\iota_F(F)$, see \cite{rokhlin2003topology}). This subspace can be interpreted as a graph of linear map $E \to F$, and thus can be naturally parameterized by $F \otimes E^*$. Let us choose a basis $(x_1,\dots, x_d)\in E$ normalized by the condition $\Omega_E(x_1 \wedge\dots\wedge x_d)=1$. Then the parameterization in Pl\"ucker coordinates is defined by the map
	\begin{eqnarray}
	\label{xi_prime}
	F\otimes E^* \to \bigwedge\nolimits^d V\nonumber\\
	\eta \mapsto \xi'=\bigwedge_{i=1}^d(\iota_E(x_i)+\iota_F(\eta\cdot x_i)).
	\end{eqnarray}
	Similarly, the space $T$ is parameterized by $F^*\otimes E$ in the following way:
	\begin{eqnarray*}
	F^*\otimes E \to T\\
	\zeta \mapsto (\pi_F^\top \otimes \iota_E)(\zeta)\cdot \omega_E,
	\end{eqnarray*}
    where the dot product stands for the antisymmetrized contraction of $\bigwedge\nolimits^d V^*$ with $V^*\otimes V$. Expanding the expression for $\xi'$ in (\ref{xi_prime}) yields
	$$
	\left((\pi_F^\top \otimes \iota_E)(\zeta)\cdot \omega_E\right)\cdot\xi'= \zeta\cdot \eta.
	$$
	Since $T\cdot \xi'=0$, one has $\zeta\cdot \eta=0$ for all $\zeta \in F^*\otimes E$, which means $\eta=0$. Therefore, the $d\mbox{-dimensional}$ subspace of $V$ corresponding to the point of $G(d, V)$ with Pl\"ucker coordinates $\xi'$ coincides with $\iota_E(E)$.
\end{proof}
An immediate corollary of Proposition \ref{lock} is that a subspace $W \subset \bigwedge\nolimits^d V$ containing a non-zero element of $\iota_E\left(\bigwedge\nolimits^d E\right)$ cannot be slope locking for more than one splitting (\ref{split}). This observation justifies the use of the term --- indeed, the slope locking condition effectively locks the slope of the subspace $\iota_E(E) \subset V$. 
\par
We have also to ensure that the slope locking condition is not too restrictive. The issue is less trivial than merely assessing the codimension of $T$ in $\bigwedge\nolimits^d V^*$. Indeed, since 
$$\beta_*(H_d(B, \R))=\beta_*(H_d(B, \Z))\otimes \R,$$
the subspace $\gamma_*(\beta_*(H_d(B, \R))) \subset \bigwedge\nolimits^d V$ is rational with respect to the lattice $\bigwedge\nolimits^d_\Z \mathcal{L} \subset \bigwedge\nolimits^d V$. Therefore, $\gamma_*(\beta_*(H_d(B, \R)))$ is contained in the annihilator of the rational envelope of $T$ in $\bigwedge\nolimits^d V^*$ (that is, the intersection of all rational subspaces of $\bigwedge\nolimits^d V^*$ containing $T$). This space may be zero even if $T$ is a proper subspace of $\bigwedge\nolimits^d V^*$. Clearly, there are cases when the rational envelope of $T$ has positive codimension --- one such case corresponds to the situation where $\iota_E(E)$ is rational with respect to $\mathcal{L}$. Since this is the case of essentially periodic long range order, the question arises: are there other less trivial solutions?
\par
As follows from Proposition \ref{lock}, for an FBS-complex $B$ satisfying the conditions of Theorem \ref{logrule}, the projective space $\PP(\gamma_*(\beta_*(H_d(B, \R))))$ intersects the Pl\"ucker embedding (\ref{plucker}) of $G(d, V)$ at a discrete set of points, one of which corresponds to the subspace $\iota_E(E) \subset V$. Since $\beta_*(H_d(B, \R)))$ is rational, the Pl\"ucker coordinates of $\iota_E(E)$ must be algebraic over $\Q$. As the Pl\"ucker embedding is surjective for $n<4$, the first case of irrational slope locking occurs for $n=4$ and $d=2$. In this case, the image of $\psi$ in (\ref{plucker}) is a quadric, and the Pl\"ucker coordinates of $\iota_E(E)$ should belong to a quadratic extension of $\Q$. On the other hand, any such subspace can be stabilized by the slope locking condition. Indeed, it suffices to choose $\iota_F(F)$ to be the Galois conjugate of $\iota_E(E)$. Then the 4-dimensional space $T$ coincides with its own conjugate and is therefore a proper rational subspace of the 6-dimensional space $V^*\wedge V^*$. Hence, the annihilator of it rational envelope is non-zero and equals $\iota_E(E \wedge E) \oplus \iota_F(F \wedge F)$. Concrete examples of tilings admitting weak matching rules for this case have been constructed in \cite{bedaride2015periodicities}.
\par
In the general case, the Pl\"ucker coordinates of $\iota_E(E)$ are algebraic numbers of a degree lesser or equal to that of the Pl\"ucker embedding of $G(d, V)$. The latter grows very rapidly with $n$ (see \cite[Chap.~19]{harris2013algebraic}): 
\begin{equation}
\label{deg_plucker}
\deg(G(d,V))=\left(d(n-d)\right)! \prod_{i=0}^{d-1}\frac{i!}{(n-d+i)!}.
\end{equation}
Since the rational envelope of $T$ must include all its Galois conjugates, in the generic situation for $n>4$ and $n-2 \ge d \ge 2$ this envelope coincides with the entire space $\bigwedge\nolimits^d V^*$. However, for some special positions of $\iota_E(E)$ and $\iota_F(F)$ their Pl\"ucker coordinates may have smaller degree than the upper limit (\ref{deg_plucker}). In particular, this may happen when the problem has an additional symmetry. Consider, for instance, the case of the group of rotations of a regular icosahedron acting on $V$ by the sum of the two non-equivalent 3-dimensional real irreducible representations. If the lattice $\mathcal{L} \subset V$ is invariant under this action, the Pl\"ucker coordinates of the two invariant 3-dimensional subspaces of $V$ belong to the quadratic extension $\Q(\sqrt{5})$. In the cut-and-project models of icosahedral quasicrystals one of these subspaces is chosen as the ``physical'' space $\iota_E(E)$ and the other as an ``internal'' space $\iota_F(F)$ \cite{duneau1985quasiperiodic}. Again, both spaces are Galois conjugate of each other, and the annihilator of the rational envelope of $T$ is a 2-dimensional space $\iota_E(\bigwedge\nolimits^3 E) \oplus \iota_F(\bigwedge\nolimits^3 F)$.
\par
Another important case in which the degree of algebraic irrationalities in Pl\"ucker coordinates of $\iota_E(E)$ is lower than the upper limit (\ref{deg_plucker}) is that of planar tilings with $k\mbox{-fold}$ rotational symmetry. The minimal rank of the lattice $\mathcal{L}$ having this symmetry is $n=\phi(k)$ (here $\phi$ stands for the Euler totient function). The space $V$ is then decomposed in the direct sum of invariant 2-dimensional Euclidean planes:
$$
V=\bigoplus_{i=1}^{\phi(k)/2} E_i,
$$
with the corresponding projections defined by their values on the basis $(\ell_j)$ of $\mathcal{L}$:
$$
\pi_{E_i}(\ell_j)=\begin{pmatrix}
\cos\left((2 \pi j u_i)/k\right)\\
\sin\left((2 \pi j u_i)/k\right)
\end{pmatrix},
$$
where $u_i \in (\Z/k\Z)^\times$ is the $i\mbox{-th}$ unit of the ring $\Z/k\Z$. The splitting (\ref{split}) is given by
$$
\iota_E(E)=E_1,\qquad \iota_F(F)=\bigoplus_{i=2}^{\phi(k)/2} E_i.
$$
The rational envelope of $T$ then equals to
$$
\bigoplus_{i<j} E_i^* \wedge E_j^*,
$$
and its annihilator is the rational subspace of $V \wedge V$ of dimension $\phi(k)/2$:
\begin{equation}
\label{annihilator}
\bigoplus_{i=1}^{\phi(k)/2} E_i \wedge E_i.
\end{equation}
Note that in the case $\phi(k)=4$ (that is $k=5$, $8$, $10$ or $12$), the Pl\"ucker coordinates of $\iota_E(E)$ are quadratic irrationalities. Curiously, among quasicrystals experimentally observed so far, all cases of planar symmetry belong to one of these classes (and the only non-planar point symmetry observed in quasicrystals is that of icosahedron). 
\par
To prove Theorem \ref{logrule}, we will need the following technical Lemma:
\begin{lemma}\label{face_average}
	If $B$ is an FBS-complex and the space $\gamma_*(\beta_*(H_d(B, \R)))$ is slope locking for the splitting (\ref{split}), then for any isometric winding of $B$, the difference between the value of the phason coordinate $\varphi$ averaged over a face of a $d\mbox{-dimensional}$ cube in $E$ and that averaged over the opposite face is globally bounded. This bound does not depend on the size of the cube and its position in $E$.
\end{lemma}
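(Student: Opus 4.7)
The plan is to express the $F$-valued difference of face-averages of $\varphi$ as the integral over the cube of the pullback of a specific $d$-form from the slope-locking space $T$, and then, via the commutative square (\ref{square}), to trade that bulk integral for the evaluation of a $(d-1)$-cochain on $B$ against a simplicial boundary of size $O(r^{d-1})$, where $r$ is the edge length of the cube.

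More precisely, let $Q$ be a $d$-cube in $E$ with edges $v_1,\dots,v_d$ and let $F^+,F^-$ denote the two $(d-1)$-faces perpendicular to $v_1$. For any $\ell\in F^*$, set $\omega_\ell=\pi_F^\top(\ell)\wedge\pi_E^\top(\eta_Q)\in T$, where $\eta_Q=v_2^*\wedge\cdots\wedge v_d^*\in\bigwedge^{d-1}E^*$. By Proposition \ref{phi} the map $\Phi=\mu\circ\tilde f:E\to V$ equals the graph map $x\mapsto\iota_E(x)+\iota_F(\varphi(x))$, so $\Phi^*\omega_\ell=\ell(d\varphi)\wedge\eta_Q$, and Stokes' theorem (justified for the piecewise-affine Lipschitz $\varphi$ of Proposition \ref{lipschitz} by summing tile contributions, interior faces cancelling by continuity) gives
$$\int_{\mu(\tilde f(Q))}\omega_\ell = \int_Q\Phi^*\omega_\ell = \int_{F^+}\ell(\varphi)\,\eta_Q - \int_{F^-}\ell(\varphi)\,\eta_Q.$$
Since $F^*$ is finite-dimensional, bounding the leftmost integral by $O(r^{d-1})$, uniformly in $r$ and the position of $Q$, will yield the claimed bound on the $F$-valued difference of face-averages.

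For this bound, the square (\ref{square}) identifies $\int_{\mu(\tilde f(Q))}\omega_\ell$ with $\int_{\beta\circ f(Q)}\gamma^\sharp(\omega_\ell)$, and the assignment $\kappa(s)=\int_{\beta(s)}\gamma^\sharp(\omega_\ell)$ defines a $d$-cocycle on $B$ whose de Rham class equals $\beta^*\gamma^*(\omega_\ell)$. The slope locking hypothesis forces this class to vanish in $H^d(B,\R)$, so $\kappa=\delta\theta_\ell$ for some $\theta_\ell\in C^{d-1}(B,\R)$. If $P_Q\in C_d(\mathcal{T})$ denotes the sum of tiles lying entirely inside $Q$, the chain-map property of $f_*$ gives
$$\int_{\mu(\tilde f(|P_Q|))}\omega_\ell = \langle\kappa,f_*P_Q\rangle = \langle\theta_\ell,f_*(\partial P_Q)\rangle,$$
which is bounded by $\|\theta_\ell\|_\infty\cdot\|f_*(\partial P_Q)\|_1=O(r^{d-1})$, because $B$ has finitely many $(d-1)$-simplices and the Delone property of Proposition \ref{from_f_to_T} bounds $\|\partial P_Q\|_1$ by a constant times the $(d-1)$-surface area of $Q$.

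The main obstacle I anticipate is controlling the contribution from tiles straddling $\partial Q$: these lie in a boundary strip $Q\setminus|P_Q|$ whose volume is $O(r^{d-1})$, thanks to the uniformly bounded prototile diameters, so $\int_{\mu(\tilde f(Q\setminus|P_Q|))}\omega_\ell$ is likewise $O(r^{d-1})$. Combining with the estimate for $|P_Q|$ and dividing by the face area $r^{d-1}$ then produces a bound on the difference of face-averages that is independent of $r$ and of the position and orientation of $Q$, as required.
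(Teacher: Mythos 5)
Your proposal is correct and follows essentially the same route as the paper's proof: the same choice of form $\pi_F^\top(\nu)\wedge\pi_E^\top(\kappa\cdot\Omega_E)\in T$, the same reduction of the face-average difference to a bulk integral over the graph of $\varphi$, the same use of slope locking to write the resulting $d$-cochain as a coboundary, and the same $O(r^{d-1})$ control of both the boundary strip of straddling tiles and the evaluation of the $(d-1)$-cochain on $f_*(\partial P_Q)$. The only cosmetic differences are that the paper phrases the vanishing via $T\cdot\gamma_*(\beta_*(H_d(B,\R)))=0$ rather than via $\beta^*\gamma^*(\omega_\ell)=0$ in $H^d(B,\R)$, and makes explicit the linear choice of the primitive $\chi_{\nu\kappa}$ to get uniformity in $\nu$ and $\kappa$.
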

\begin{proof}
	Let us start by clarifying the statement of the Lemma. Denote by $\|\cdot\|$ the Euclidean norm in $E$. We shall use the same notation for an arbitrarily chosen norm in $F$ as well as for the norm in the corresponding dual space. Let $\Gamma_r \subset E$ denote an arbitrarily positioned $(d-1)\mbox{-dimensional}$ cube of edge length $r$ in $E$. For $\kappa$ the normal vector to $\Gamma_r$ of unit length, we shall use the notation $[0, r\kappa]$ for the line segment between the points $0$ and $r\kappa$ in $E$. Then the set $\Gamma_r+[0,r\kappa]$ is a $d\mbox{-dimensional}$ cube, having $\Gamma_r$ and $\Gamma_r+r\kappa$ as opposite faces. The value of the phason coordinate $\varphi$ averaged over $\Gamma_r$ is defined as
	$$
	\langle \varphi \rangle_{\Gamma_r}=r^{-d+1}\int_{x \in \Gamma_r} \varphi(x)(\kappa \cdot \Omega_E).
	$$ 
	Note that the interior product $\kappa \cdot \Omega_E$ is the $(d-1)\mbox{-form}$  of volume on the face $\Gamma_r$. The Lemma states that there exists a positive constant $K$ independent on $r$ and on the position of $\Gamma_r$ in $E$, such that for any isometric winding of $B$, the corresponding phason coordinate $\varphi$ satisfies the following inequality:
    \begin{equation}
    \label{dif_faces}
	\|\langle \varphi \rangle_{\Gamma_r+r\kappa} - \langle \varphi \rangle_{\Gamma_r}\|< K.
    \end{equation}
	\par
	To prove the above we shall show that there exists a real number $K>0$ such that for any $\nu \in F^*$ of unit norm the following integral over the $d\mbox{-dimensional}$ cube:
	$$
	I_{\nu\kappa} = \int_{x \in \Gamma_r+[0, r\kappa]} \mathrm{d}(\nu \cdot \varphi(x))\wedge(\kappa\cdot \Omega_E)
	$$
	satisfies the inequality $|I_{\nu\kappa}| < K r^{d-1}$. Indeed, since the partial integration along the direction of $\kappa$ yields
	$$
	\nu\cdot\left(\langle \varphi \rangle_{\Gamma_r+r\kappa} - \langle \varphi \rangle_{\Gamma_r}\right)=r^{-d+1}I_{\nu\kappa},
	$$
	this will prove the statement of the Lemma. The value of $I_{\nu\kappa}$ can also be computed as an integral of a constant $d\mbox{-form}$ in $V$ over the graph of $\varphi$:
	$$
    I_{\nu\kappa}=\int_{\mu(\tilde f(\Gamma_r+[0, r\kappa]))} \omega_{\nu\kappa},
	$$
	where the form $\omega_{\nu\kappa} \in \bigwedge\nolimits^d V^*$ is given by
	\begin{equation}
	\label{omega_tilt}
	\omega_{\nu\kappa} = \pi_F^\top(\nu) \wedge \pi_E^\top(\kappa\cdot \Omega_E)
	\end{equation}
	\begin{figure}[ht]
		\centering
		\includegraphics[width=0.56\linewidth]{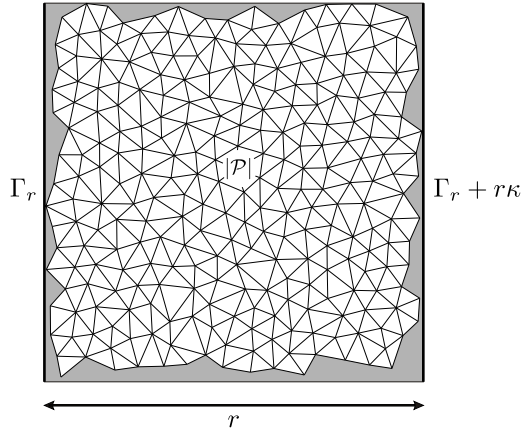}
		\caption{The $d\mbox{-dimensional}$ cube $\Gamma_r+[0, r\kappa]$ containing the patch $|\mathcal{P}|$ of the tiling $\mathcal{T}$. The volume of the shaded space grows with $r$ as $\mathcal{O}(r^{d-1})$.}
		\label{fig:cube}
	\end{figure}
	\par
	Let $\mathcal{T}$ be the tiling corresponding to the isometric winding under consideration. Denote by $\mathcal{P} \in C_d(\mathcal{T})$ the sum of all simplices of $\mathcal{T}$ entirely contained in the cube $\Gamma_r+[0, r\kappa]$ and by $|\mathcal{P}| \subset E$ the union of the corresponding tiles (see Figure \ref{fig:cube}). The volume of the interstice between $|\mathcal{P}|$ and $\Gamma_r+[0, r\kappa]$ grows with $r$ as $\mathcal{O}(r^{d-1})$. Therefore, since $\omega_{\nu\kappa}$ is globally bounded, there exists a constant $K'>0$ such that
	\begin{equation}
	\label{gap_space}
	\left| 
	I_{\nu\kappa} -
	\int_{\mu(\tilde f(|\mathcal{P}|))} \omega_{\nu\kappa}
	\right|< K'r^{d-1}.
	\end{equation}
	The integral in (\ref{gap_space}) can be expressed as a sum over the simplices of $\mathcal{P}$:
	\begin{equation}
	\label{sum_P}
	\int_{\mu(\tilde f(|\mathcal{P}|))} \omega_{\nu\kappa}=
	\sum_{s \in \mathcal{P}}
	\int_{\beta(f(s))}\gamma^\sharp(\omega_{\nu\kappa}).
	\end{equation}
	Note that the form $\omega_{\nu\kappa}$ belongs to the space $T$ defined in (\ref{tilt_forms}). Therefore, by the condition that $\gamma_*(\beta_*(H_d(B, \R)))$ is slope locking, the cochain from $C^d(B, \R)$ defined by the formula
	\begin{equation}
	\label{nameless_cochain}
	B \ni s \mapsto \int_{\beta(s)} \gamma^\sharp(\omega_{\nu\kappa})
	\end{equation}
	is a coboundary and equals to $\mathrm{d}\chi_{\nu\kappa}$ for some other cochain $\chi_{\nu\kappa} \in C^{d-1}(B, \R)$. The cochain (\ref{nameless_cochain}) depends linearly on both $\nu$ and $\kappa$, and one can assume the same for $\chi_{\nu\kappa}$ (since both $C^{d-1}(B, \R)$ and $C^d(B, \R)$  are finite-dimensional vector spaces, the coboundary operator has a right inverse defined on its image). The equation (\ref{sum_P}) then yields
	\begin{equation}
	\label{sum_chi}
	\int_{\mu(\tilde f(|\mathcal{P}|))} \omega_{\nu\kappa}=\sum_{s \in \partial \mathcal{P}} \chi_{\nu\kappa}(f(s)).
	\end{equation}
	Since $B$ contains a finite number of $(d-1)\mbox{-dimensional}$ simplices and both $\nu$ and $\kappa$ have unit norm, all terms of the sum in (\ref{sum_chi}) are globally bounded. The number of these terms grows with $r$ as $\mathcal{O}(r^{d-1})$, therefore there exists $K''>0$ such that
	$$
	\sum_{s \in \partial \mathcal{P}} \chi_{\nu\kappa}(f(s)) < K'' r^{d-1}.
	$$
	This inequality together with (\ref{sum_chi}) and (\ref{gap_space}) yields
	$$
	|I_{\nu\kappa}|< K r^{d-1}
	$$
	for $K=K'+K''$, which proves the Lemma.
\end{proof}
\begin{proof}[Proof of Theorem \ref{logrule}]
	An immediate corollary of Lemma \ref{face_average} is that for any two $d\mbox{-dimensional}$ cubes of edge length $r$ in $E$ sharing a common face, the difference of the phason coordinate $\varphi$ averaged over each cube is bounded by a constant $K>0$ independent on $r$. To show this, we shall use the notations introduced in the proof of Lemma \ref{face_average}. Consider two cubes $\Gamma_r+[-r\kappa, 0]$ and $\Gamma_r+[0, r\kappa]$ sharing the common face $\Gamma_r$. The value of $\varphi$ averaged over the cube $\Gamma_r+[0, r\kappa]$ is defined as
	$$
	\langle \varphi \rangle_{\Gamma_r+[0, r\kappa]} = 
	r^{-d}\int_{x \in \Gamma_r+[0, r\kappa]} \varphi(x) \Omega_E,
	$$
	and can also be computed as
	$$
	\langle \varphi \rangle_{\Gamma_r+[0, r\kappa]} = 
	r^{-1} \int_0^r \langle \varphi \rangle_{\Gamma_r+t\kappa} dt.
	$$
	This expression together with the inequality (\ref{dif_faces}) yields
	\begin{equation}
	\label{adjacent}
	\left\|
	\langle \varphi \rangle_{\Gamma_r+[0, r\kappa]}-
	\langle \varphi \rangle_{\Gamma_r+[-r\kappa, 0]}
	\right\|< K.
	\end{equation}
	\par
	The $d\mbox{-dimensional}$ cube $[0, r]^d$ can be partitioned into $2^d$ cubes of edge length $r/2$. By repeatedly applying the inequality (\ref{adjacent}) to the neighboring cubes of the partition, one obtains the following upper bound for the difference of the value of the phason coordinated averaged over the original cube and that averaged over any of the cubes of the partition (for definiteness, let it be $[0, r/2]^d$):
	\begin{equation}
	\label{nested}
	\left\|
	\langle \varphi \rangle_{[0, r]^d}-
	\langle \varphi \rangle_{[0, r/2]^d}
	\right\|< \frac{Kd}{2}.
	\end{equation}
	Clearly, this upper bound applies as well to partitions of any $d\mbox{-dimensional}$ cube of edge length $r$, independently of its position or orientation. 
	\par
	Let us consider two points $a, b \in E$ such that $\|a-b\|< r$, and find a cube of edge length $r$ containing both of them. For each of these points, one can choose one of $2^d$ cubes of the partition containing it, and subdivide this cube in $2^d$ smaller ones. By applying the procedure recursively, we obtain two sequences of nested cubes, converging towards $a$ and $b$ respectively (see Figure \ref{fig:hierarchy}). At a small enough scale, roughly when the size of the cubes approaches that of the tiles, the uniform Lipschitz continuity (see Proposition \ref{lipschitz}) of $\varphi$ provides a better bound than (\ref{nested}). One can stop the iterative subdivision on this scale and then note that the difference of the phason coordinate at $a$ or $b$ and its value averaged over the smallest cube of the subdivision is bounded by a constant independent on $r$. Since the number of iterations grows as $\log_2(r)$ as $r \to \infty$,
	$$
	\left\|
	\varphi(a)-	\varphi(b)
	\right\| <Kd\log_2(r) + \mathrm{const},
	$$
	which proves Theorem \ref{logrule}.
	\begin{figure}[ht]
		\centering
		\includegraphics[width=0.5\linewidth]{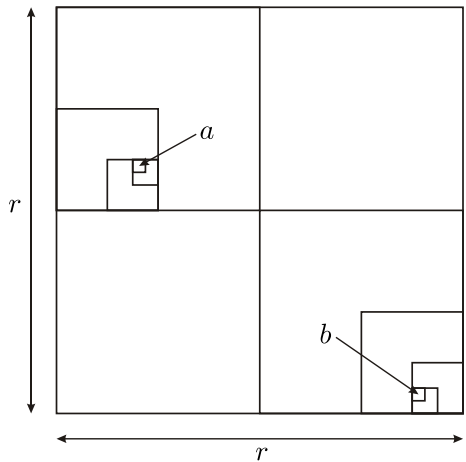}
		\caption{A $d\mbox{-dimensional}$ cube (here $d=2$) of edge length $r$ and two sequences of nested cubes converging to the points $a$ and $b$ (five iterations are shown).}
		\label{fig:hierarchy}
	\end{figure}
\end{proof}
\subsubsection*{Case study: Penrose tilings}
Let us illustrate the results of this section by an example. We shall show that the FBS complex  $B$ of the triangular Penrose tiling with the minimal lifting (see Section \ref{sec:examples}) satisfies the conditions of Theorem 1. Consider the action of the ten-fold symmetry group $\Z/10\Z$ on $B$ and the corresponding representation of this group in $H_2(B, \Q)$. Since $\Z/10\Z$ is finite and $\Q$ is a real field, this representation decomposes into the same classes of irreps as the corresponding contragredient representation in rational cohomologies of $B$. The decomposition of the latter has been computed in \cite[Section~2.4.1]{sadun} (for the Anderson-Putnam complex isomorphic to $B$). As follows from this computation, the rotation by $\pi$ leaves fixed a two-dimensional subspace of $H_2(B, \Q)$, which is also fixed under the action of the entire group $\Z/10\Z$. On the other hand, the rotation by $\pi$ acts by inversion on the $\Z\mbox{-module}$ $L$, and therefore its action on $H_2(\T^4)$ is trivial. Hence, the entire group $\Z/10\Z$ acts trivially on $\gamma_*(\beta_*(H_2(B, \R)))$. Since for the planar rotational symmetry the fixed subspace of $\bigwedge^2 V$ is exactly the annihilator (\ref{annihilator}) of the space $T$ from (\ref{tilt_forms}), $\gamma_*(\beta_*(H_2(B, \R)))$ is slope locking and the condition of Theorem 1 holds.
\par
It is also instructive to see how the slope locking condition breaks if the decorations of the triangles on Figure \ref{fig:penrose_triangles} are erased. This results to an FBS-complex $B$ of 20 triangles, 15 edges and one vertex. The prototiles related by inversion are glued together and the resulting 10 rhombi form a cellular complex isomorphic to the 2-skeleton of the standard cellular decomposition of $\T^5=\R^5/\Z^5$. Therefore, the maximal lifting of an undecorated Penrose tiling corresponds to $\mathcal{L}=\Z^5$, and $\beta_*$ is an isomorphism of $H_2(B)$ and $H_2(\T^5)$. The minimal lifting can be obtained by projecting the maximal one along the direction $(1,1,1,1,1)$ onto the hyperplane orthogonal to this direction (note that this results to $\mathcal{L}=A_4^*$ instead of $A_4$ for the standard Penrose tiling). Thus, in either case $\beta_*$ is surjective and $\gamma_*(\beta_*(H_2(B, \R)))=\bigwedge^2 V$, in violation of the slope locking condition.
\subsection{Defects and robustness}
Real crystals are never perfect and one can safely assume the same for quasicrystals. Therefore, any model of matching rules meant to describe the propagation of the long-range order in real materials, must tolerate defects. In particular, as long as the concentration of defects is small, the long-range order should be affected only slightly. This is a tough challenge for the approach to matching rules based on the Theory of computation \cite{durand2012fixed}. In contrast, as we shall see, the machinery developed in Section \ref{sec:conditions} happens in a natural way to be robust with respect to the presence of defects.
\par
We shall formulate our model of defects in terms of FBS-complexes and isometric windings. More specifically, we define defects in a tiling as special tiles (or groups of tiles, for instance in the case of linear or planar defects), which do not originate from the the ``perfect'' FBS-complex $B$, but belong to some extension $\check{B}\supset B$ of it. On this level of abstraction, one can describe virtually any type of structure imperfections - vacancies, substitution disorder and even dislocations (if the projection of the corresponding Burgers vector from $\mathcal{L}$ on the physical space $E$ is zero). However, since we are mostly interested in the way the defects affect the matching rules, we shall assume that the extended FBS-complex $\check{B}$ allows for the same lifting as does $B$; this will specifically exclude dislocations from the consideration. This requirement can be formulated in the following way. Let $\check{\rho}$ be the homomorphism $C_1(\check{B})\to E$ defining $\check{B}$ as an FBS-complex. Then we shall require that there exist a homomorphism $\check{\lambda}: H_1(\check{B}, \Z) \to \mathcal{L}$ such that the diagram (\ref{pi_L}) can be completed to the following one
$$
\xymatrix{
	H_1(B, \Z) \ar[rr] \ar[rd]^{\rho_*} \ar[rdd]_{\lambda}&& H_1(\check{B}, \Z)\ar[ld]_{\check{\rho}_*}\ar[ldd]^{\check{\lambda}}\\
	& L \\
	& \mathcal{L} \ar[u]_(0.6){\pi_L}
	}
$$
where the homomorphism $H_1(B, \Z) \to H_1(\check{B}, \Z)$ is induced by the inclusion $B \subset \check{B}$. Clearly, an isometric winding of $B$ is also that of $\check{B}$, but the converse is not true. The tiling associated with an isometric winding of $\check{f}: E\to |\check{B}|$ may contain extra tiles, corresponding to the $\check{f}^{-1}(|\check{B}|\backslash |B|)$. We shall refer to these tiles as {\em defective tiles} or simply {\em defects}. 
\par
A physically meaningful description of defects should include a statistical model for their spatial distribution. However, for our purposes, we shall use a rather rudimentary way to describe the defects quantitatively. Let us consider a cube of edge length $r$. The total volume within this cube occupied by defective tiles should grow asymptotically as $\varepsilon r^d$, where $\varepsilon \ll 1$ is the spatial density of defects (see Figure \ref{fig:defects}). We shall require that this volume has an upper bound depending only on $r$ but not on the position or orientation of the cube. At small $r$, this upper bound should be at least $r^d$ as the cube may lie entirely inside a defective tile, and at large $r$ it should grow as $\varepsilon r^d$. The behavior of this bound in the intermediate regime should depend on the nature of the defects (point-like, linear, planar etc). However, for the sake of computational convenience we shall assume the following expression for the upper bound of the volume of defects:
\begin{equation}
\label{bound}
K r^{d-1} + \varepsilon r^d
\end{equation}
One should not seek a deep meaning in this expression. It serves, however, its purpose as long as it dominates any physically meaningful upper bound with appropriate choice of the constant $K$ and has a correct asymptotic behavior as $r \to \infty$.
\par
\begin{figure}[ht]
	\centering
	\includegraphics[width=0.5\linewidth]{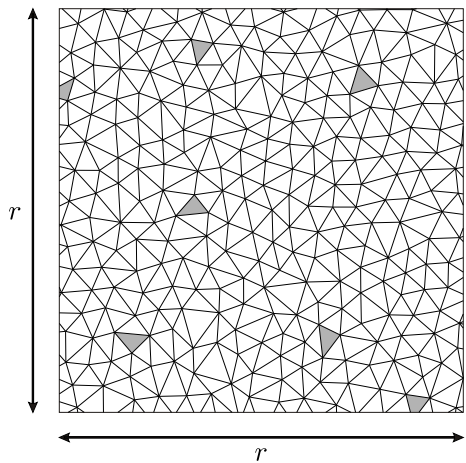}
	\caption{Defects in a simplicial tilings. The shaded area corresponds to the part of $\check{f}^{-1}(|\check{B}|\backslash |B|)$ within a cube of edge length $r$. The volume of this part is bounded by the expression (\ref{bound})}
	\label{fig:defects}
\end{figure}
Let us show now that under these assumption the global phason gradient is bounded by a term proportional to the concentration of defects:
\begin{theorem}\label{robustness}
	Let $B$ be an FBS-complex satisfying the conditions of Theorem \ref{logrule} and $\check{B} \supset B$ an extension of $B$ admitting the same lifting as $B$. Consider an isometric winding of $\check{f}: E\to |\check{B}|$, such that the volume of defective tiles in the corresponding tiling within any cube of edge length $r$ is bounded by the expression (\ref{bound}). Then for any points $a, b \in E$ such that $\|a-b\|<r$, the phason coordinate $\varphi$ corresponding to $\check{f}$ satisfies the inequality
	\begin{equation}
	\label{gradient}
	\left\|
	\varphi(a)-	\varphi(b)
	\right\| <K_1 \log_2(r) + \varepsilon K_2 r + \mathrm{const},
	\end{equation}
	for some constants $K_1$ and $K_2$ independent on $r$ and $\varepsilon$.
\end{theorem}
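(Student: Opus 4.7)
The plan is to mimic the proof of Theorem~\ref{logrule}, replacing its key ingredient Lemma~\ref{face_average} by a defect-tolerant version that allows a slope error growing linearly in $\varepsilon r$. Specifically, I would first establish the following modification: for the isometric winding $\check f$ and any $d$-dimensional cube $\Gamma_r+[0,r\kappa]$ in $E$, the difference of face-averages satisfies
$$
\bigl\|\langle\varphi\rangle_{\Gamma_r+r\kappa}-\langle\varphi\rangle_{\Gamma_r}\bigr\|<K_1+K_2\varepsilon r
$$
for constants $K_1,K_2$ independent of $r$ and the position of the cube. The proof of Lemma~\ref{face_average} used slope locking to write the cochain $s\mapsto\int_{\beta(s)}\gamma^\sharp(\omega_{\nu\kappa})$ on $B$ as a coboundary $\mathrm{d}\chi_{\nu\kappa}$. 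Since we only assume slope locking on $B$ (not on $\check B$), I would extend $\chi_{\nu\kappa}$ arbitrarily (e.g.\ by zero on the new simplices) to $\tilde\chi_{\nu\kappa}\in C^{d-1}(\check B,\R)$ depending linearly on $\nu,\kappa$, and then decompose
$$
\int_{\check\beta(s)}\gamma^\sharp(\omega_{\nu\kappa})=\mathrm{d}\tilde\chi_{\nu\kappa}(s)+\delta_{\nu\kappa}(s),
$$
where $\delta_{\nu\kappa}(s)=0$ for every $s\in B$ and is uniformly bounded on the finite set of defective prototiles of $\check B$ (for unit norm $\nu,\kappa$).

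Second, I would apply discrete Stokes to the sum over $\mathcal{P}\in C_d(\check{\mathcal T})$ as in (\ref{sum_P}). The coboundary part telescopes to a sum over $\partial\mathcal P$, contributing the familiar $\mathcal O(r^{d-1})$ term. The additional defect part is $\sum_{s\text{ defective in }\mathcal P}\delta_{\nu\kappa}(\check f(s))$, which is bounded by a constant times the number of defective simplices inside the cube. Since prototile volumes are bounded below, this number is at most (\ref{bound}) divided by the minimum prototile volume, i.e.\ $\mathcal{O}(r^{d-1})+\mathcal{O}(\varepsilon r^d)$. Dividing by $r^{d-1}$ yields the advertised bound $K_1+K_2\varepsilon r$.

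Third, as in the derivation of (\ref{adjacent}) from Lemma~\ref{face_average}, integrating the face-average bound in the direction of $\kappa$ gives, for two adjacent cubes of edge length $r$,
$$
\bigl\|\langle\varphi\rangle_{\Gamma_r+[0,r\kappa]}-\langle\varphi\rangle_{\Gamma_r+[-r\kappa,0]}\bigr\|<K_1+K_2\varepsilon r.
$$
Finally, I would run the hierarchical bisection argument from the end of the proof of Theorem~\ref{logrule} verbatim. Passing from a cube of edge $r/2^k$ to one of its sub-cubes of edge $r/2^{k+1}$ now costs at most $\tfrac{d}{2}(K_1+K_2\varepsilon r/2^k)$. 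Summing over $k=0,\ldots,\lfloor\log_2 r\rfloor$ produces a logarithmic contribution $K_1'\log_2 r$ from the constant terms and, from the defect terms, a geometric series
$$
K_2\varepsilon r\sum_{k\ge 0}2^{-k}\le 2K_2\varepsilon r,
$$
which is the origin of the linear-in-$\varepsilon r$ term in (\ref{gradient}). Adding the uniform Lipschitz remainder from Proposition~\ref{lipschitz} at the smallest scale (where the hierarchy is stopped once cubes approach tile size) supplies the additive constant and completes the bound.

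The main obstacle is the second step: one has to verify that the defect contribution genuinely admits a \emph{uniform} bound, i.e.\ that $\delta_{\nu\kappa}(s)$ is bounded independently of the defective simplex type and of $\nu,\kappa$ of unit norm. This is where the FBS-framework is essential --- because $\check B$ has only finitely many prototiles and the form $\omega_{\nu\kappa}$ depends linearly on the unit-norm parameters, a uniform bound exists, and the assumed geometric control (\ref{bound}) on the defect volume then translates cleanly into the stated bound on the number of defective $d$-simplices in the cube.
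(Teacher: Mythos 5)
Your proposal is correct and follows essentially the same route as the paper: both arguments amend Lemma \ref{face_average} to a bound of the form $K_1'+\varepsilon K_2' r$ by isolating the defect contribution (controlled via (\ref{bound}) and the uniformly bounded boundary-to-volume ratio of the finitely many defective prototiles) and then rerun the hierarchical bisection of Theorem \ref{logrule}, where the constant terms accumulate logarithmically and the $\varepsilon r/2^k$ terms sum as a geometric series. Your explicit cochain decomposition $\int_{\check\beta(s)}\gamma^\sharp(\omega_{\nu\kappa})=\mathrm{d}\tilde\chi_{\nu\kappa}(s)+\delta_{\nu\kappa}(s)$ is just a cleaner bookkeeping of what the paper phrases informally as the volume contribution of defective tiles plus the extra boundary terms in (\ref{sum_chi}).
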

\begin{proof}
Let us return to the proof of Lemma \ref{face_average}. Since now the patch $|\mathcal{P}|$ may contain defects, the integral over $|\mathcal{P}|$ will contain the contribution of defective tiles. This contribution grows with $r$ at most as the total volume of the defects in $|\mathcal{P}|$, which is bounded by (\ref{bound}). For the part of the patch $|\mathcal{P}|$ free of defects, the reasoning in the proof of Lemma \ref{face_average} remains valid, except that the boundary terms on the right hand side of formula (\ref{sum_chi}) will also contain the boundaries of defective tiles. Since the number of prototiles (\ref{alpha_s}) for $\check{B}$ is finite, the boundary-to-volume ratio for defects is globally bounded and the contribution of the additional boundary terms also grows at most as (\ref{bound}). Therefore, in the presence of defects the formula (\ref{dif_faces}) becomes
$$
\|\langle \varphi \rangle_{\Gamma_r+r\kappa} - \langle \varphi \rangle_{\Gamma_r}\|< K_1' + \varepsilon K_2' r,
$$
for some positive real $K_1'$ and $K_2'$. 
\par
The rest of the proof follows that of Theorem \ref{logrule}, which remains almost unchanged. Inequality (\ref{nested}) now has the form:
$$
\left\|
\langle \varphi \rangle_{[0, r]^d}-
\langle \varphi \rangle_{[0, r/2]^d}
\right\|< \frac{K_1'd}{2} +\frac{\varepsilon K_2' r d}{2}.
$$
Finally, applying this inequality to the sequence of nested cubes on Figure \ref{fig:hierarchy} leads to the upper bound (\ref{gradient}) with $K_1=K_1'd$ and $K_2=2 K_2' d$.
\end{proof}
\subsection{Density of atomic positions} 
A structure model should predict the spatial density of various structure features (for instance, individual atomic species or local environments). As discussed in the Introduction, the natural way to decorate a simplicial tiling consists in placing atoms at the tile's vertices. However, for the sake of generality we shall also consider other kinds of positions, corresponding to an arbitrary point $x \in |B|$. Namely, we shall ask the following question: given an isometric winding $f: E \to B$, what can be said about the density of the set of points $f^{-1}(x)$? In the general setting, speaking of the density of a point set in $E$ implies some sort of averaging with respect to translations. The notion of density is therefore unambiguously defined only if the corresponding dynamical system has a unique ergodic measure. We shall, however, work with a much weaker notion of {\em natural density} \cite[Definition 2.6]{baake2013aperiodic}, defined as the limit value of the average density of points of $f^{-1}(x)$ contained in centered balls of increasing radius:
\begin{equation}
\label{natural}
\lim_{r \to \infty}\frac{\#\left(f^{-1}(x) \cap \mathcal{B}_r\right)}{\Omega_E(\mathcal{B}_r)},
\end{equation}
where $\#$ stands for the cardinality of a set.
\par
Let $\mathcal{T}$ stand for the tiling associated with the isometric winding $f$. If the point $x$ lies within a $d\mbox{-simplex}$ $|s|\subset |B|$, the density of the set $f^{-1}(x)$ equals that of the corresponding tiles in $\mathcal{T}$. For a finite patch $\mathcal{P}_r$ of the tiling, this quantity is given by the corresponding coefficient of the chain $c_r$ in (\ref{c_r}). The density of each tile species is thus well defined if and only if the sequence $c_r$ has a unique accumulation point. The situation becomes more complicated if the point $x$ belongs to a simplex of lesser dimension. In this case, $x$ is effectively shared between several neighboring $d\mbox{-simplices}$. Thus we need to measure the ``fraction'' of the point $x$ excised by a $d\mbox{-simplex}$ $|s|$. This quantity, denoted by $\theta_x(s)$, is formally defined below.
\par
Since $B$ can also be considered as a CW-complex, the inverse of the homeomorphism $\alpha_s$  in (\ref{alpha_s}) can be continued to the closure $\sigma$ of the simplex $\sigma^\circ$. Let us denote this map by
$$
\overline{\alpha_s^{-1}}: \sigma \to |B|.
$$
Consider the set of points
$$
Y_{x,s}=\left\{y \in E \mid \overline{\alpha_s^{-1}}(y)=x\right\}
$$
Clearly, $Y_{x,s}$ is empty if and only if $x \notin \overline{|s|}$. Moreover, according to Proposition \ref{notglued}, $Y_{x,s}$ may contain more than one point only if $x$ is a vertex of $|s|$. Let us define $\theta_x(s)$ as
\begin{equation}
\label{theta_x}
\theta_x(s)= 
\lim_{\delta \to 0^+} 
\sum_{y_i \in Y_{x,s}}
\frac{\Omega_E(\mathcal{B}_{\delta, y_i} \cap \alpha_s(|s|))}{\Omega_E(\mathcal{B}_\delta)},
\end{equation}
where  $\mathcal{B}_{\delta, y_i}$ stands for a ball of radius $\delta$ in $E$ centered at $y_i$ (see Figure \ref{fig:solid_angles}). As a function of $x$, $\theta_x(s)$ can be thought of as a variant of the characteristic function of $|s|$. Indeed, $\theta_x(s)$ is equal to $0$ if $x \notin \overline{|s|}$ and to 1 if $x \in |s|$. If $x$ lies on the boundary of $|s|$, then $\theta_x(s)$ is equal to the sum of the solid angles (as fractions of the full space) occupied by the prototile $\alpha_s(|s|)$ in the vicinity of the points corresponding to $x$ on its boundary.
\begin{figure}[ht]
	\centering
	\includegraphics[width=0.7\linewidth]{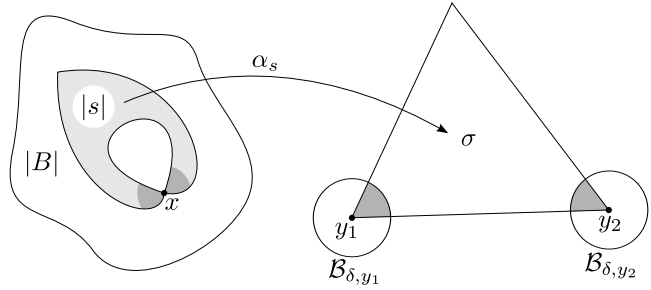}
	\caption{Illustration of the formula (\ref{theta_x}) in the case when $x$ is a vertex of $|s|$. The homeomorphism $\alpha_s$ (see formula (\ref{alpha_s})) takes an open simplex $|s| \subset |B|$ to the interior of an affine simplex $\sigma$. The set $Y_{x,s}$ contains two points $y_1$ and $y_2$. The shaded sectors correspond to the excised balls $\mathcal{B}_{\delta, y_i} \cap \alpha_s(|s|)$ and their preimages in $|B|$.}
	\label{fig:solid_angles}
\end{figure}
\par
One can consider $\theta_x$ as a real-valued function $s \mapsto \theta_x(s)$ defined on the set of $d\mbox{-simplices}$ of $B$, and extend it by linearity to a homomorphism of abelian groups
\begin{equation}
\label{theta_homomorphism}
\theta_x: C_d(B) \to \R.
\end{equation}
\begin{proposition}\label{integer_values}
	$\theta_x$ takes integer values on integral $d\mbox{-cycles}$.
\end{proposition}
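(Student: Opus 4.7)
The plan is to realize $\theta_x(z)$ as the almost-everywhere value of an integer-valued, locally constant function on a small ball in $E$, and to use the cycle condition $\partial z=0$ to eliminate its jumps. Writing the cycle as $z=\sum_i n_i s_i$, I would first rewrite the defining formula (\ref{theta_x}) as a single integral: for $\delta$ smaller than the distance from each preimage $y\in Y_{x,s_i}$ to the codim-$1$ faces of $\alpha_{s_i}(|s_i|)$ that do not pass through $y$, interchange of sum and limit gives
$$
\theta_x(z)\,\Omega_E(\mathcal{B}_\delta)\;=\;\int_{\mathcal{B}_\delta} F(\xi)\,d\xi,\qquad F(\xi)\;:=\;\sum_i n_i\sum_{y\in Y_{x,s_i}}\mathbf{1}_{\alpha_{s_i}(|s_i|)}(y+\xi).
$$
Off the finite union of affine hyperplanes on which some $y+\xi$ hits a codim-$1$ face of a prototile (a set of measure zero), each indicator is $0$ or $1$, so $F$ takes an integer value and is locally constant on $\mathcal{B}_\delta$.

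Next I would check that $F$ is actually constant a.e.\ on $\mathcal{B}_\delta$ by verifying that every jump across a wall cancels. A candidate wall $H\subset E$ is, after translation of the relevant $y$ to the origin, the linear envelope of the images under $\rho$ of the edges of some $(d-1)$-simplex $\tau$ of $B$ with $x$ in its closure, emanating from the vertex of $\tau$ sitting at $x$; by the definition of an FBS-complex this linear subspace depends only on $\tau$ (together with the choice of its vertex at $x$), not on which $d$-simplex of $B$ contains $\tau$ as a codim-$1$ face. Grouping the triples $(i,y,\tau)$ that produce a given $H$, the jump of $F$ across a generic point of $H$ is a signed sum $\sum \pm n_i$ in which the sign is the incidence $[\tau:\partial s_i]$ of $\tau$ in the semi-simplicial boundary of $s_i$ --- this is because $\alpha_{s_i}$ transports the combinatorial orientation of $s_i$ to the geometric one of the prototile (Proposition \ref{orientation}). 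The total jump across $H$ therefore equals the coefficient of $\tau$ in $\partial z$, which vanishes because $z$ is a cycle. Hence $F$ has no jumps, so it equals a.e.\ a single integer $c\in\Z$, and $\theta_x(z)=c$.

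The main obstacle will be the orientation bookkeeping in the jump analysis: one must match carefully the sign with which $\mathbf{1}_{\alpha_{s_i}(|s_i|)}(y+\xi)$ jumps as $\xi$ crosses the wall attached to $\tau$ with the combinatorial incidence $[\tau:\partial s_i]$. This requires care in the semi-simplicial setting, where a single $s_i$ may have several vertices mapped to $x$ (so $Y_{x,s_i}$ need not be a singleton, while Proposition \ref{notglued} only rules out identifications of edges), and where distinct triples $(i,y,\tau)$ with the same $\tau$ but different choices of the vertex of $\tau$ at $x$, or with different $\tau$ whose linear envelopes accidentally coincide, may collide on the same wall and must be collected before invoking $\partial z=0$. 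Once this accounting is in place, the argument outlined above delivers the claim.
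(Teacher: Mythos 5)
Your proposal is correct and follows essentially the same route as the paper: both realize $\theta_x(z)$ as the average over a small ball of an integer-valued function that is constant on the sectors cut out by the hyperplanes through the prototile faces, and both use $\partial z=0$ to show the jumps across sector walls cancel, so the function is a single integer constant. The extra care you flag about orientation bookkeeping and about collecting coincident walls before invoking the cycle condition is exactly the (implicit) content of the paper's one-line appeal to the vanishing of the corresponding $(d-1)$-coefficient of $\partial z$.
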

\begin{proof}
	Let $z$ be an integral $d\mbox{-cycle}$ on $B$:
	$$
	z=\sum_{s \in B} k_s s, \qquad k_s \in \Z.
	$$
Consider the following open subsets of $E$:
$$
\mathcal{S}_{\delta, y_{i, s}}=\mathcal{B}_{\delta, y_{i,s}} \cap \alpha_s(|s|)-y_{i,s}.
$$
For small enough $\delta$, each $\mathcal{S}_{\delta, y_{i, s}}$ is a sector of the centered $\delta\mbox{-ball}$ $\mathcal{B_\delta}$ cut out by $d$ hyperplanes. Consider the sum of characteristic functions $1_{\mathcal{S}_{\delta, y_{i, s}}}$ of these sectors weighted with $k_s$:
$$
\mathcal{C}_{\delta, x}(y)=\sum_{\substack{{s \in B}\\{y_{i,s} \in Y_{x,s}}}}
k_s\ 1_{\mathcal{S}_{\delta, y_{i, s}}}(y), \qquad y\in E.
$$
Formula (\ref{theta_x}) yields
$$
\theta_x(z)=\lim_{\delta \to 0^+} 
\frac{ \int_{E} \mathcal{C}_{\delta, x}(y)\mathrm{d}y}{\Omega_E(\mathcal{B}_\delta)}.
$$
The function $\mathcal{C}_{\delta, x}$ vanishes beyond the ball $\mathcal{B}_\delta$ and takes integer values inside. The hyperplanes parallel to the faces of the prototiles $\alpha_s(|s|)$ and passing through the origin cut the ball $\mathcal{B}_\delta$ into a finite set of sectors. Within each of these sectors $\mathcal{C}_{\delta, x}$ is constant. On the other hand, should $\mathcal{C}_{\delta, x}$ change its value across a sector boundary, $\partial z$ would contain a non-zero contribution for the simplex of dimension $d-1$ corresponding to this boundary. Since $\partial z=0$, the function $\mathcal{C}_{\delta, x}$ equals the same integer constant on the interior of every sector of $\mathcal{B}_\delta$ and therefore  $\theta_x(z) \in \Z$. 
\end{proof}
\begin{proposition}\label{natural_density}
	If the sequence $c_r$ in (\ref{c_r}) has a unique accumulation point $c$, then for any $x\in |B|$ the natural density of the set $f^{-1}(x)$ equals $\theta_x(c)$.
\end{proposition}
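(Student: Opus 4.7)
The plan is to combine linearity of $\theta_x$, convergence of $c_r$, and a counting identity relating the weighted tile densities to the actual point count. First I would note that $(c_r)$ is bounded in the finite-dimensional space $C_d(B,\R)$: the coefficients are bounded by the reciprocal of the smallest prototile volume, as argued in the proof of Proposition~\ref{limit_cycle}. A bounded sequence in a finite-dimensional normed space with a single accumulation point must converge to that point, so $c_r \to c$. Since $\theta_x$ is a linear (hence continuous) functional on $C_d(B,\R)$, this gives $\theta_x(c_r)\to \theta_x(c)$, and it remains only to show that the natural density~(\ref{natural}) exists and equals $\lim_r \theta_x(c_r)$.

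The second step is to unfold $\theta_x(c_r)\,\Omega_E(\mathcal{B}_r)=\sum_s \theta_x(s)\,k_s(r)$, where $k_s(r)$ counts tiles of type $s$ in $\mathcal{P}_r$, and then swap the order of summation. Using the definition~(\ref{theta_x}) of $\theta_x(s)$ as a sum over $Y_{x,s}$ of local solid-angle fractions, together with the observation that for each tile $\sigma_{s,i}$ of $\mathcal{T}$ the preimages $f^{-1}(x)\cap\overline{\sigma_{s,i}}$ are exactly the translates $Y_{x,s}-\tau_{s,i}$, the double sum reassembles as
$$\sum_s \theta_x(s)\,k_s(r)=\sum_{y\in f^{-1}(x)} W_r(y),$$
where $W_r(y)$ is the sum, over tiles in $\mathcal{P}_r$ whose closure contains $y$, of the associated local solid-angle fractions at $y$.

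The heart of the argument is the identity $W_r(y)=1$ for every $y\in f^{-1}(x)$ lying in the topological interior of $|\mathcal{P}_r|$. At such a $y$, every tile of $\mathcal{T}$ incident to $y$ already lies in $\mathcal{P}_r$; by Proposition~\ref{from_f_to_T} a small ball centred at $y$ is partitioned (up to measure zero) by its intersections with these tiles, and the $\delta\to 0^+$ limit used to define $\theta_x$ then forces the solid-angle fractions to sum to $1$. The remaining points of $f^{-1}(x)$ lie within an $r_0$-neighbourhood of $\partial|\mathcal{P}_r|$, where $r_0$ is the maximal prototile diameter, and for them one only has $0\le W_r(y)\le 1$; the Delone property of the vertex set and the finiteness of the prototile set bound the number of such $y$ by $\mathcal{O}(r^{d-1})$.

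Combining these estimates with the sandwich $\mathcal{B}_{r-r_0}\subset|\mathcal{P}_r|\subset \mathcal{B}_r$ yields
$$\bigl|\theta_x(c_r)\,\Omega_E(\mathcal{B}_r)-\#\bigl(f^{-1}(x)\cap \mathcal{B}_r\bigr)\bigr|=\mathcal{O}(r^{d-1}),$$
and dividing by $\Omega_E(\mathcal{B}_r)\sim r^d$ finishes the argument. I expect the delicate step to be the interior identity $W_r(y)=1$: one must check carefully that the abstract preimages $y_i\in Y_{x,s}$ appearing in~(\ref{theta_x}) are in exact bijection with the tiles of $\mathcal{T}$ incident at $y$ and carry the matching solid-angle measure, which ultimately rests on the isometric property of the maps $\alpha_s$ together with the absence of self-gluings guaranteed by Proposition~\ref{notglued}.
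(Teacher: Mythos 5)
Your proof is correct and takes essentially the same route as the paper's: both reduce the claim to showing that $\theta_x(f_*(\mathcal{P}_r))$ differs from $\#\left(f^{-1}(x)\cap\mathcal{B}_r\right)$ by a boundary term of order $r^{d-1}$ and then invoke convergence of $c_r$ and continuity of $\theta_x$. The only cosmetic difference is that the paper obtains the error bound by sandwiching the count between $\theta_x(f_*(\mathcal{P}_r))$ and $\theta_x(f_*(\mathcal{P}_{r+r_0}))$ via the clipped-ball volume formula, whereas you make the same cancellation explicit through the interior identity $W_r(y)=1$.
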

\begin{proof}
	We shall use the notations introduced in the proof of Proposition \ref{limit_cycle}. Summing $\theta_x(s)$ over all simplices $s$ in the $d\mbox{-chain}$ $\mathcal{P}_r$ yields
	$$
	\theta_x(f_*(\mathcal{P}_r))=
	\lim_{\delta \to 0^+}\frac{\Omega_E\left(((f^{-1}(x)\cap \mathcal{B}_r)+\mathcal{B}_\delta)\cap |\mathcal{P}_r|\right)}{\Omega_E(\mathcal{B}_\delta)}
	$$
	The numerator in this formula equals the volume of the union of $\delta\mbox{-balls}$ centered at the points of $f^{-1}(x)$ within $\mathcal{B}_r$ clipped by the patch $|\mathcal{P}_r|$. Therefore, $\theta_x(f_*(\mathcal{P}_r))$ provides a lower bound for $\#\left(f^{-1}(x) \cap \mathcal{B}_r\right)$. Similarly, the upper bound is given by $\theta_x(f_*(\mathcal{P}_{r+r_0}))$ (recall that $r_0$ stands for the maximal diameter of all prototiles):
	$$
	\theta_x(f_*(\mathcal{P}_{r+r_0}))\ge
	\#\left(f^{-1}(x) \cap \mathcal{B}_r\right) \ge
	\theta_x(f_*(\mathcal{P}_r))
	$$
	Then, since the contribution of the chain  $\mathcal{P}_{r+r_0}-\mathcal{P}_r$ is asymptotically negligible in the limit $r \to \infty$, one has for the natural density (\ref{natural})
	\begin{equation}
	\label{limit_density}
	\lim_{r \to \infty} 
	\frac{\#\left(f^{-1}(x) \cap \mathcal{B}_r\right)}{\Omega_E(\mathcal{B}_r)}=
	\lim_{r \to \infty}\theta_x(c_r).
	\end{equation}
	Therefore, since $\theta_x$ is continuous and $\lim_{r\to \infty} c_r=c$, the natural density of $f^{-1}(x)$ equals $\theta_x(c)$
\end{proof}
\par
In periodic structures, the spatial density of atoms is an integer multiple of $|\mathbf{k}_1\wedge\dots \wedge \mathbf{k}_d|$, where $\{\mathbf{k}_i,\,i=1\dots d\}$ is a basis of the reciprocal lattice. The values of $\mathbf{k}_i$ are measured in diffraction experiments, while the atomic density is obtained from the mass density and the chemical composition of the material. Since these measures are completely independent, the relation between the atomic density and the parameters of the reciprocal lattice plays an important role in validation of crystal structures. As has been shown in \cite{kalugin1989density}, under certain conditions of ``matter conservation'', an analogous relation between the density and diffraction wave vectors exists for cut-and-project models of quasicrystals. We shall see that similar formulas can be obtained within the framework of FBS-complexes.
\par
For atomic positions corresponding to a point $x \in |B|$, the {\em matter conservation condition} means that for any two isometric windings $f, f': E \to |B|$ coinciding on all $E$ outside a bounded region, the number of points of the sets ${f}^{-1}(x)$ and ${f'}^{-1}(x)$ contained within this region are equal. Let us consider tilings corresponding to $f$ and $f'$ and denote by $|\mathcal{P}|$ and $|\mathcal{P}'|$ their respective patches within this region. Then the matter conservation condition is equivalent to the requirement that \begin{equation}
\label{conservation}
\theta_x(f_*(\mathcal{P})-f_*'(\mathcal{P}')) = 0.
\end{equation}
The problem of assessing whether a given FBS-complex satisfies the matter conservation condition is difficult and probably undecidable. However, since the cycle $\beta_*(f_*(\mathcal{P})-f_*'(\mathcal{P}'))$ is contractible in $\T^n$, having $\theta_x(\ker(\beta_*))=0$ implies (\ref{conservation}). Let us show that this stronger (but hopefully not excessively strong) condition leads to constraints on the possible values of the atomic density.
\begin{proposition}\label{density_conservation}
	Consider an isometric winding $f: E \to |B|$ of an FBS-complex representing minimal matching rules. If the point $x \in |B|$ is such that $\theta_x(\ker(\beta_*))=0$, then the natural density of the set $f^{-1}(x)$ is well defined and equals $\theta_x(c)$ for any cycle $c\in H_d(B, \R)$ satisfying $\gamma_*(\beta_*(c))=\xi$, where $\xi$ is given by (\ref{xi}) (such cycle exists by virtue of Proposition \ref{decomposable}).
\end{proposition}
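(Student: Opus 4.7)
The plan is to upgrade Proposition \ref{natural_density} to the situation where the sequence $(c_r)$ from (\ref{c_r}) may have several accumulation points, by showing that the hypothesis on $x$ forces $\theta_x$ to take the same value on all of them. A preliminary remark is that the map $\gamma_*$ in (\ref{gamma_star}) is an isomorphism, being the transpose (under the canonical duality between real singular homology and cohomology) of the canonical isomorphism $\gamma^*$ of (\ref{external}). In particular, Proposition \ref{decomposable} not only fixes $\gamma_*(\beta_*(c))$ to be the element $\xi$ of (\ref{xi}) for every accumulation point $c$ of $(c_r)$, but even fixes $\beta_*(c)=\gamma_*^{-1}(\xi) \in H_d(\T^n,\R)$.

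Once this is observed, I would show that $\theta_x(c)$ takes the same real value on every cycle $c \in H_d(B,\R)$ satisfying $\gamma_*(\beta_*(c))=\xi$. Indeed, for any two such cycles $c_1,c_2$ one has $\beta_*(c_1-c_2)=0$, so $c_1-c_2 \in \ker(\beta_*)$, and the assumption $\theta_x(\ker(\beta_*))=0$ gives $\theta_x(c_1)=\theta_x(c_2)$. In particular this common value $\theta_x(c)$ agrees with $\theta_x(c^*)$ for every accumulation point $c^*$ of $(c_r)$; Proposition \ref{limit_cycle} guarantees that at least one such $c^*$ exists, and Proposition \ref{decomposable} that it has the required $\gamma_*\beta_*$-image.

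To conclude, I would verify that $\theta_x(c_r)$ converges to $\theta_x(c)$. The linear form $\theta_x: C_d(B,\R)\to \R$ is continuous, and by Proposition \ref{limit_cycle} the coefficients of $c_r$ are uniformly bounded, so the real sequence $(\theta_x(c_r))$ is bounded and each of its subsequential limits is of the form $\theta_x(c^*)$ for some accumulation point $c^*$ of $(c_r)$. By the previous paragraph all these values coincide, hence $\theta_x(c_r)\to \theta_x(c)$. The squeeze inequality $\theta_x(f_*(\mathcal{P}_r)) \le \#(f^{-1}(x) \cap \mathcal{B}_r) \le \theta_x(f_*(\mathcal{P}_{r+r_0}))$ extracted from the proof of Proposition \ref{natural_density} does not require a unique accumulation point; dividing by $\Omega_E(\mathcal{B}_r)$ and letting $r\to\infty$ therefore identifies the natural density of $f^{-1}(x)$ with $\theta_x(c)$. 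The only slightly delicate point is the injectivity of $\gamma_*$, without which Proposition \ref{decomposable} would leave too much freedom in $\beta_*(c)$; once this is granted, the remainder is a routine compactness argument in a finite-dimensional vector space.
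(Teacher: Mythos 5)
Your proposal is correct and follows essentially the same route as the paper's own (much terser) proof: the hypothesis $\theta_x(\ker(\beta_*))=0$ together with the injectivity of $\gamma_*$ makes $\theta_x$ constant on the set of cycles with $\gamma_*(\beta_*(c))=\xi$, Proposition \ref{decomposable} places every accumulation point of $(c_r)$ in that set, and continuity of $\theta_x$ plus the squeeze bound from Proposition \ref{natural_density} identifies the natural density with that common value. Your write-up merely makes explicit the compactness step and the fact that the squeeze inequality never needed uniqueness of the accumulation point, which the paper leaves implicit.
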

\begin{proof}
	Since $\theta_x(\ker(\beta_*))=0$, the value of $\theta_x(c)$ does not depend on the choice of the cycle $c$ satisfying $\gamma_*(\beta_*(c))=\xi$. By Proposition \ref{decomposable}, all accumulation points of the sequence $c_r$  (\ref{c_r}) satisfy this condition. Since $\theta_x$ is continuous on $C_d(B, \R)$, the limit in (\ref{limit_density}) exists and equals $\theta_x(c)$. Therefore, the natural density of $f^{-1}(x)$ is well defined and is equal to $\theta_x(c)$.
\end{proof}
\begin{theorem}\label{theorem_density}
	Consider an FBS-complex $B$ representing minimal matching rules and an isometric winding $f: E \to |B|$. Let $M$ stand for the exponent of the torsion subgroup of $H_d(\T^n, \Z)/\beta_*(H_d(B, \Z))$. If the point $x \in |B|$ is such that $\theta_x(\ker(\beta_*))=0$, then the natural density of the set $f^{-1}(x)$ belongs to the $\Z\mbox{-module}$ generated by
	\begin{equation}
	\label{density_module}
	M^{-1}|\mathbf{k}_{i_1}\wedge \dots \wedge \mathbf{k}_{i_d}|, \qquad 1\le i_1<\dots <i_d\le n,
	\end{equation} 
	where $\{\mathbf{k}_i,\, i=1\dots n\}$ is a basis in a reciprocal quasi-lattice $\iota_E^\top(\mathcal{L}^*)$.
\end{theorem}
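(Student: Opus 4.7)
The plan is to combine Proposition~\ref{density_conservation}, the integrality statement of Proposition~\ref{integer_values}, and the finite-index datum $M$ to express the density as a $\Z$-linear combination of the volumes $M^{-1}|\mathbf{k}_{i_1}\wedge\dots\wedge\mathbf{k}_{i_d}|$.

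First, I would invoke Proposition~\ref{density_conservation} to replace the natural density of $f^{-1}(x)$ by $\theta_x(c)$, for any cycle $c\in H_d(B,\R)$ satisfying $\gamma_*(\beta_*(c))=\xi=\iota_E(\eta)$, where $\eta\in\bigwedge\nolimits^d E$ is the unit-volume multivector (so $\Omega_E(\eta)=1$). Because $B$ has no $(d{+}1)$-cells, $H_d(B,\Z)=Z_d(B)$ is free, and Proposition~\ref{integer_values} makes $\theta_x$ a $\Z$-valued homomorphism on $H_d(B,\Z)$. The matter-conservation hypothesis $\theta_x(\ker\beta_*)=0$ then lets $\theta_x$ descend to a $\Z$-valued homomorphism $\bar\theta_x\colon\beta_*(H_d(B,\Z))\to\Z$, whose $\R$-linear extension to $\beta_*(H_d(B,\R))\subset H_d(\T^n,\R)$ still satisfies $\theta_x(c)=\bar\theta_x(\beta_*(c))$.

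The key step is to extend $\bar\theta_x$ from $\beta_*(H_d(B,\Z))$ to all of $H_d(\T^n,\Z)$ at the cost of introducing the denominator $M$. Set $G=H_d(\T^n,\Z)$, $H=\beta_*(H_d(B,\Z))$, and let $G'\subset G$ be the preimage of the torsion subgroup of $G/H$. By definition of $M$ one has $MG'\subset H$, so $\tilde\theta(g'):=M^{-1}\bar\theta_x(Mg')$ is a well-defined homomorphism $G'\to M^{-1}\Z$ extending $\bar\theta_x$; since $G/G'$ is torsion-free, one further extends $\tilde\theta$ arbitrarily (say, by zero on a complement) to a homomorphism $G\to M^{-1}\Z$, and then $\R$-linearly to $\tilde\theta\colon H_d(\T^n,\R)\to\R$. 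On the subspace $\beta_*(H_d(B,\R))=\beta_*(H_d(B,\Z))\otimes\R$ this extension necessarily agrees with $\bar\theta_x$, so $\theta_x(c)=\tilde\theta(\beta_*(c))$ regardless of the chosen extension.

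To finish, I would expand $\beta_*(c)$ in the standard basis of $H_d(\T^n,\Z)$. Choose a $\Z$-basis $\{\ell_1,\dots,\ell_n\}$ of $\mathcal{L}$ with dual basis $\{\ell_i^*\}\subset\mathcal{L}^*$, so that $\mathbf{k}_i=\iota_E^\top(\ell_i^*)$; the coordinate subtori $[\mathcal{L}_I]$ for $I=(i_1<\dots<i_d)$ form a $\Z$-basis of $H_d(\T^n,\Z)$ with $\gamma_*([\mathcal{L}_I])=\ell_{i_1}\wedge\dots\wedge\ell_{i_d}$. Writing $\xi=\sum_I a_I\,\ell_{i_1}\wedge\dots\wedge\ell_{i_d}$, one evaluates the dual basis element to obtain
$$a_I=(\ell_{i_1}^*\wedge\dots\wedge\ell_{i_d}^*)(\iota_E(\eta))=(\mathbf{k}_{i_1}\wedge\dots\wedge\mathbf{k}_{i_d})(\eta)=\pm|\mathbf{k}_{i_1}\wedge\dots\wedge\mathbf{k}_{i_d}|,$$
the last equality because $\bigwedge\nolimits^d E^*$ is one-dimensional with unit basis $\Omega_E$ and $\Omega_E(\eta)=1$. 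Therefore
$$\theta_x(c)=\sum_I a_I\,\tilde\theta([\mathcal{L}_I])\ \in\ \sum_I \Z\cdot M^{-1}|\mathbf{k}_{i_1}\wedge\dots\wedge\mathbf{k}_{i_d}|,$$
which is exactly the $\Z$-module appearing in the statement. The main obstacle is conceptual rather than computational: one must recognize that matter conservation is precisely what promotes integrality of $\theta_x$ from $H_d(B,\Z)$ to the subgroup $\beta_*(H_d(B,\Z))\subset H_d(\T^n,\Z)$, and that $M$ is exactly the denominator needed to propagate it further to all of $H_d(\T^n,\Z)$; once these two observations are in place, the identification of $\gamma_*$ with the Pontryagin basis of the torus reduces everything to bookkeeping.
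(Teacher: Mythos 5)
Your proposal is correct and follows essentially the same route as the paper's proof: reduce to $\theta_x(c)$ via Proposition~\ref{density_conservation}, use Proposition~\ref{integer_values} and $\theta_x(\ker\beta_*)=0$ to get an integral homomorphism on $\beta_*(H_d(B,\Z))$, extend it to $H_d(\T^n,\Z)$ with denominator $M$ via the saturation $G'$ (the paper's $\Lambda$), and expand $\xi$ in the basis $\ell_{i_1}\wedge\dots\wedge\ell_{i_d}$ with coefficients $\pm|\mathbf{k}_{i_1}\wedge\dots\wedge\mathbf{k}_{i_d}|$. The only differences are cosmetic (your explicit $\tilde\theta(g')=M^{-1}\bar\theta_x(Mg')$ versus the paper's ``passing to the rationals,'' and your care with the sign, which is harmless since one is generating a $\Z$-module).
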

\begin{proof} 
	Let us choose a basis $\{\ell_i, \, i=1\dots n \}$ in the lattice $\mathcal{L}$ and decompose the multivector $\xi$ defined by (\ref{xi}) in the corresponding basis of $\bigwedge\nolimits^d V$:
	\begin{equation}
	\label{decomp_xi}
		\xi=
		\sum_{1\le i_1<\dots <i_d\le n}
		m_{i_1\dots i_d}\left(
		\ell_{i_1}\wedge \dots \wedge \ell_{i_d}\right),\qquad m_{i_1\dots i_d} \in \R.
	\end{equation}
	Let $\{\ell_i^* \in \mathcal{L}^*, \, i=1\dots n \}$ stand for the basis reciprocal to $\ell_i$. Then the coefficients $m_{i_1\dots i_d}$ in (\ref{decomp_xi}) are given by
	\begin{equation}
	\label{coeff_m}
	m_{i_1\dots i_d}=\left(
	\ell_{i_1}^*\wedge\dots \wedge \ell_{i_d}^*
	\right)\cdot \xi =
	|\mathbf{k}_{i_1}\wedge \dots \wedge \mathbf{k}_{i_d}|,
	\end{equation}
	where $\mathbf{k}_i=\iota_E^\top(\ell_i^*)$ is the basis of the reciprocal quasilattice $\iota_E^\top(\mathcal{L}^*)$, as in (\ref{kmodule}).
	\par
	Since  $\theta_x$ takes integer values on $H_d(B, \Z)$ by Proposition \ref{integer_values}, one can define the homomorphism 
\begin{equation}
\label{theta_prim}
    \left.\begin{aligned}
	&\theta_x': \beta_*(H_d(B, \Z)) \to \Z \\
	&\theta_x': \beta_*(z) \mapsto \theta_x(z)
	\qquad \forall	z \in H_d(B, \Z)
	\end{aligned}\right.
\end{equation}
	(the last formula is well defined since $\theta_x(\ker(\beta_*))=0$). By passing to the rationals, we can extend $\theta_x'$ to the $\Z\mbox{-module}$ $\Lambda=(\beta_*(H_d(B, \Z)) \otimes \Q) \cap H_d(\T^n, \Z)$. Since $M$ is the exponent of the torsion subgroup of $H_d(\T^n, \Z)/\beta_*(H_d(B, \Z))$, one has $M\Lambda \subset \beta_*(H_d(B, \Z))$. Thus, $\theta_x'(\Lambda) \subset M^{-1}\Z$, and since $\Lambda$ is a direct summand in $H_d(\T^n, \Z)$ we can extend (non-naturally) $\theta_x'$ to
	$$
	\theta_x': H_d(\T^n, \Z) \to M^{-1}\Z. 
	$$
	By Proposition \ref{density_conservation}, the natural density of the set $f^{-1}(x)$ is equal to $\theta_x(c)$ for any $c \in H_d(B, \R)$ satisfying $\beta_*(c)=\gamma_*^{-1}(\xi)$. Then using for $\xi$ the expression (\ref{decomp_xi}) with coefficients (\ref{coeff_m}) and taking into account (\ref{theta_prim}), we get
	$$
	\theta_x(c) = 
	\sum_{1\le i_1<\dots <i_d\le n}
	\left(\theta_x'\left(\gamma_*^{-1}
	\left(\ell_{i_1}\wedge \dots \wedge \ell_{i_d}\right)
	\right)\right)
	|\mathbf{k}_{i_1}\wedge \dots \wedge \mathbf{k}_{i_d}|.
	$$
	Since $\theta_x'\left(\gamma_*^{-1}
	\left(\ell_{i_1}\wedge \dots \wedge \ell_{i_d}\right)
	\right) \in M^{-1}\Z$, this proves the Theorem.
\end{proof}
It is noticeable that the density module obtained in \cite{kalugin1989density} coincides with that given by the formula (\ref{density_module}) for $M=1$. 
\section{Working with real world quasicrystals}\label{sec:real_qc}
\subsection{The strategy}\label{sec:strategy}
In the proposed framework, the structure model of a quasicrystal is thought of as an FBS-complex inferred directly from the experimental data. In this respect our procedure is somehow reciprocal to that of Sections \ref{sec:lifting} and \ref{sec:main}, where the underlying FBS-complex $B$ was assumed to be known, and the properties of the lifted isometric winding had to be found. In contrast, the experimental diffraction data provide us with an information about the lattice $\mathcal{L}$ and the direction of $\iota_E(E)$, and our goal is to construct an FBS-complex representing minimal matching rules. It is worth mentioning that an FBS-complex alone does not make a complete structure model in the traditional sense, since it does not predict the positions of each and every atom in the structure. To be realistic, the model should pass certain validation tests, which are discussed below in Section \ref{sec:validation}. Under favorable circumstances this validation can lead to a more traditional structure model (e.g., a cut-and-project scheme or a model based on inflation), but this is by no means the main goal of the proposed approach.
\par
The diffracted intensity measured in real experiments is a continuous function of the wave-vector, at least because of the finite size of the specimen and the finite instrument resolution. The crucial step in the construction of the structure model thus consists in finding the local maxima of this function and identifying them with points of a finitely generated $\Z\mbox{-module}$ in the reciprocal space (indexing). Once the indexing is done, the real diffraction intensity is replaced by a pure-point measure:
\begin{equation}
\label{difmeasure_exp}
\sum_{l^* \in S \subset \mathcal{L}^*} I(l^*)\delta_{\mathbf{k}_{l^*}},
\end{equation}
where $S$ is a finite subset of points of the reciprocal lattice $\mathcal{L}^*$ and the coefficients $I(l^*)$ are given by the total diffraction intensity integrated around the peak position. This measure is usually thought of as an approximation of the ideal diffraction measure (\ref{difmeasure}). The Fourier transform of this measure is known as the Patterson function. This function arises naturally as a pullback of a periodic function on $V$ by $\iota_E$:
\begin{equation}
\label{patterson}
\sum_{l^* \in S \subset \mathcal{L}^*}
I(l^*) e^{i(\mathbf{k}_{l^*} \cdot \mathbf{x})}=
P(\iota_E(\mathbf{x}))\quad \text{ for } \mathbf{x}\in E,
\end{equation}
where $P: V \to \R$ is defined as
\begin{equation}
\label{patterson_nd}
P(x)=\sum_{l^* \in S \subset \mathcal{L}^*}
I(l^*) e^{i(l^* \cdot x)}.
\end{equation}
The function $P$ is also sometimes referred to as (an $n\mbox{-dimensional}$) Patterson function. An example of such function for the icosahedral phase of $\mathrm{Cd}_{5.7}\mathrm{Yb}$ (obtained with the data of \cite{takakura2007atomic}) is shown on Figure \ref{fig:patterson}.
\begin{figure}[ht]
	\centering
	\includegraphics[width=0.8\linewidth]{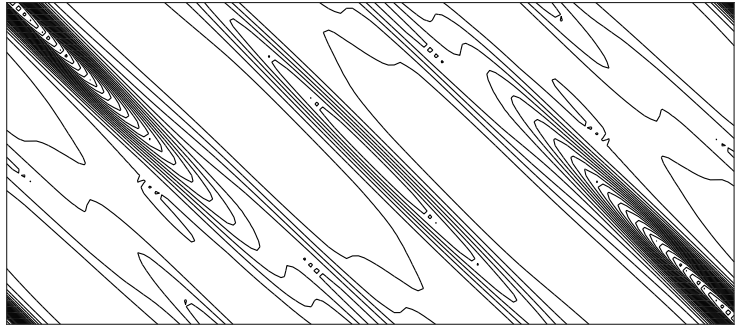}
	\caption{The contour plot of the Patterson function of $\mathrm{Cd}_{5.7}\mathrm{Yb}$ icosahedral quasicrystal. The plot shows a two-dimensional cut of the six-dimensional unit cell along the five-fold symmetry axis.}
	\label{fig:patterson}
\end{figure}
\par
In kinetic diffraction theory, the Patterson function is often equated with the autocorrelation function of the diffracting quantity (the electron density in the case of X-ray diffraction). One should be careful with this assumption since the approximation of the real diffraction intensity by the pure point measure (\ref{difmeasure_exp}) alters its Fourier transform on large distances in a non-controlled way. When working with periodic crystals, this point is often overlooked since the Patterson function is entirely defined by its values within one unit cell. In the case of quasicrystals, however, the function (\ref{patterson}) is aperiodic and one can be tempted to over-interpret the details of its behavior at large distances. Since the main purpose of the considered strategy consists in construction of the FBS-complex from the experimental data, we shall deliberately ignore the behavior of the Patterson function on distances larger than those necessary for this goal.
\par
In periodic crystals, the Fourier transform of the diffracting quantity is itself a pure-point measure carried by the reciprocal lattice. The coefficients of this measure are known as the structure factors and the coefficients of the diffraction measure are proportional the the square of their absolute value. This observation underlies the so-called ``phase problem'' of the classical X-ray crystallography: in fact, in periodic crystals finding the phases of the structure factors amounts to solving the crystal structure. There exist numerous heuristic algorithms for solving the phase problem, all based on the assumption that the spatial distribution of the diffracting quantity should reflect the atomic constitution of the matter. As illustrated by the existence of homometric structures, this consideration alone may not suffice to find an unambiguous answer to the phase problem, in which case other arguments (e.g., realistically looking local environments) should be invoked to select the best solution. Some of the phasing methods do not rely on the explicit assumption of the periodicity of the structure; an example of such method is the charge flipping algorithm \cite{oszlanyi2008charge}. As such, they can be applied to the approximate diffraction measure (\ref{difmeasure_exp}) of quasicrystals. However, since the Fourier transform of the diffracting quantity for quasicrystals is not necessarily a measure (in fact it {\em is not a measure} for the most common structure models), the interpretation of the result of phasing is not as straightforward as in the case of crystals. 
\par
Phasing algorithms take as input the intensities $I(l^*)$ of indexed Bragg peaks $l^*$ belonging to a finite subset $S \subset \mathcal{L}^*$ of the reciprocal lattice and yield the amplitudes $\hat{\varrho}(l^*)$ satisfying
\begin{equation}
\label{varrho_square}
|\hat{\varrho}(l^*)|^2=I(l^*)\quad\text{where } l^* \in S \subset \mathcal{L}^*.
\end{equation}
The inverse Fourier transform of $\hat{\varrho}$ is an $\mathcal{L}\mbox{-periodic}$ function $\varrho: V \to \R$ exhibiting sharp ridges and shallow valleys (see an example on Figure \ref{fig:rho}). The autocorrelation of $\varrho$ equals the $n\mbox{-dimensional}$ Patterson function:
\begin{equation}
\label{conv}
\varrho \circledast \check{\varrho} = P,
\end{equation}
where the symbol $\circledast$ stands for Eberlein convolution \cite[Chap.~8]{baake2013aperiodic} and $\check{\varrho}(x)=\varrho(-x)$. The ridges of $\varrho$ are commonly referred to as ``atomic surfaces'', a term suggestive of a cut-and-project scheme. Indeed, since the autocorrelation function of the pullback $\varrho \circ \iota_E$ equals the $d\mbox{-dimensional}$ Patterson function (\ref{patterson}), it seems natural to interpret $\varrho \circ \iota_E$ as the actual diffracting density and the ridges of $\varrho$ as the characteristic functions of the acceptance domains, inevitably smoothed when approximated by a finite trigonometric sum. At this point the next step in the structure determination often consists in guessing the hidden shape of the atomic surfaces, using as guiding criteria the avoidance of too short interatomic distances, the density and the stoichiometry (see for instance \cite{quiquandon2006unique}). Following this approach while keeping the description of the structure in finite terms, naturally leads to polyhedral atomic surfaces. The resulting models completely predict the position of each and every atom in the structure, but this comes at the cost of rather crude assumptions about the shape of the atomic surfaces. Instead of making such conjectures about the {\em global} structure at this early stage, we shall focus on the study of {\em local} atomic environments, keeping in mind that the mechanism underlying the propagation of long-range quasiperiodic order is somehow encoded by the collection of such environments. In this light, we assign to the phasing algorithm the limited role of inferring the local environments from the autocorrelation function on small distances. We summarize our working hypothesis in the following statement:
\begin{quote}
	Every cluster of atoms of the size of few interatomic distances occurring repetitively in the structure also appears as a cluster of peaks in the function $\varrho \circ \iota_E$. The more frequently the atomic cluster appears in the structure, the higher is the occurrence rate of the corresponding cluster of peaks of $\varrho \circ \iota_E$.
\end{quote}
The striking feature of the plot on Figure \ref{fig:rho} is that the ridges of the function $\varrho$ are flat and parallel to each other. This is a distinctive property of quasicrystals, differentiating them from incommensurate modulated structures. An immediate consequence of this property is that the number of distinct local arrangements of peaks in $\varrho \circ \iota_E$ is (locally) finite. This justifies using models with finite local complexity (in particular tilings) to describe the structure of quasicrystals.
\begin{figure}[ht]
	\centering
	\includegraphics[width=0.8\linewidth]{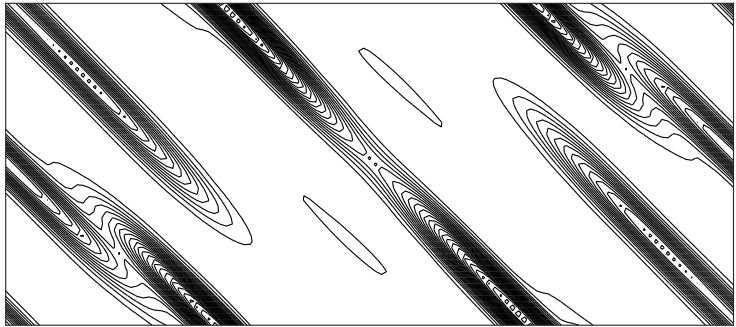}
	\caption{The contour plot of the function $\varrho$ for the icosahedral phase of  $\mathrm{Cd}_{5.7}\mathrm{Yb}$. The plot shows the same two-dimensional cut as Figure \ref{fig:patterson}.}
	\label{fig:rho}
\end{figure}
\par
We suggest a two-stage approach to modeling the structure of quasicrystals by FBS-complexes. At the first stage, the function $\varrho$ is used to construct a large (but finite) ensemble of FBS-complexes, and on the second stage this ensemble is gradually reduced to a single element. It is convenient to think of this ensemble as of a set of all possible subcomplexes of some large FBS-complex $B_0$. The complex $B_0$ should contain enough $d\mbox{-dimensional}$ simplices to allow for the representation of the actual atomic structure by a tiling of $E$ by translated copies of the corresponding prototiles (with atoms located at tile vertices). Each simplex of such tiling corresponds to a cluster of $d+1$ atoms, which, as discussed above, should also appear as a cluster of peaks of $\varrho \circ \iota_E$. This naturally leads to the idea of using the positions of these peaks as vertices of a triangulation of $E$. Note, however, that the spacing between local maxima of $\varrho \circ \iota_E$ may be smaller than the physically acceptable interatomic distance (see e.g., the dumbbell-shaped peak near the label (c) on Figure \ref{fig:cut2}). Such distances should be eliminated before triangulation, by selecting an appropriate subset of peaks of $\varrho \circ \iota_E$. In order to preserve the finite local complexity, the triangulation algorithm should be locally deterministic; in this respect, Delaunay triangulation \cite{fortune1995voronoi} is a reasonable choice. Let $\mathcal{T}_0$ stand for the FLC simplicial tiling of $E$ obtained by such triangulation procedure. At first glance, placing atoms at vertices of $\mathcal{T}_0$ already yields a complete structure model, respecting both the local environments and the diffraction data. This model, however, does not suit our purpose, for the issue of matching rules was completely ignored in its construction. Instead, we shall assume that $\mathcal{T}_0$ results from a surjective isometric winding 
\begin{equation}
\label{f_0}
f_0: E \to |B_0|.
\end{equation}
of the yet unknown FBS-complex $B_0$, and shall use $\mathcal{T}_0$ to build this complex.
\begin{figure}[ht]
	\centering
	\includegraphics[width=0.8\linewidth]{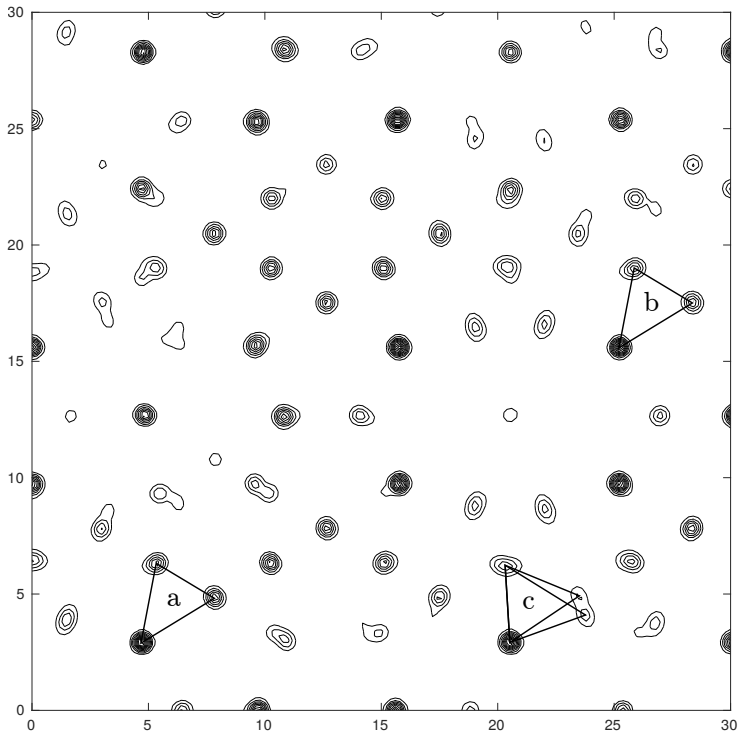}
	\caption{The contour plot of the function $\varrho \circ \iota_E$ for the icosahedral quasicrystal $\mathrm{Cd}_{5.7}\mathrm{Yb}$, restricted to the area $30\text{\AA}\times 30 \text{\AA}$ in the two-fold symmetry plane. Triangles (a) and (b) are $\mathcal{L}\mbox{-equivalent}$. The dumbbell-shaped peak near the label (c) arises from cutting of two different atomic surfaces. The FBS-complex $B_0$ should include 2-simplices corresponding to both of the triangles (c).}
	\label{fig:cut2}
\end{figure}
\par
Every vertex of the tiling $\mathcal{T}_0$ corresponds to a peak of $\varrho \circ \iota_E$, which in turn results from cutting of an atomic surface (a ridge of the function $\varrho$) by $\iota_E(E)$. The lattice $\mathcal{L}$ acts on atomic surfaces by translations. This action has a finite number of orbits, which can be indexed by some finite index set $J$. Let us fix an arbitrary representative atomic surface for each orbit in $J$. Then any other atomic surface of the same orbit can be obtained from this representative by some lattice translation $l \in \mathcal{L}$. In this way, each vertex of $\mathcal{T}_0$ is uniquely indexed by a couple $(j,l) \in J \times \mathcal{L}$ and each $k\mbox{-simplex}$ of $\mathcal{T}_0$ is uniquely defined by the indexes of its $k+1$ vertices. Consider two $k\mbox{-simplices}$ of $\mathcal{T}_0$:
\begin{eqnarray*}
\sigma_1=[(j_{1,0}, l_{1,0}),\dots (j_{1,k}, l_{1,k})]\\
\sigma_2=[(j_{2,0}, l_{2,0}),\dots (j_{2,k}, l_{2,k})].
\end{eqnarray*}
We shall say that $\sigma_1$ and $\sigma_2$ are $\mathcal{L}\mbox{-equivalent}$ and use the notation $\sigma_1 \sim_{\mathcal{L}} \sigma_2$ if there exists $l \in \mathcal{L}$ such that for any $0<i<k$
\begin{eqnarray}
j_{1,i}=j_{2,i}\nonumber\\
l_{1,i}=l_{2,i}+l\label{L_equiv}.
\end{eqnarray} 
It is clear that $\sim_{\mathcal{L}}$ is an equivalence relation. If $\sigma_1 \sim_{\mathcal{L}} \sigma_2$, then the simplices $|\sigma_1|$ and $|\sigma_2|$ are related by a translation from $E$:
$$
|\sigma_1|=|\sigma_2|+\tau,
$$
where $\tau=\pi_E(l)$, with $l$ given by (\ref{L_equiv})  (see an example of triangles labeled (a) and (b) on Figure \ref{fig:cut2}). At this point, we shall make the following hypothesis about the FBS-complex $B_0$:
\begin{quote}
	The isometric winding (\ref{f_0}) maps two tiles of $\mathcal{T}_0$ to the same simplex of $|B_0|$ if and only if these tiles are $\mathcal{L}\mbox{-equivalent}$:
	\begin{equation}
	\label{short_range}
	\sigma_1 \sim_{\mathcal{L}} \sigma_2 
	\iff
	f_0(|\sigma_1|)=f_0(|\sigma_2|).
	\end{equation}
\end{quote}
It should be emphasized that this assumption is far from innocuous. Indeed, in terms of tilings, the condition (\ref{short_range}) limits the possible matching constraints by the ``colored vertex rules'' with colors indexed by the finite set $J$. In physical terms this means that only the positions of the nearest neighboring atoms control the propagation of the long-range quasiperiodic order. It might happen that this restriction is too severe and the proposed algorithm fails to find matching rules of such a short range. We discuss the possible course of action in this case at the end of this section.  
\par
Let us now consider the tiling $\mathcal{T}_0$ as a simplicial complex. Since $\mathcal{L}\mbox{-equivalent}$ simplices have $\mathcal{L}\mbox{-equivalent}$ boundaries, one can define the quotient cellular complex $\mathcal{T}_0/\sim_{\mathcal{L}}$. A class of $\sim_{\mathcal{L}}\mbox{-equivalent}$ simplices can be indexed by a designated representative, for instance, the one with the first vertex belonging to the representative atomic surface:
\begin{equation}
\label{normalize}
[(j_0, l_0), \dots, (j_k, l_k)]
\sim_{\mathcal{L}}
[(j_0, 0), (j_1, l_1-l_0),\dots,(j_k, l_k-l_0)]
\end{equation}
(here the simplex on the right hand side represents the $\mathcal{L}\mbox{-equivalence}$ class of the simplex on the left). The resulting collection is endowed with the structure of a semi-simplicial complex by the following face maps \cite{rourke1971delta}:
\begin{multline}
\label{face_map}
d_i: [(j_0, 0), (j_1, l_1), \dots, (j_k, l_k)]
\mapsto\\
\begin{cases} 
[(j_1, 0), (j_2, l_2-l_1),\dots,(j_k, l_k-l_1)] & \text{if } i=0 \\
[(j_0, 0),\dots, (j_{i-1}, l_{i-1}),(j_{i+1}, l_{i+1}) \dots, (j_k, l_k)]       & \text{if } i\neq 0
\end{cases}
\end{multline}
As follows from the condition (\ref{short_range}), the underlying semi-simplicial complex of $B_0$ in (\ref{f_0}) is isomorphic to $\mathcal{T}_0/\sim_{\mathcal{L}}$. This fact allows us to use the expression $[(j_0, 0), (j_1, l_1), \dots, (j_k, l_k)]$ to index the simplices of $B_0$.  
\par 
We can now complete the construction of $B_0$ as an FBS-complex, and furthermore define the map $\beta: |B_0| \to \T^n$. For each $j \in J$ let us fix a point $y_j \in V$ located on the representative atomic surface of the orbit $j$. The exact position of $y_j$ is constrained by the condition that the subgroup of the space group stabilizing the corresponding atomic surface acts on $y_j$ trivially, while the remaining free parameters of $y_j$ are chosen to maximize the value of $\varrho(y_j)$. Consider the homomorphism $\tilde\rho: C_1(B_0) \to V$ defined by its values on the edges of $B_0$: 
\begin{equation}
\label{tilde_rho}
\tilde\rho:[(j_0, 0), (j_1, l_1)] \mapsto l_1+y_{j_1}-y_{j_0}.
\end{equation}
The composition $\rho=\pi_E \circ \tilde\rho$ satisfies the conditions of Definition \ref{FBS} and provides $B_0$ with the structure of an FBS-complex. Indeed, the property $\rho \circ \partial=0$ is readily verified by checking it on each $2\mbox{-simplex}$ of $B_0$, and the second condition holds since Delaunay triangulation does not produce degenerate simplices (see the ``roundness'' property in \cite{fortune1995voronoi}). To construct the map $\beta: |B_0| \to \T^n$ one can follow the steps of the proof of Proposition \ref{prop_mu} with $\tilde{\rho}$ given by (\ref{tilde_rho}). The resulting map, defined by (\ref{square}) takes the vertex of the type $j$ to the point $q(y_j) \in \T^n$. 
\par
Once the redundant FBS-complex $B_0$ and the map $\beta: |B_0| \to \T^n$ are constructed, one can move on to the search of the candidate structure model among the subcomplexes of $B_0$ admitting minimal matching rules. To qualify for this role, a subcomplex $B \subset B_0$ must satisfy the conditions of Proposition \ref{decomposable} and the slope locking condition of Theorem \ref{logrule} (with the projections $\pi_E$ and $\pi_F$ fixed by the point symmetry of the quasicrystal):
\begin{equation}
\label{bracket}
\iota_E\left(\bigwedge\nolimits^d E \right)
\subseteq \gamma_*(\beta_*(H_d(B, \R))) \subseteq
T^0,
\end{equation}
where $T^0$ stands for the annihilator of the subspace $T\subset \bigwedge\nolimits^d V^*$ defined in (\ref{tilt_forms}). The notations of (\ref{bracket}) make use of the fact that in maximal dimension $H_d(B)$ is naturally a subgroup of $H_d(B_0)$ and that the map of homology groups corresponding to the restriction of $\beta$ to $|B|$ is simply the restriction of $\beta_*$ to $H_d(B)$. It may occur that none of the subcomplexes of $B_0$ satisfies the precondition (\ref{bracket}), in which case the strategy fails right away, indicating that the hypothesis of short-ranged matching rules (\ref{short_range}) may be too restrictive. Otherwise, it remains to find the best candidate among the subcomplexes satisfying (\ref{bracket}), using the validation criteria of Section \ref{sec:validation}. However, the redundant nature of $B_0$ may make testing such subcomplexes one by one prohibitively slow. To improve the efficiency, one can mitigate the all-or-nothing way to select simplices composing $B_0$ by a ``figure-of-merit'' function
\begin{equation}
\label{merit}
\mathcal{M}: \{s \in B_0 \mid \dim(s)=d\} \to \R
\end{equation}
reflecting the likelihood of presence of a given simplex in the structure. This function can be used to implement a heuristic search, in which subcomplexes containing simplices with higher figure-of-merit are tested first.
\par
For reasons of computational efficiency, it is preferable to work with integral homologies. According to (\ref{bracket}), $\gamma_*(\beta_*(H_d(B, \Z)))$ must contain the submodule of $\bigwedge\nolimits^d \mathcal{L}$ belonging to the rational envelope of $\iota_E\left(\bigwedge\nolimits^d E \right)$. Similarly, $\gamma_*(\beta_*(H_d(B, \Z)))$ must be contained within the submodule of $\bigwedge\nolimits^d \mathcal{L}$ belonging to the annihilator of the rational envelope of $T$. It is remarkable that for all known real quasicrystals, both submodules coincide (and have rank 2), thus fixing $\gamma_*(\beta_*(H_d(B, \Z)))$ entirely. The explicit expression for $\gamma_*\circ\beta_*$ is given by the following Proposition:
\begin{proposition}\label{gamma_beta}
	The value of $\gamma_*\circ\beta_*$ on an integral cycle
	\begin{equation}
	\label{coeff_cycle}
	H_d(B_0,\Z) \ni z=\sum_{s\in B_0}
	m_{s} s \qquad\text{ where } m_s \in \Z
	\end{equation}
	is given be the formula
	\begin{equation}
	\label{coeff_gamma_beta}
	\gamma_*(\beta_*(z))=
	\frac{1}{d!}
	\sum_{s\in B_0}
	m_{s}
	\Theta(s)
	\end{equation}
	where
	$$
	\Theta([(j_0,0),(j_1,l_1),\dots,(j_d,l_d)])=
	l_1 \wedge \dots \wedge l_d
	$$
\end{proposition}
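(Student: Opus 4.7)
The plan is to reduce $\gamma_*(\beta_*(s))$ on a single simplex to the classical formula for integration of a constant $d$-form over an affine simplex, and then to eliminate the spurious vertex-shift contributions over the whole cycle via a cup-product argument.

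First I would use the defining property of $\gamma_*$: for every $\omega\in\bigwedge^d V^*$ and every $d$-cycle $z$ in $B_0$ one has $\omega\cdot\gamma_*(\beta_*(z))=\int_{\beta_*(z)}\gamma^\sharp(\omega)$. Using the commutative square (\ref{square}) to lift to $\tilde B_0$ and the fact that $\gamma^\sharp(\omega)$ pulls back via $q$ to the constant form $\omega$ on $V$, this integral reduces to a sum, over the simplices $s$ appearing in $z$, of $\int_{\mu(|\tilde s|)}\omega$. Because $\mu$ is affine on each simplex (Proposition \ref{lipschitz}), $\mu(|\tilde s|)$ is an affine $d$-simplex in $V$; choosing a lift with $\mu(|\tilde v_0|)=y_{j_0}$, one has $\mu(|\tilde v_i|)-\mu(|\tilde v_0|)=\tilde\rho(e_i)=l_i+y_{j_i}-y_{j_0}$ by (\ref{tilde_rho}). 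The standard formula for the integral of a constant $d$-form over an affine $d$-simplex then yields
$$
\gamma_*(\beta_*(s)) = \frac{1}{d!}\,\tilde\rho(e_1)\wedge\dots\wedge\tilde\rho(e_d)
$$
for every $d$-simplex $s=[(j_0,0),(j_1,l_1),\dots,(j_d,l_d)]$ of $B_0$.

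To finish, I would introduce an auxiliary $V$-valued $1$-cochain $\tilde\rho'\in C^1(B_0,V)$ defined by $\tilde\rho'([(j_0,0),(j_1,l_1)]) = l_1$. A direct verification using the face maps (\ref{face_map}) shows that $\tilde\rho'$ is also a $1$-cocycle, and $\tilde\rho-\tilde\rho' = \delta y$ where $y\in C^0(B_0,V)$ is the $0$-cochain $v_j\mapsto y_j$; hence $[\tilde\rho]=[\tilde\rho']$ in $H^1(B_0,V)$. For any $1$-cocycle $\alpha\in C^1(B_0,V)$, define the $d$-cochain $T_\alpha\in C^d(B_0,\bigwedge^d V)$ by $T_\alpha(s):=\alpha(e_1)\wedge\dots\wedge\alpha(e_d)$ with $e_i$ the edge from the first to the $i$-th vertex of $s$. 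Using the cocycle identity $\alpha(e_{0,i})=\sum_{k\le i}\alpha(e_{k-1,k})$ together with $v\wedge v = 0$, one rewrites $T_\alpha(s) = \alpha(e_{0,1})\wedge\dots\wedge\alpha(e_{d-1,d})$, which is the standard $d$-fold cup product $\alpha\smile\dots\smile\alpha$ with values in $\bigwedge^d V$. The usual cohomological properties of the cup product then guarantee that $T_\alpha$ is a $d$-cocycle whose class in $H^d(B_0,\bigwedge^d V)$ depends only on $[\alpha]\in H^1(B_0,V)$. In particular, on the cycle $z$ one has $T_{\tilde\rho}(z) = T_{\tilde\rho'}(z) = \sum_s m_s\, l_1^s\wedge\dots\wedge l_d^s$, which combined with the formula above gives (\ref{coeff_gamma_beta}).

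The main obstacle is the cup-product reduction: verifying that $T_\alpha$ is a cocycle whose cohomology class is an invariant of $[\alpha]$. While this is morally the graded commutativity of the cup product in $\bigwedge^\bullet V$, our convention uses \emph{edges emanating from $v_0$} rather than consecutive edges, so the rewriting via the upper-triangular determinant identity must be handled with care. A more pedestrian alternative would bypass cup products and instead expand $\bigwedge_{i=1}^d(l_i+y_{j_i}-y_{j_0})$ into monomials indexed by subsets $I\subseteq\{1,\dots,d\}$, then show for each nonempty $I$ that the sum over $s$ of that contribution equals a boundary via (\ref{face_map}) and therefore annihilates the cycle $z$.
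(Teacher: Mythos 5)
Your proposal is correct, and its first half (reducing $\int_{\beta(s)}\gamma^\sharp(\omega)$ to the integral of the constant form $\omega$ over the affine simplex in $V$ spanned by the edge vectors $\tilde\rho(e_i)=l_i+y_{j_i}-y_{j_0}$, yielding the factor $1/d!$ and the multivector $\Theta_y(s)$) coincides with the paper's. Where you genuinely diverge is in eliminating the $y_j$-dependent terms. The paper uses a rigidity argument: it observes that $\beta_*(z)\in H_d(\T^n,\Z)$ forces $\gamma_*(\beta_*(z))\in\bigwedge^d\mathcal{L}$, that the face maps (\ref{face_map}) are independent of the $y_j$, and hence that the right-hand side of the formula with $\Theta_y$ is a continuous function of the free parameters $y_j$ taking values in a discrete set --- so it is constant and can be evaluated at $y_j=0$. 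You instead prove at the cochain level that $s\mapsto\Theta_y(s)$ and $s\mapsto\Theta(s)$ are cohomologous, by identifying $T_\alpha$ (after the upper-triangular rewriting via the cocycle identity and $v\wedge v=0$, which I checked goes through) with the $d$-fold cup power of $\alpha$ in $H^*(B_0,\bigwedge^\bullet V)$ and using $[\tilde\rho]=[\tilde\rho']$ since $\tilde\rho-\tilde\rho'=\delta y$. Both are valid; the paper's argument is shorter and needs no combinatorics, but it depends essentially on the target $\bigwedge^d\mathcal{L}$ being discrete, whereas your cup-product (or the equally workable monomial-expansion) argument is purely algebraic, works over any coefficients, and exhibits the explicit coboundary --- a mild gain in robustness at the cost of having to verify the Alexander--Whitney bookkeeping you rightly flag as the delicate point.
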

\begin{proof}
	For any $\omega \in \bigwedge\nolimits^d V^*$ the value of corresponding de Rham cohomology class $\gamma^*(\omega) \in H^d(\T^n, \R)$ on $\beta_*(z)$ is given by
	$$
	\sum_{s\in B_0}
	m_{s}
	\int_{\beta(s)}\gamma^\sharp(\omega)=
	\frac{1}{d!}
	\sum_{s\in B_0}
	m_{s}
	\left(\omega \cdot \Theta_y(s)\right),
	$$
	where 
	$$
	\Theta_y([(j_0,0),(j_1,l_1),\dots,(j_d,l_d)])=
	(l_1+y_{j_1}-y_{j_0}))\wedge\cdots\wedge
	(l_d+y_{j_d}-y_{j_0})
	$$
	is the exterior product of the vectors (\ref{tilde_rho}) corresponding to the edges of $s$ originating at the vertex $(j_0, 0)$. Therefore, we have
	\begin{equation}
	\label{with_y}
	\gamma_*(\beta_*(z))=
	\frac{1}{d!}
	\sum_{s\in B_0}
	m_{s}
	\Theta_y(s)
	\end{equation}
    Let us now consider the positions $y_j \in V$ as free parameters. Since the face maps (\ref{face_map}) do not depend on $y_j$, the sum (\ref{coeff_cycle}) is always an integral cycle, and $\gamma_*(\beta_*(z)) \in \bigwedge\nolimits^d \mathcal{L}$. Therefore, the right hand side of (\ref{with_y}) is a continuous function of free real parameters $y_j$ taking discrete values and hence constant. Evaluating it for $y_j=0$ we get (\ref{coeff_gamma_beta}).
\end{proof}
\par
The assumption (\ref{short_range}) results in the construction of the FBS-complex $B_0$ with the smallest possible number of vertices, at the cost of losing information about the local configurations in the tiling $\mathcal{T}_0$ on the scale bigger than the size of one tile. If the propagation of the quasiperiodic order relies on local rules of a larger range, this loss may be critical for the considered strategy. In this case one has to reconsider the equivalence relation (\ref{L_equiv}) used in the construction of $B_0$, in order to take into account the local environment of simplices in question. This can be achieved by refining the labels of the vertices of $\mathcal{T}_0$ using, for instance, the following procedure. Let us consider each vertex of $\mathcal{T}_0$ together with its {\em star} (that is the subset of simplices of $\mathcal{T}_0$ incident to this vertex). We shall call two vertex stars $\mathcal{L}\mbox{-equivalent}$ if so are their constituent simplices. There exists a finite number of classes of $\mathcal{L}\mbox{-equivalent}$ vertex stars in $\mathcal{T}_0$, which can be indexed by some finite index set $J'$. Note that this set comes with the natural forgetful map $J'\to J$ taking each $\mathcal{L}\mbox{-equivalence}$ class of vertex stars to the equivalence class of the vertex itself. Let us choose a representative vertex star from each equivalence class. Then each vertex of $\mathcal{T}_0$ is labeled by a unique couple $(j', l) \in J'\times \mathcal{L}$, where $j'$ is the $\mathcal{L}\mbox{-equivalence}$ class of its star, and $l$ is the translation linking this star to the representative of its class as in (\ref{L_equiv}).
\par
The refined vertex labeling can now be used to redefine the $\mathcal{L}\mbox{-equivalence}$ of simplices in (\ref{L_equiv}) and construct a new quotient semi-simplicial complex $B_0'$ with vertices indexed by $J'$. The map of vertices of $|B_0'|$ to those of $|B_0|$ corresponding to the forgetful map $J' \to J$ can be extended to $|B_0'|$ as a surjective semi-simplicial map $\epsilon: |B_0'| \to |B_0|$. The pullback of the homomorphism $\rho:C_1(B_0) \to E$ endows $B_0'$ with the structure of an FBS-complex, making $\epsilon$ into a simplicial FBS-map. The lifting map $\beta': |B_0'| \to \T^n$ is then defined as $\beta'=\beta\circ\epsilon$. Note finally, that this refinement procedure can be applied repeatedly, each time encoding larger local configurations of $\mathcal{T}_0$ by the refined vertex labels.
\subsection{A sketch of the algorithm}\label{sec:sketch}
Let us now implement the strategy outlined in the previous section as a more formal algorithm, with performance considerations in mind. We assume that the phasing is already accomplished and the function $\varrho$ is presented in the form of a finite trigonometric polynomial. The procedure consists in the following steps:
\begin{itemize}
	\item Inventorying the atomic surfaces. This can be achieved e.g., by applying the watershed segmentation algorithm \cite{beucher1992morphological} to the function $\varrho$ directly in its fundamental domain $\T^n$. At this point a human expert decision is required to discriminate the watershed basins containing major ridges of $\varrho$ (those corresponding to the atomic surfaces) and those enclosing minor ripples on the background. Enumerate the atomic surfaces in $\T^n$ by a finite index set $J$ and designate a unique representative for each element of $J$ in the corresponding orbit.
	The result will be presented in the form of an $\mathcal{L}\mbox{-periodic}$ classifying function 
	$$
	V \to (J\times \mathcal{L}) \cup \{*\}
	$$
	associating a point in $V$ with the index of the  atomic surface to which this point belongs, or the singleton $\{*\}$ in the case when the point does not belong to any major ridge of $\varrho$. Associate each atomic surface with a particular atomic species, leaving room for some degree of chemical disorder.
	\item Construction of a large patch $\mathcal{P}_0$ of the tiling $\mathcal{T}_0$:
	\begin{itemize}
		\item Find candidate vertices. Take a large region of the physical space $E$ and find all local maxima of the function $\varrho \circ \iota_E$ in this region. Use the classifying function constructed on the previous step to select only the maxima belonging to the indexed atomic surfaces and label these points with the indices from $J \times \mathcal{L}$. 
		\item Eliminate short distances. In the set of points constructed on the previous step, find all pairs of points separated by a distance shorter than the realistic minimal interatomic spacing (an example of such pair is illustrated by the double peak near the label (c) on Figure \ref{fig:cut2}). Connect these pairs and find connected clusters in the resulting graph. In each cluster, mark randomly selected vertices one at a time in such a way that no two marked vertices are connected by an edge. Once no new vertex can be marked, discard all unmarked vertices .
		\item Apply Delaunay triangulation to the resulting set of points to obtain the patch  $\mathcal{P}_0$. Note that due to the randomness involved in the previous step, similar clusters of peaks in $\varrho \circ \iota_E$ may give rise to different triangulations, as illustrated by two triangles labeled (c) on Figure \ref{fig:cut2}. The resulting diversity of local configurations is in fact a desirable feature of $\mathcal{T}_0$ since it results in a larger FBS-complex $B_0$ and enhances the chances of finding a subcomplex $B \subset B_0$, acceptable as a structure model. 
	\end{itemize}
	\item Construction of the FBS-complex $B_0$ and the map $\beta: |B_0| \to \T^n$. In practice, the procedure described in Section \ref{sec:strategy} is applied to the finite patch $\mathcal{P}_0$ obtained on the previous step, instead of the entire $\mathcal{T}_0$. If $\mathcal{P}_0$ is not large enough, the resulting FBS-complex may contain fewer simplices than $\mathcal{T}_0/\sim_{\mathcal{L}}$. In this case, some of the missing simplices can be recovered by completion of $B_0$ with respect to the action of the point symmetry group. 
	\item Computation of the figure-of-merit function (\ref{merit}). This part of the algorithm is adjustable; one possible choice is to set $\mathcal{M}(s)$ equal to the number of tiles in $\mathcal{P}_0$, belonging to the $\mathcal{L}\mbox{-equivalence}$ class of $s$ (this number should be averaged over the action of the point symmetry group).
	\item Computation of the boundary operator $\partial: C_d(B_0)\to C_{d-1}(B_0)$ in the basis of simplices, using formula (\ref{face_map}) for the face maps.
	\item Preselection of the candidate subcomplexes $B\subset B_0$. We propose two alternative approaches:
	\begin{itemize}
		\item Gradually populate $B$ with simplices in the order of decreasing figure-of-merit (\ref{merit}), adding at each step an entire orbit of simplices with respect to the action of the point symmetry group. After each step, calculate the basis in $H_d(B, \Z)$ and check the condition (\ref{bracket}) using the formula (\ref{coeff_gamma_beta}) for $\gamma_*\circ\beta_*$.
		\item Alternatively, consider the sublattice of integral $d\mbox{-cycles}$ on $B_0$ satisfying the slope-locking condition:
		\begin{equation}
		\label{sublattice}
	    (\gamma_*\circ\beta_*)^{-1}(T^0 \cap \bigwedge\nolimits^d \mathcal{L}).
		\end{equation}
		Note that in all practical cases the annihilator $T^0$ of the subspace $T \subset \bigwedge^d V^*$ (\ref{tilt_forms}) is fixed by the symmetry of the quasicrystal (see Section \ref{sec:conditions}). Vectors of this sublattice having small quadratic norm 
		$$
		\sum_{s\in B_0}m_s s \mapsto \sum_{s\in B_0} \frac{m_s^2}{\mathcal{M}(s)}.
		$$
		correspond to $d\mbox{-cycles}$ on $B$ including small number of simplices with high figure-of-merit (\ref{merit}). A basis of the sublattice (\ref{sublattice}) composed of such vectors can be found by LLL algorithm \cite[Chap.~4]{bremner2011lattice}. To find the candidate subcomplexes $B\subset B_0$, iterate over the combinations of basis vectors with small integer coefficients. For each combination, keep in $B$ only those $d\mbox{-simplices}$ of $B_0$ that enter in the combination with non-zero coefficients, then test the resulting subcomplex for the condition (\ref{bracket}). 
	\end{itemize} 
    In either case, if no candidate subcomplex is found, refine the FBS-complex $B_0$ and start over again.
	\item Validate the candidate models following the steps described in Section \ref{sec:validation}. If none of the candidate subcomplexes admits an isometric winding, refine $B_0$ and start over again.
\end{itemize} 
\subsection{Validation of the model}\label{sec:validation}
Even though the FBS-complex $B_0$ admits at least one isometric winding (\ref{f_0}), the subcomplex $B \subset B_0$ constructed above may not have this property, yet mandatory for the validation of $B$ as a structure model. Since the problem of existence of an isometric winding is undecidable, such validation can only be performed on a case-by-case basis. However, the absence of local obstructions for an isometric winding can be verified in finite time and thus tested systematically for each candidate subcomplex $B \subset B_0$. This can be done by applying to $B$ the reduction procedure of Proposition \ref{reduction}, followed by checking that for the multivector $\xi$ given by (\ref{xi}) the subspace $\beta_*^{-1}(\gamma_*^{-1}(\xi))$ has a non-zero intersection with the positive cone $Z_d^{+}(B, \R)$ (see Propositions \ref{limit_cycle} and \ref{decomposable}).
\par
In some cases an isometric winding of $B$ can be constructed explicitly. This occurs, for instance, if there exists an FBS-map of an inflated copy of $B$ onto $B$ (inflation of an FBS-complex consists in multiplying the homomorphism $\rho$ of Definition \ref{FBS} by a real number larger than 1). One cannot, however, expect the existence of such a map in the general case. On the other hand, when working with a real quasicrystal, we are dealing with a structure that has been literally self-assembled from the melt. One can simulate this process by Monte-Carlo stochastic dynamics, for instance that of the growth model proposed in \cite{joseph1997model}. Clearly, the successful generation of a large patch of tiling respecting the matching rules represented by $B$ does not prove the existence of an isometric winding of $B$, since one can always run into an obstruction requiring the insertion of a defect (e.g., a simplex from $|B_0|\backslash |B|$) at a larger scale. If, however, the density of such defects is small, then according to Theorem \ref{robustness} so is their effect on the long range order, and one can still accept the FBS-complex $B$ as a valid structure model.
\par
If the FBS-complex $B$ satisfies the strong matter conservation condition $\theta_v(\ker(\beta_*))=0$ for any vertex $v$ of $B$ (where the homomorphism $\theta_x$ (\ref{theta_homomorphism}) is defined by (\ref{theta_x})), then by Proposition \ref{density_conservation} the atomic density can be calculated without explicit construction of an isometric winding of $B$. Since the atomic density can be measured separately from the diffraction experiment, this allows for an independent validation of the model. Moreover, Proposition \ref{density_conservation} also predicts the density of atoms corresponding to individual atomic surfaces. This prediction can be verified by comparison with the relative values of these densities which can be inferred independently from the experimental data, assuming that the distribution of atomic species between atomic surfaces is known. Indeed, in the limit of kinetic diffraction theory the corresponding electron densities are proportional to the values of the integral of $\varrho$ over the areas in $\T^n$ corresponding to individual ridges of this function. Such integration can be performed together with the watershed segmentation of the atomic surfaces. 
\par
The conclusive step in the validation of an atomic structure  model is the assay of the agreement of the predicted intensities of Bragg peaks with the experimental data. Unlike cut-and-project schemes, the models based on FBS-complexes do not provide explicit formulas for the diffraction intensities. However, even if no explicit construction of an isometric winding is available, it is still possible to estimate the diffraction intensities numerically from a finite patch of the structure obtained, for instance, by the stochastic growth procedure discussed above.
\section{Conclusions and discussion}\label{sec:conclusion}
We have developed a systematic approach to exploration of matching rules for real quasicrystals, directly in the phased diffraction data. These rules are encoded in a geometric object (an FBS-complex), which can be interpreted as the prototile space of a simplicial tiling. We have also described the specific class of matching rules, for which the propagation of the long-range quasiperiodic order is ensured by the homological properties of the underlying FBS-complex. This class of matching rules is particularly suitable for description of real quasicrystals since the algorithms of computation of the homology groups can handle the inherent uncertainties of the experimental data in a controlled way. Furthermore, in contrast to other known types of matching rules, the homology-based rules are robust with respect to the presence of structure defects. Under some mild assumptions, these rules also predict exact values of the density of individual atomic sites and local environments.
\par
Unlike traditional structure modeling, our approach does not produce immediately the description of the position of each and every atom in a quasicrystal. It may even happen that the discovered matching rules are self-contradictory, that is are impossible to satisfy by any structure filling the entire space. We have proposed some elementary tests for this condition, but it is yet unclear whether these tests will suffice to detect inconsistent matching rules in application to real quasicrystals.
\par
Currently, we are aware of only one example of simplicial tiling possessing homology-based matching rules (the case of the triangular Penrose tiling discussed at the end of Section \ref{sec:conditions}). In the same time, the results of \cite{kalugin2005cohomology} strongly suggest that at least for the decorated Ammann-Beenker and dodecagonal canonical rhombic tilings, the generic Penrose tiling and the icosahedral Ammann-Kramer tiling there exist respectively mutually locally derivable simplicial tilings with homology-based matching rules. It would be interesting therefore to test small FBS-complexes systematically for the condition of Theorem \ref{logrule} in view of possible discovery of entirely new types of tilings with matching rules.
\par
Another open question is whether the logarithmic upper bound of Theorem \ref{logrule} is actually attained. The results of Lemma \ref{face_average} also suggest that the characterization of the long-range quasiperiodic order by the extreme excursions of the phason coordinate may be too rough. 
\par
Finally, it should be pointed out that the construction of the redundant FBS-complex $B_0$ described in Section \ref{sec:strategy} is unrelated to the specific type of matching rules. It can therefore be used for exploration of other possible types of matching rules, under the condition that the subcomplexes of $B_0$ can be tested for these rules in a computationally effective way.
\section*{Acknowledgements}
We are grateful to Denis Gratias and Marianne Quiquandon for stimulating discussions. P.K. also thanks Ekaterina Amerik and Marat Rovinski for clarifying certain aspects of Pl\"ucker embedding.
\bibliographystyle{unsrt}
\bibliography{bof_matching}
\end{document}